\documentclass{article}
\usepackage[left=2cm, right=2cm, top=2cm]{geometry}
\usepackage{bussproofs}
\usepackage{authblk}
\usepackage{amssymb}
\usepackage{amsmath}
\usepackage{amsthm}
\usepackage{mathtools}
\usepackage{comment}
\usepackage[inline]{enumitem}
\usepackage[colorlinks=true,linkcolor=blue,urlcolor=gray]{hyperref}
\usepackage{xy}
\xyoption{all}
\usepackage{tikz}
\usepackage{tikz-cd}
\newtheorem{definition}{Definition}
\newtheorem{example}[definition]{Example}
\newtheorem{theorem}[definition]{Theorem}

\newtheorem{lemma}[definition]{Lemma}

\newtheorem{remark}[definition]{Remark}

\usepackage{calc}
\usepackage{longtable}

\usepackage{array,etoolbox}
\makeatletter

\newlength{\PS@lastparam}
\newlength{\PSlastparam}
\newcommand{\PSlp}{%
  \setlength{\PSlastparam}{\PS@lastparam}%
  \the\PSlastparam
}
\def\PS@sub@lastparam{}

\newcommand{\PS@numwidth}{99}
\newcommand{\PSnumwidth}[1]{%
  \renewcommand{\PS@numwidth}{#1}%
}

\newcommand{\PS@style}{\small}
\newcommand{\PS@numstyle}{\footnotesize}

\newlength{\PSindent}
\newlength{\PS@extraindent}
\newlength{\PSpre}
\newlength{\PSpost}
\newlength{\PS@Nwidth}
\newlength{\PS@Swidth}
\newlength{\PS@Ewidth}
\newlength{\PScolsep}
\newcommand{\PS@rownumber}{%
  \ifPS@subsubsteps
  \thePSsubstepc.%
  \the\numexpr\value{PSsubsubstepc}+1\relax
  \else
  \ifPS@substeps
  \thePSstepc.%
  \the\numexpr\value{PSsubstepc}+1\relax
  \else
  \the\numexpr\value{PSstepc}+1\relax
  \fi\fi
}
\newcommand{\PS@step}{%
  \ifPS@subsubsteps
  \refstepcounter{PSsubsubstepc}%
  \else
  \ifPS@substeps
  \refstepcounter{PSsubstepc}%
  \else
  \refstepcounter{PSstepc}
  \fi\fi%
}

\newif\ifPS@inprogress
\newif\ifPS@substeps
\newif\ifPS@subsubsteps
\newif\ifPS@continued
\newif\ifPS@subcontinued
\newcounter{PSc}
\newcounter{PSstepc}[PSc]
\newcounter{PSsubstepc}[PSstepc]
\renewcommand{\thePSsubstepc}{\thePSstepc.\arabic{PSsubstepc}}
\newcounter{PSsubsubstepc}[PSsubstepc]

\newenvironment{proofsteps}[1]{%
  \global\settowidth{\PS@lastparam}{\PS@style\hspace*{#1}}
  \ifPS@continued\else\refstepcounter{PSc}\fi
  \begingroup
  \setlength{\LTpre}{\PSpre}%
  \setlength{\LTpost}{\PSpost}%
  
  \setlength{\tabcolsep}{0pt}
  \noindent\PS@style
  \settowidth{\PS@Nwidth}{\PS@numstyle\PS@numwidth}%
  \setlength{\PS@Swidth}{#1}%
  \addtolength{\PS@Swidth}{-\PS@extraindent}%
  \setlength{\PS@Ewidth}{\linewidth}%
  \addtolength{\PS@Ewidth}{-\PSindent}%
  \addtolength{\PS@Ewidth}{-\PS@extraindent}%
  \addtolength{\PS@Ewidth}{-\PS@Nwidth}%
  \addtolength{\PS@Ewidth}{-\PScolsep}%
  \addtolength{\PS@Ewidth}{-\PS@Swidth}%
  \addtolength{\PS@Ewidth}{-\PScolsep}%
  \PS@inprogresstrue
  \longtable{%
    @{\hspace*{\PSindent}\hspace*{\PS@extraindent}\makebox[\PS@Nwidth][r]{\PS@rownumber}}%
    @{\hskip\PScolsep}>{\PS@step}p{\PS@Swidth}%
    @{\hskip\PScolsep}>{\footnotesize\raggedright\arraybackslash}p{\PS@Ewidth}%
  }%
}{%
  \ifPS@inprogress
  \addtocounter{table}{-1}%
  \endlongtable  
  \endgroup
  \PS@continuedfalse
  \PS@inprogressfalse
  \else\fi
}

\newcommand{\PSbreak}[1]{%
  \endproofsteps
  \par\medskip
  #1
  \medskip\par
  \PS@continuedtrue
  \proofsteps{\PS@lastparam}%
}

\newif\ifPS@sub@inprogress

\newif\ifPS@laststep
\newcommand{\laststep}{\global\PS@laststeptrue}
\newif\ifPS@lastsubstep
\newcommand{\lastsubstep}{\global\PS@lastsubsteptrue}

\newcommand{\adjustcol}[1]{%
  \global\advance\@colroom-#1%
}

\makeatother

\newcommand*{\pcformat}[1]{%
  [\;{\normalfont\itshape #1}\;]%
}

\newenvironment{proofcases}[1][]{%
  \description[font=\pcformat, leftmargin=\parindent, #1]%
}{\enddescription}

\newcommand{\Space}{~~~~~}
\newcommand{\DINS}{{\mathsf{DINS}}} 
\newcommand{\FOL}{{\mathsf{FOL}}}
\newcommand{\CRing}{{\mathsf{CRing}}} 
\newcommand{\INS}{{\mathcal{I}}} 
\newcommand{\INM}{{\mathcal{L}}} 
\newcommand{\A}{{\mathfrak{A}}}
\newcommand{\B}{{\mathfrak{B}}}

\newcommand{\Hom}{\mathrm{Hom}} 
\newcommand{\id}{\mathrm{id}} 
\newcommand{\Cod}{\mathrm{Cod}} 
\newcommand{\Dom}{\mathrm{Dom}} 
\newcommand{\cod}{\mathrm{cod}} 
\newcommand{\dom}{\mathrm{dom}} 
\newcommand{\Sig}{\mathsf{Sig}}
\newcommand{\Mod}{\mathsf{Mod}}
\newcommand{\Dex}{\mathcal{D}} 
\newcommand{\Sen}{\mathsf{Sen}}
\newcommand{\card}{\mathsf{card}}
\newcommand{\power}{\mathsf{power}}
\newcommand{\pos}[1]{{\langle}#1{\rangle}}
\newcommand{\T}{{\mathsf{T}}}
\newcommand{\gene}{{\mathsf{gene}}}
\newcommand{\BS}[0]{\boldsymbol}
\newcommand{\Cat}{\mathtt{Cat}}
\newcommand{\CAT}{\mathtt{CAT}} 
\newcommand{\Set}{\mathtt{Set}}

\newcommand{\red}{\!\upharpoonright\!}
\newcommand\oldcoprod{\coprod} %
\newcommand\newcoprod{\textstyle\coprod\nolimits} %
\newcommand\oldprod{\prod} %
\newcommand\newprod{\textstyle\prod\nolimits} %

\newcommand{\Exists}[1]{\exists #1\,{\cdot}\,}
\newcommand{\Init}{\mathsf{Init}}
\newcommand{\Cut}{\mathsf{Cut}}
\newcommand{\Modify}{\mathsf{Modify}}
\newcommand{\Atom}{\mathsf{Atom}}
\newcommand{\OL}[1]{#1}
\newcommand{\UL}[1]{#1}
\newcommand{\proofrule}[2]{\displaystyle\frac{#1}{#2}}
\newcommand{\proofrulet}[2]{\genfrac{}{}{0pt}{0}{#1}{#2}}

\usepackage{tcolorbox} 
\usepackage{tablefootnote}
\title{
A formulation of D-institution using functor categories
}
\author{Go Hashimoto} 
\date{2026}
\begin{document}
\maketitle
\begin{abstract}
Variables are a crucial element in logic and are also addressed in institution theory, an effort to axiomatize logic. In institution theory, we typically use extensions (signature morphisms) obtained from variables instead of introducing variables directly. While this approach appears simple at first glance because it does not introduce new structures, it often requires numerous conditions to describe variable structures, which can actually complicate the discussion. In this paper, we propose introducing variable structures directly by utilizing a generalization of category of functors. We define a category of "predicate logics" and formulate the introduction of compound sentences as a functor. We also introduce a proof system and prove a completeness theorem.
\end{abstract}
\section{Introduction}
\subparagraph*{Institution}
$\INS=\pos{\Sig^{\INS},\Mod^{\INS},\Sen^{\INS},\models^{\INS}}$ is an institution if
\begin{itemize}
\item $\Sig^{\INS}\in|\CAT|$ is a locally small category
\item $\Mod^{\INS}:(\Sig^{\INS})^{op}\to\CAT$ is a functor
\item $\Sen^{\INS}:\Sig^{\INS}\to\Set$ is a functor
\item satisfaction condition holds true i.e. each $\chi:\Sigma\to\Sigma'\in\Sig^{\INS}$, $\gamma\in\Sen^{\INS}(\Sigma)$, $\A'\in|\Mod^{\INS}(\Sigma')|$ satisfy the following
\[\Mod^{\INS}(\chi)(\A')\models^{\INS}\gamma\iff\A'\models^{\INS}\Sen^{\INS}(\chi)(\gamma)\]
\end{itemize}
Through the study of institutions, theorems can be generalized, and results for newly defined institutions can be automatically obtained simply by verifying that the preconditions are met.
Furthermore, it becomes possible to define categories of institutions, enabling comparisons between institutions, importing results through translation, and deriving new institutions.
\subparagraph*{Variables}
Let $\Sigma=\pos{S,F,P}$ be a first-order signature.
Let $\Sigma[x]=\pos{S,F\cup\{x\},P}$ and let $\iota:\Sigma\to\Sigma[x]$ be the inclusion.
The set $\Sen(\Sigma[x])$ contains sentences that include variables, and by applying quantification to $\gamma\in\Sen(\Sigma[x])$, a new sentence $\Exists{x}\gamma\in\Sen(\Sigma)$ can be defined.
In reality, we need other tools to translate sentences and obtain satisfaction conditions for added sentences.
In typical institution theory, these are not directly introduced as structures. \cite{Diaconescu2025}
The extension $\iota:\Sigma\to\Sigma[X]$ is used in place of the variable $X$,
and we write a subclass of $\Sig$ consisting of such extensions as $\Dex$.
While this may seem simple at first glance, and many results have been obtained based on this method,
it has some drawbacks, such as the following:
\begin{itemize}
\item
$\Dex\subseteq\Sig$ does not have category-theoretic information such as "it is a subcategory," so we need to add conditions, which results in a haphazard approach.
\item
Since quantified sentences are treated internally (i.e., as sentences that the institution originally possesses),
handling quantified sentences requires manually introducing them (and proving that what is created is an institution).
\item
The concept of $\Dex$-institution morphisms and the categories of $\Dex$-institutions have not been defined.
\end{itemize}
This paper proposes introduction of a variable structure using ( generalization of ) functor categories to solve these problems.
\begin{itemize}
\item By using a generalized functor category (similar to the Cartesian product $\prod_{a\in A}B(a)$ when comparing it to a set of functions $A^{B}$), we can capture the signature-dependent behavior of $\Sigma$-variables.
\item The definition of $\Dex$-institution morphism is naturally derived, and the category of $\Dex$-institutions ($\DINS$) can be defined.
\item Furthermore, this allows us to view the introduction of compound sentences as a functor $\FOL:\DINS\to\DINS$.
\end{itemize}
\section{Preliminary and notation}\label{sec:pre}
To define a direct sum of sets, we first label the elements so that they do not overlap and then take the union.
This construction can be extended to categories and is called the category of elements, or the Grothendieck construction for indexed categories.
This operation can be viewed as a functor.
\begin{definition}[Category of elements]\label{def:elements}
Let $S\in|\Cat|$ be a category.
We define the functor $\newcoprod_{S}\in|\Cat^{\Cat^{S^{op}}}|$
\footnote{Larger categories cannot be treated as mathematical objects, so this is not actually a category, but since there is no problem if we do not perform set-like operations, expressions such as $F:D\to C\in\Cat^{\Cat^{S^{op}}}$ and $C,D\in|\Cat^{\Cat^{S^{op}}}|$ are also used as shorthand.} as follows. 
\begin{itemize}
\item Let $M\in|\Cat^{S^{op}}|$ be a functor. $\newcoprod_{S}M$ is defined as follows:
\begin{itemize}
\item $\pos{\A,\Sigma}\in|\newcoprod_{S}M|$ iff $\Sigma\in|S|$ and $\A\in|M(\Sigma)|$
\item $\pos{h,\chi}:\pos{\A,\Sigma}\to\pos{\A',\Sigma'}\in\newcoprod_{S}M$ iff $\chi:\Sigma\to\Sigma'\in S$ and $h:\A\to M(\chi)(\A')\in M(\Sigma)$ 
\item $\id_{\pos{\A,\Sigma}}:=\pos{\id_{\A},\id_{\Sigma}}:\pos{\A,\Sigma}\to\pos{\A,\Sigma}$
\item $\pos{h',\chi'}\circ\pos{h,\chi}:=\pos{M(\chi)(h')\circ h,\chi'\circ_{S}\chi}:\pos{\A,\Sigma}\to\pos{\A'',\Sigma''}$, where $\pos{h,\chi}:\pos{\A,\Sigma}\to\pos{\A',\Sigma'}$ and $\pos{h',\chi'}:\pos{\A',\Sigma'}\to\pos{\A'',\Sigma''}$
\end{itemize}
\item Let $\eta:M\to M'\in\Cat^{S^{op}}$ be a natural transformation. $\newcoprod_{S}\eta:\newcoprod_{S}M\to\newcoprod_{S}M'$ is defined as follows:
\begin{itemize}
\item $\newcoprod_{S}\eta(\pos{\A,\Sigma}):=\pos{\eta_{\Sigma}(\A),\Sigma}$
\item $\newcoprod_{S}\eta(\pos{h,\chi}:\pos{\A,\Sigma}\to\pos{\A',\Sigma'}):=\pos{\eta_{\Sigma}(h),\chi}:\pos{\eta_{\Sigma}(\A),\Sigma}\to\pos{\eta_{\Sigma'}(\A'),\Sigma'}$
\end{itemize}
\end{itemize}
\end{definition}
\begin{definition}\label{def:elements-nat}
Let $L:S\to S'\in\Cat$ be a functor.
We define the natural transformation $\newcoprod_{L}:\newcoprod_{S}\circ\,\Cat^{L^{op}}\to\newcoprod_{S'}\in\Cat^{\Cat^{S'^{op}}}$ as follows.
\begin{itemize}
\item Let $M\in|\Cat^{S'^{op}}|$. $\newcoprod_{L}M:\newcoprod_{S}(M\circ L^{op})\to\newcoprod_{S'}M\in\Cat$ is defined as follows:
\begin{itemize}
\item $(\newcoprod_{L}M)(\pos{\A,\Sigma}):=\pos{\A,L(\Sigma)}\in|\newcoprod_{S'}M|$ where $\pos{\A,\Sigma}\in|\newcoprod_{S}(M\circ L^{op})|$
\item $(\newcoprod_{L}M)(\pos{h,\chi}:\pos{\A,\Sigma}\to\pos{\A',\Sigma'}):=\pos{h,L(\chi)}:\pos{\A,L(\Sigma)}\to\pos{\A',L(\Sigma')}\in\newcoprod_{S'}M$ where $\pos{h,\chi}:\pos{\A,\Sigma}\to\pos{\A',\Sigma'}\in\newcoprod_{S}(M\circ L^{op})$
\end{itemize}
\end{itemize}
\end{definition}
\begin{lemma}\label{lemma:coprod_is_a_list}
The following hold true.
\begin{itemize}
\item $\newcoprod_{\id_{S}}=\id_{\oldcoprod_{S}}$, where $S\in|\Cat|$.
\item $\newcoprod_{L\circ^{op}L'}=\Cat^{\Cat^{L'^{op}}}(\newcoprod_{L})\circ_{V}^{op}\newcoprod_{L'}$, where $L:S\to S',\,L':S'\to S''\in\Cat$.
\end{itemize}
\end{lemma}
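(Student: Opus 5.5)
Both claims are equalities of natural transformations between functors into $\Cat$, and natural transformations are equal exactly when all their components are. So I will fix an arbitrary $M$ and show that the component functors agree on objects and on morphisms of the relevant category of elements from Definition~\ref{def:elements}. Everything will be unfolded directly from Definitions~\ref{def:elements} and~\ref{def:elements-nat}. Before comparing components I will confirm that the domains and codomains match, which is where the identities $\Cat^{\id^{op}}=\id$ and $\Cat^{(L'\circ L)^{op}}$-type functoriality come in.

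For the first item, note that $\Cat^{\id_S^{op}}$ is the identity, so the domain of $\newcoprod_{\id_S}$ is $\newcoprod_S$. For $M\in|\Cat^{S^{op}}|$, the functor $(\newcoprod_{\id_S}M)$ sends $\pos{\A,\Sigma}$ to $\pos{\A,\id_S(\Sigma)}=\pos{\A,\Sigma}$. It sends $\pos{h,\chi}$ to $\pos{h,\chi}$. This is the identity functor on $\newcoprod_S M$, and that functor is the $M$-component of $\id_{\oldcoprod_S}$.

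For the second item, I first pin down the notation. I read $L\circ^{op}L'$ as the composite $L'\circ L:S\to S''$. I read $\Cat^{\Cat^{L'^{op}}}(\newcoprod_L)$ as whiskering $\newcoprod_L$ by precomposition with $\Cat^{L'^{op}}=(-)\circ L'^{op}$, so its $M$-component for $M\in|\Cat^{S''^{op}}|$ is $\newcoprod_L(M\circ L'^{op})$. I read $\circ_V^{op}$ as vertical composition written in diagrammatic order. The domain of the right-hand side is then $\newcoprod_S\circ\Cat^{L^{op}}\circ\Cat^{L'^{op}}$, which equals $\newcoprod_S\circ\Cat^{(L'\circ L)^{op}}$ by strict functoriality of $\Cat^{(-)^{op}}$. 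This matches the domain of the left-hand side.

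I then compute the $M$-component of the right-hand side, which is $\newcoprod_{L'}M\circ\newcoprod_L(M\circ L'^{op})$. On objects it sends $\pos{\A,\Sigma}$ to $\pos{\A,L(\Sigma)}$ and then to $\pos{\A,L'(L(\Sigma))}$. On morphisms it sends $\pos{h,\chi}$ to $\pos{h,L(\chi)}$ and then to $\pos{h,L'(L(\chi))}$. These agree with $\newcoprod_{L'\circ L}M$ on objects and morphisms.

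The only real obstacle is fixing the notational conventions, namely the order in $\circ^{op}$ and what $\Cat^{\Cat^{L'^{op}}}$ applied to a natural transformation means, so that the typing works. Once that is settled, the componentwise computation is immediate. I would also remark that the functoriality of each $\newcoprod_L M$ is already implicit in Definition~\ref{def:elements-nat}, so it needs no re-verification here.
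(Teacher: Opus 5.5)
Your proposal is correct and follows essentially the same route as the paper. After checking that domains and codomains agree (via $\Cat^{\id_{S}^{op}}=\id$ and $\Cat^{L^{op}}\circ\Cat^{L'^{op}}=\Cat^{(L'\circ L)^{op}}$), the paper fixes $M$ and computes both sides on $\pos{h,\chi}:\pos{\A,\Sigma}\to\pos{\A',\Sigma'}$, obtaining $\pos{h,\chi}$ in the first case and $\pos{h,(L'\circ L)(\chi)}$ in the second. Your readings of $L\circ^{op}L'$ as $L'\circ L$ and of the whiskered component as $\newcoprod_{L}(M\circ L'^{op})$ match how the paper uses this notation in its proof.
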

The operation of constructing a Cartesian product of sets can also be extended to categories.
Intuitively, it is a generalization of the category of functors.
Here we will call this category the category of lists.
This operation can also be extended to a functor.
\begin{definition}[Category of lists]\label{def:lists}
Let $S\in|\Cat|$ be a category.
We define the functor $\newprod_{S}\in|\Cat^{\Cat^{S^{op}}}|$ as follows.
\begin{itemize}
\item Let $M:S^{op}\to\Cat$ be a functor.
Let $\pi:\newcoprod_{S}M\ni\pos{h,\chi}\mapsto\chi\in S$ be the forgetful functor.
We definine $\newprod_{S}M$ as a subcategory of $(\newcoprod_{S}M)^{S}$ as follows:
\begin{itemize}
\item $c\in|\newprod_{S}M|$ iff
$c\in|(\newcoprod_{S}M)^{S}|$ is functor such that $\pi\circ c=\id_{S}$. (i.e. $c$ is a section.)
\item $h:c\to c'\in\newprod_{S}M$ iff
$h:c\Rightarrow c'\in(\newcoprod_{S}M)^{S}$ is a natural transformation such that horizontal composition satisfies $\id_{\pi}\circ_{H}h=\id_{\id_{S}}(\id_{\pi}:\pi\Rightarrow\pi, \id_{\id_{S}}:\id_{S}\Rightarrow\id_{S})$.
\end{itemize}
Since the output can be difficult to handle if it contains extra labels, we will denote the first components by $(\cdot)^{c}$ and $(\cdot)^{h}$, i.e. $c(\cdot)=\pos{(\cdot)^{c},\cdot}, h(\cdot)=\pos{(\cdot)^{h},\cdot}$.
\footnote{$c$ and $c(\cdot)$ are synonymous. This notation is mainly used when we want to combine multiple functors. The co-domain of functors and natural transformations, and the range of possible values of $\cdot$, depend on the context.}
\item Let $\eta:M\to M'\in\Cat^{S^{op}}$ be a natural transformation. We definine $\newprod_{S}\eta:\newprod_{S}M\to\newprod_{S}M'$ as follows:
\begin{itemize}
\item $\newprod_{S}\eta(c):=\newcoprod_{S}\eta\circ c$
\item $\newprod_{S}\eta(h:c\to c'):=\newcoprod_{S}\eta\circ_{H} h:\newcoprod_{S}\eta\circ c\to\newcoprod_{S}\eta\circ c'$\footnote{When it is necessary to distinguish between them, horizontal composition is written as $\circ_{H}$ and vertical composition as $\circ_{V}$.}
\end{itemize}
\end{itemize}
\end{definition}
\begin{definition}\label{def:lists-nat}
Let $L:S\to S'\in\Cat$ be a functor.
We define the natural transformation $\newprod_{L}:\newprod_{S'}\to\newprod_{S}\circ\,\Cat^{L^{op}}\in\Cat^{(\Cat^{S'^{op}})}$ as follows.
\begin{itemize}
\item Let $M\in|\Cat^{S'^{op}}|$. $\newprod_{L}M:\newprod_{S'}M\to\newprod_{S}(M\circ L^{op})\in\Cat$ is defined as follows:
\begin{itemize}
\item $(\newprod_{L}M)(c):=\{\pos{L(\chi)^{c},\chi}\}_{\chi\in S}\in|\newprod_{S}(M\circ L^{op})|$ where $c\in|\newprod_{S'}M|$
\item $(\newprod_{L}M)(h:c\to c'):=\{\pos{L(\Sigma)^{h},\id_{\Sigma}}\}_{\Sigma\in|S|}:\{\pos{L(\chi)^{c},\chi}\}_{\chi\in S}\to\{\pos{L(\chi)^{c'},\chi}\}_{\chi\in S}\in\newprod_{S}(M\circ L^{op})$ where $h:c\to c'\in\newprod_{S'}M$
\end{itemize}
\end{itemize}
\end{definition}
\begin{lemma}\label{lemma:prod_is_a_list}
The following hold true.
\begin{itemize}
\item $\newprod_{\id_{S}}=\id_{\oldprod_{S}}$, where $S\in|\Cat|$.
\item $\newprod_{L\circ^{op}L'}=\Cat^{\Cat^{L'^{op}}}(\newprod_{L})\circ_{V}\newprod_{L'}$, where $L:S\to S',\,L':S'\to S''\in\Cat$.
\end{itemize}
\end{lemma}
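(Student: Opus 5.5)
The plan is to check both identities pointwise in $M$: first on objects $c$ of $\newprod_{S''}M$, then on morphisms $h:c\to c'$. Two natural transformations into $\Cat^{\Cat^{S''^{op}}}$-valued functors are equal exactly when their components agree. Each component is a functor between categories of lists, so equality of functors reduces to equality on objects and arrows. Both sides are therefore determined by explicit formulas from Definition~\ref{def:lists-nat}, and the proof becomes a direct unfolding of definitions.

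For the first item, take $L=\id_S$. Then $(\newprod_{\id_S}M)(c)=\{\pos{\chi^{c},\chi}\}_{\chi\in S}$, which is $c$ itself, since $c$ is a section with $c(\chi)=\pos{\chi^{c},\chi}$. On arrows, $(\newprod_{\id_S}M)(h)=\{\pos{\Sigma^{h},\id_\Sigma}\}_{\Sigma}$. This equals $h$ because the condition $\id_\pi\circ_H h=\id_{\id_S}$ forces every component $h_\Sigma$ to have second coordinate $\id_\Sigma$. Hence $\newprod_{\id_S}M=\id_{\newprod_S M}$. Since also $M\circ\id_S^{op}=M$, the identity $\newprod_{\id_S}=\id_{\oldprod_S}$ follows.

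For the second item, I will read the composite $L\circ^{op}L'$ as the functor $L'\circ L:S\to S''$. The $M$-component of the right-hand side is $(\newprod_L(M\circ L'^{op}))\circ(\newprod_{L'}M)$, because whiskering by $\Cat^{L'^{op}}$ simply precomposes the argument with $L'^{op}$. I will evaluate it on an object $c$ in two steps.
\begin{itemize}
\item Applying $\newprod_{L'}M$ gives the section $c_1=\{\pos{L'(\chi')^{c},\chi'}\}_{\chi'\in S'}$, so $\chi'^{c_1}=L'(\chi')^{c}$.
\item Applying $\newprod_L$ then gives $\{\pos{L(\chi)^{c_1},\chi}\}_{\chi\in S}=\{\pos{L'(L(\chi))^{c},\chi}\}_{\chi\in S}$.
\end{itemize}
This is the object-part of $\newprod_{L'\circ L}M$. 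The same two-step substitution on a morphism gives $\Sigma^{h_1}=L'(\Sigma)^{h}$, and then $\{\pos{L'(L(\Sigma))^{h},\id_\Sigma}\}_\Sigma$, which matches.

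The only genuine obstacle is bookkeeping rather than mathematics, and it has three parts. First, I must confirm that the domains and codomains agree, namely that $(M\circ L'^{op})\circ L^{op}=M\circ(L'\circ L)^{op}$; this holds strictly, so the target categories $\newprod_S(\cdot)$ coincide. Second, I must interpret the paper's ordering conventions in "$\circ^{op}$" and "$\Cat^{\Cat^{L'^{op}}}(\newprod_L)$" correctly as whiskering. Third, I must check that the notation $(\cdot)^{c}$ applied to the morphism $L(\chi)$ in the intermediate section is compatible with composition of functors. Once these conventions are fixed, every equality holds on the nose, and no naturality argument beyond componentwise agreement is needed.
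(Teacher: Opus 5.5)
Your proposal is correct and matches the paper's proof. The paper likewise first checks that domains and codomains agree, then unfolds Definition~\ref{def:lists-nat} componentwise in $M$, computing both sides on a morphism $h:c\to c'$; this subsumes your separate object-level check, since a functor is determined by its action on arrows, identities included. Your reading of $L\circ^{op}L'$ as $L'\circ L$, and of $\Cat^{\Cat^{L'^{op}}}(\cdot)$ as whiskering, is also exactly the one the paper uses.
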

\begin{remark}
$\Cat^{\Cat^{(\cdot)}}:((\Cat)^{op})^{op}\to\CAT$ is a functor.
$(\cdot)^{op}:\Cat\to\Cat\,\text{ or }(\,:\CAT\to\CAT)$ is a functor which assigns $F^{op}:D^{op}\to C^{op}$ to $F:D\to C$.
Let $\Cat^{\Cat^{(\cdot)^{op}}}:=\Cat^{\Cat^{(\cdot)}}\circ(\cdot)^{op}_{\Cat}:((\Cat)^{op})^{op}\to\CAT$, and $(\Cat^{\Cat^{(\cdot)^{op}}})^{op}:=(\cdot)^{op}_{\CAT}\circ\,\Cat^{\Cat^{(\cdot)}}\circ(\cdot)^{op}_{\Cat}:((\Cat)^{op})^{op}\to\CAT$.
We define labeled $\newcoprod,\,\newprod$ as follows:
\begin{itemize}
\item $\newcoprod'_{L:S'\to S\in\Cat^{op}}:=\pos{\newcoprod_{L}:\newcoprod_{S'}\to(\Cat^{\Cat^{L^{op}}})^{op}(\newcoprod_{S})\,\in(\Cat^{\Cat^{S'^{op}}})^{op},L:S'\to S}:\pos{\newcoprod_{S'},S'}\to\pos{\newcoprod_{S},S}$
\item $\newprod'_{L:S'\to S\in\Cat^{op}}:=\pos{\newprod_{L}:\newprod_{S'}\to\Cat^{\Cat^{L^{op}}}(\newprod_{S}),L:S'\to S}:\pos{\newprod_{S'},S'}\to\pos{\newprod_{S},S}$
\end{itemize}
Lemma~\ref{lemma:coprod_is_a_list} and Lemma~\ref{lemma:prod_is_a_list} mean the following, respectively.
\begin{itemize}
\item $\newcoprod':\Cat^{op}\to\newcoprod_{\Cat^{op}}(\Cat^{\Cat^{(\cdot)^{op}}})^{op}$ is a functor.
Since the right component is $\id$, $\newcoprod'\in|\newprod_{\Cat^{op}}(\Cat^{\Cat^{(\cdot)^{op}}})^{op}|$.
\item $\newprod':\Cat^{op}\to\newcoprod_{\Cat^{op}}\Cat^{\Cat^{(\cdot)^{op}}}$ is a functor.
Since the right component is $\id$, $\newprod'\in|\newprod_{\Cat^{op}}\Cat^{\Cat^{(\cdot)^{op}}}|$.
\end{itemize}
(Here, we extend $\newcoprod_{S},\,\newprod_{S}$ from $S\in\Cat$ to $S\in\CAT$.)
In other words, $\newcoprod$ and $\newprod$ themselves are also large lists in a sense.
\end{remark}

The above constructions can be performed as concrete operations even if we use $\CAT$, $S\in|\CAT|$ and $L:S\to S'\in\CAT$ instead of $\Cat$, $S\in|\Cat|$ and $L:S\to S'\in\CAT$, so we use the same notation.

\section{$\Dex$-institutions and quantifications}\label{sec:Dq}
\begin{definition}[$\Dex$-institutions]\label{def:dins}
$\INS=\pos{\Sig^{\INS},\Mod^{\INS},\Sen^{\INS},\models^{\INS};\Dex^{\INS}}$ is $\Dex$-institution iff
\begin{itemize}
\item $\Sig^{\INS}\in|\CAT|$ is a locally small category
\item $\Mod^{\INS}:(\Sig^{\INS})^{op}\to\CAT$ is a functor
\item $\Sen^{\INS}:\Sig^{\INS}\to\Set$ is a functor
\item $\Mod^{\INS}(\chi)(\A')\models^{\INS}\gamma$ iff $\A'\models^{\INS}\Sen^{\INS}(\chi)(\gamma)$
\item $\Dex^{\INS}=\pos{(\cdot)_{\Dex^{\INS}}\in|\Cat^{\Sig^{\INS}}|,\,\Dex^{\INS}(\cdot)\in|\newprod_{\cdot\in\Sig^{\INS}}(\cdot/\Sig^{\INS})^{(\cdot)_{\Dex^{\INS}}}|}$
\\
\small
where $\chi/\Sig^{\INS}:\Sigma'/\Sig^{\INS}\to\Sigma/\Sig^{\INS}\,(\chi:\Sigma\to\Sigma\in\Sig^{\INS})$ is a functor which assign $\chi''':\chi'\circ\chi\to\chi''\circ\chi\in\Sigma/\Sig^{\INS}$ to $\chi''':\chi'\to\chi''\in\Sigma'/\Sig^{\INS}$.
This gives us a functor $\cdot/\Sig^{\INS}:(\Sig^{\INS})^{op}\to\CAT$.
Composing the two functors with the $\Hom$ functor $(\chi/\Sig)^{\chi_{\Dex}}:=\Hom_{\CAT}(\chi_{\Dex},(\chi/\Sig))$ gives us a contravariant functor $(\cdot/\Sig)^{(\cdot)_{\Dex}}:\Sig^{op}\to\CAT$, so we can define $\newprod_{\cdot\in\Sig}(\cdot/\Sig)^{(\cdot)_{\Dex}}$.
\normalsize
\item For all $\chi:\Sigma\to\Sigma'\in\Sig^{\INS}$, and $X\in|\Sigma_{\Dex^{\INS}}|$, the pair $\pos{\chi^{\Dex^{\INS}}(X):\Sigma^{\Dex^{\INS}}\to(\chi/\Sig^{\INS})(\Sigma'^{\Dex^{\INS}}(\chi_{\Dex^{\INS}}(X))),\Sigma'^{\Dex^{\INS}}(\chi_{\Dex^{\INS}}(X))}$
is an universal morphism from $\Sigma^{\Dex^{\INS}}$ to $\chi/\Sig^{\INS}:\Sigma'/\Sig^{\INS}\to\Sigma/\Sig^{\INS}$.
\item For all $\chi:\Sigma\to\Sigma'\in\Sig^{\INS}$, $\A'\in\Mod^{\INS}(\Sigma')$, $X\in|\Sigma_{\Dex^{\INS}}|$ and $\Sigma^{\Dex^{\INS}}(X)$-expansion $\hat{\A}$ of $\A:=\Mod^{\INS}(\chi)(\A')$,
there is unique $\Sigma'^{\Dex^{\INS}}(\chi_{\Dex^{\INS}}(X))$-expansion $\hat{\A}'$ of $\A'$
such that $\Mod^{\INS}(\Sigma^{\Dex^{\INS}})(\hat{\A}')=\hat{\A}$.
\begin{center}
$\xymatrix@R=40pt@C=40pt{
\hat{\A} \ar@{.}[r] \ar@{.>}[d]|{\red} & \Mod^{\INS}(\Sigma^{\Dex^{\INS}}[X])
\ar@{<-}[rrr]|{\Mod^{\INS}(\chi^{\Dex^{\INS}}[X])} \ar[d]|{\Mod^{\INS}(\Sigma^{\Dex^{\INS}}(X))} &&& \Mod^{\INS}(\Sigma'^{\Dex^{\INS}}[\chi_{\Dex^{\INS}}(X)]) \ar[d]|{\Mod^{\INS}(\Sigma'^{\Dex^{\INS}}(\chi_{\Dex^{\INS}}(X)))} & \ar@{.}[l] \hat{\A}' \ar@{.>}[d]|{\red} \\
\A \ar@{.}[r] & \Mod^{\INS}(\Sigma) \ar@{<-}[rrr]|{\Mod^{\INS}(\chi)} &&& \Mod^{\INS}(\Sigma') & \ar@{.}[l] \A'
}$
\footnote{A coslice category has a projection $\Cod:c/C\ni(\chi:\chi'(:\Sigma\to\Sigma')\to\chi''(:\Sigma\to\Sigma''))\mapsto(\chi:\Sigma'\to\Sigma'')\in C$.
From now on, we use the notations $F[\cdot]:=\Cod\circ F$ and $\eta[\cdot]=\Cod\circ_{H}\eta$ for functors $F\in|(c/C)^{S}|$ and natural transformations $\eta\in(c/C)^{S}$.
Note that $\Sigma^{\Dex^{\INS}}(X)$ and $\Sigma^{\Dex^{\INS}}[X]$ are different but $\chi^{\Dex^{\INS}}(X)$ and $\chi^{\Dex^{\INS}}[X]$ have the same value.}
\end{center}
\end{itemize}
\end{definition}
For clarity, we omit the subscript $\INS$ when it is clear from the context.
\begin{enumerate}
\item
The part before $\Dex$ is the definition of institution. It assumes only the minimum translation conditions for the model and sentence.
\item
Intuitively, $(\cdot)_{\Dex}:\Sig\to\Cat$ gives, for each signature, a set of variable sets that may be used to extend (or quantify) the signature.
In this case, there is an inclusion relation between the variable sets, so $\Sigma_{\Dex}$ can be considered a category by order.
The objects in the category are sometimes called blocks.
\item
Intuitively, $\Dex(\cdot)=\pos{(\cdot)^{\Dex},\cdot}\in|\newprod_{\cdot\in\Sig}(\cdot/\Sig)^{(\cdot)_{\Dex}}|)$ gives the actual extension $\Sigma^{\Dex}(X):\Sigma\to\Sigma^{\Dex}[X]$ for each set of variables $X\in|\Sigma_{\Dex}|$.
$(\cdot)^{\Dex}$ is a list of such extension rules that is category-theoretically consistent.
This can be rephrased using Definition~\ref{def:lists} as follows:
\begin{itemize}
\item $\pos{(\cdot)^{\Dex},\cdot}\in|\newprod_{\cdot\in\Sig}(\cdot/\Sig)^{(\cdot)_{\Dex}}|$ iff
\begin{itemize}
\item $\Sigma^{\Dex}\in|(\Sigma/\Sig)^{\Sigma_{\Dex}}|$ where $(\Sigma\in|\Sig|)$ and
\item $\chi^{\Dex}:\Sigma^{\Dex}\to(\chi/\Sig)^{\chi_{\Dex}}(\Sigma'^{\Dex})\in(\Sigma/\Sig)^{\Sigma_{\Dex}}$ where $(\chi:\Sigma\to\Sigma'\in\Sig)$ such that
\end{itemize}
the correspondence $\Dex:\Sig\ni(\chi:\Sigma\to\Sigma')\mapsto\pos{\chi^{\Dex},\chi}:\pos{\Sigma^{\Dex},\Sigma}\to\pos{\Sigma'^{\Dex},\Sigma'}\in\newcoprod_{\cdot\in\Sig}(\cdot/\Sig)^{(\cdot)_{\Dex}}$ is a functor.
\end{itemize}
We check in order.
\begin{proofcases}
\item[$\Sigma^{\Dex}\in|(\Sigma/\Sig)^{\Sigma_{\Dex}}|$] Since the destination is a co-slice category, it becomes:
\begin{center}
$\xymatrix{
\Sigma_{\Dex} & \ni & (\iota:X\to Y ) & \mapsto & \Sigma^{\Dex}[X] \ar[rr]|{\Sigma^{\Dex}[\iota]} & & \Sigma^{\Dex}[Y] & \in & \Sigma/\Sig \\
 & & & & & \Sigma \ar[ul]|{\Sigma^{\Dex}(X)} \ar[ur]|{\Sigma^{\Dex}(Y)} & &
}$
\end{center}
The condition is that $\Sigma^{\Dex}:\Sigma_{\Dex}\to(\Sigma/\Sig)$ is a functor, but since the co-domain is a co-slice category, it can also be considered a condition for morphisms on $\Sig$.
\begin{equation}\label{eq:IDex-1}
\begin{split}
&\Sigma^{\Dex}[\id_{X}]=\id_{\Sigma^{\Dex}[X]} \text{ for all } \Sigma\in|\Sig| \text{ and } X\in|\Sigma_{\Dex}| \\
&\Sigma^{\Dex}[\iota'\circ\iota]=\Sigma^{\Dex}[\iota']\circ\Sigma^{\Dex}[\iota] \text{ for all } \Sigma\in|\Sig| \text{ and } \iota:X\to Y,\iota':Y\to Z\in\Sigma_{\Dex}
\end{split}
\end{equation}
\item[$\chi^{\Dex}:\Sigma^{\Dex}\to(\chi/\Sig)^{\chi_{\Dex}}(\Sigma'^{\Dex})\in(\Sigma/\Sig)^{\Sigma_{\Dex}}$]
The target is $(\chi/\Sig)^{\chi_{\Dex}}(\Sigma'^{\Dex})(X)=\Sigma'^{\Dex}(\chi_{\Dex}(X))\circ\chi$ for each element $X\in\Sigma_{\Dex}$.
Since $\chi^{\Dex}$ is a morphism of functor categories, it is a natural transformation. Therefore, for each element $X\in\Sigma_{\Dex}$, $\chi^{\Dex}(X):\Sigma^{\Dex}(X)\to\Sigma'^{\Dex}(\chi_{\Dex}(X))\circ\chi\in\Sigma/\Sig$.
\begin{center}
$\xymatrix@C=80pt{
\Sigma^{\Dex}[X] \ar[r]|{\chi^{\Dex}[X]} & \Sigma'^{\Dex}[\chi_{\Dex}(X)] \\
\Sigma \ar[u]|{\Sigma^{\Dex}(X)} \ar[r]|{\chi} & \Sigma' \ar[u]|{\Sigma'^{\Dex}(\chi_{\Dex}(X))}
}$
\end{center}
This is a diagram that is often seen when explaining the translation of extensions as push-out.
Combining the above two, we get the following diagram. The commutativity of the outer rectangle of the diagram also holds true, which just shows the commutativity of natural transformations.
\begin{equation}\label{eq:IDex-2}
\xymatrix@R=18pt@C=60pt{
\Sigma^{\Dex}[X] \ar[ddd]|{\chi^{\Dex}[X]} \ar[rr]|{\Sigma^{\Dex}[\iota]} & & \Sigma^{\Dex}[Y] \ar[ddd]|{\chi^{\Dex}[Y]} \\
& \Sigma \ar[ul]|{\Sigma^{\Dex}(X)} \ar[ur]|{\Sigma^{\Dex}(Y)} \ar[d]|{\chi} & \\
& \Sigma' \ar[dl]|{\Sigma'^{\Dex}(\chi_{\Dex}(X))} \ar[dr]|{\Sigma'^{\Dex}(\chi_{\Dex}(Y))} & \\
\Sigma'^{\Dex}[\chi_{\Dex}(X)] \ar[rr]|{\Sigma'^{\Dex}[\chi_{\Dex}(\iota)]} & & \Sigma'^{\Dex}[\chi_{\Dex}(Y)]
}
\end{equation}
\item[the correspondence $\Dex:\Sig\ni(\chi:\Sigma\to\Sigma')\mapsto(\chi^{\Dex},\chi)\in\newcoprod_{\cdot\in\Sig}(\cdot/\Sig)^{(\cdot)_{\Dex}}$ is a functor.]
The functor condition can be divided into $\id$ and $\circ$ as follows:
\begin{proofcases}
\item[$\id$]
$\pos{\id_{\Sigma}^{\Dex},\id_{\Sigma}}=\pos{\id_{\Sigma^{\Dex}},\id_{\Sigma}}$
where $\Sigma\in|\Sig|$.
\item[$\circ$]
$\pos{(\chi'\circ\chi)^{\Dex},\chi'\circ\chi}
=\pos{\chi'^{\Dex},\chi'}\circ\pos{\chi^{\Dex},\chi}
=\pos{(\chi/\Sig)^{\chi_{\Dex}}(\chi'^{\Dex})\circ\chi^{\Dex},\chi'\circ\chi}
=\pos{\{\chi'^{\Dex}(\chi_{\Dex}(X))\}_{X\in|\Sigma_{\Dex}|}\circ\{\chi^{\Dex}(X)\}_{X\in|\Sigma_{\Dex}|},\chi'\circ\chi}
=\pos{\{\chi'^{\Dex}(\chi_{\Dex}(X))\circ\chi^{\Dex}(X)\}_{X\in|\Sigma_{\Dex}|},\chi'\circ\chi}$
where $\chi:\Sigma\to\Sigma',\chi':\Sigma'\to\Sigma''\in\Sig$.
\end{proofcases}
This means that the following "differential law" hold true:
\begin{equation}\label{eq:IDex-3}
\begin{split} 
&\id_{\Sigma}^{\Dex}[X]=\id_{\Sigma^{\Dex}[X]} \text{ for all } X\in|\Sigma_{\Dex}| \\
&(\chi'\circ\chi)^{\Dex}[X]=\chi'^{\Dex}[\chi_{\Dex}(X)]\circ\chi^{\Dex}[X] \text{ for all } X\in|\Sigma_{\Dex}|
\end{split}
\end{equation}
which will be used in translations later when quantifiers are introduced.
\end{proofcases}
In summary, $\Dex^{\INS}(\cdot)\in|\newprod_{\cdot\in\Sig^{\INS}}(\cdot/\Sig^{\INS})^{(\cdot)_{\Dex^{\INS}}}|$ consists of three components:
\begin{itemize}
\item $\{\Sigma^{\Dex}(X):\Sigma\to\Sigma^{\Dex}[X]\in\Sig\}_{\Sigma\in|\Sig|,\,X\in|\Sigma_{\Dex}|}$,
\item $\{\Sigma^{\Dex}[\iota]:\Sigma^{\Dex}[X]\to\Sigma^{\Dex}[X']\in\Sig\}_{\Sigma\in|\Sig|,\,\iota:X\to X'\in\Sigma_{\Dex}}$, and
\item $\{\chi^{\Dex}[X]:\Sigma^{\Dex}[X]\to\Sigma'^{\Dex}[\chi_{\Dex}(X)]\in\Sig\}_{\chi:\Sigma\to\Sigma'\in\Sig,\,X\in|\Sigma_{\Dex}|}$
\end{itemize}
which satisfy \ref{eq:IDex-1}, \ref{eq:IDex-2}, and \ref{eq:IDex-3}.
These conditions correspond to those assumed for traditional 
variable structures \cite[p114]{Diaconescu2025}.
The notion of lists category allows us to express them relatively succinctly and naturally.
\item
The condition for universal morphisms is, in other words, that the following diagram is a push-out:
\begin{center}
$\xymatrix@C=80pt{
\Sigma^{\Dex}[X] \ar[r]|{\chi^{\Dex}[X]} & \Sigma'^{\Dex}[\chi_{\Dex}(X)] \\
\Sigma \ar[u]|{\Sigma^{\Dex}(X)} \ar[r]|{\chi} & \Sigma' \ar[u]|{\Sigma'^{\Dex}(\chi_{\Dex}(X))}
}$
\end{center}
The reason we are writing it in the form of a universal morphism is because it makes it easier to see the correspondence with institution morphism.
We will not use this condition very often, but we will assume it because it is convenient when we discuss substitution, equality, and the introduction rules for existential quantification.
\item
$\Sen$ is initially assumed to contain only atomic formulas.
Other complex sentences are constructed recursively.
The condition on the model extension (the last condition) is used to ensure that the satisfaction condition also holds true for the extended $\Sen$.
By the way, the commutativity of the diagram in the definition can be seen from the $X$ part of \ref{eq:IDex-2}.
\end{enumerate}
\begin{example}[$\INS_{\FOL_{0}}$]
$\INS_{\FOL_{0}}=\pos{\Sig^{\INS_{\FOL_{0}}},\Mod^{\INS_{\FOL_{0}}},\Sen^{\INS_{\FOL_{0}}},\models^{\INS_{\FOL_{0}}};\Dex^{\INS_{\FOL_{0}}}}$ is ordinary many-sorted first-order logic with sentences restricted to atomic sentences only.
$\Dex^{\INS_{\FOL_{0}}}=\pos{(\cdot)_{\Dex^{\INS_{\FOL_{0}}}},\Dex^{\INS_{\FOL_{0}}}(\cdot)}$ is defined as follows:
\begin{itemize}
\item $(\cdot)_{\Dex}\in|\Cat^{\Sig}|$
\begin{itemize}
\item $\Sigma_{\Dex}:=\pos{\{X\subseteq_{fin}\{\pos{n,s,\Sigma}\mid n\in\omega,\,s\text{ is a sort of }\Sigma\}\mid s=s'\text{ for all }\pos{n,s,\Sigma},\pos{n,s',\Sigma}\in X\},{\subseteq}}$
\item $\chi_{\Dex}(\{\pos{n,s,\Sigma}\cdots\}\subseteq\{\pos{n,s,\Sigma}\cdots\}):=\{\pos{n,\chi(s),\Sigma'}\cdots\}\subseteq\{\pos{n,\chi(s),\Sigma'}\cdots\}$ where $\chi:\Sigma\to\Sigma'$
\end{itemize}
\item $\Dex(\cdot)\in|\newprod_{\cdot\in\Sig}(\cdot/\Sig)^{(\cdot)_{\Dex}}|$ consists of the following components:
\begin{itemize}
\item $\{\Sigma^{\Dex}(X):\Sigma\to\Sigma^{\Dex}[X]\in\Sig\}_{\Sigma\in|\Sig|,\,X\in|\Sigma_{\Dex}|}$ where $\Sigma^{\Dex}[X]:=\pos{S,F\cup X,P}$ and $\Sigma^{\Dex}(X)$ is the inclusion,
\item $\{\Sigma^{\Dex}[\iota]:\Sigma^{\Dex}[X]\to\Sigma^{\Dex}[X']\in\Sig\}_{\Sigma\in|\Sig|,\,\iota:X\to X'\in\Sigma_{\Dex}}$ where $\Sigma^{\Dex}[\iota]$ is the inclusion, and
\item $\{\chi^{\Dex}[X]:\Sigma^{\Dex}[X]\to\Sigma'^{\Dex}[\chi_{\Dex}(X)]\in\Sig\}_{\chi:\Sigma\to\Sigma'\in\Sig,\,X\in|\Sigma_{\Dex}|}$ where $\chi^{\Dex}[X]$ is a expansion of $\chi$ which assigns $\pos{n,\chi(s),\Sigma'}$ for each $\pos{n,s,\Sigma}\in X$.
\end{itemize}
\end{itemize}
\end{example}
\begin{example}[$\INS_{\CRing}$]
$\INS_{\CRing}=\pos{\Sig^{\INS_{\CRing}},\Mod^{\INS_{\CRing}},\Sen^{\INS_{\CRing}},\models^{\INS_{\CRing}};\Dex^{\INS_{\CRing}}}$ is defined as follows:
\begin{itemize}
\item $\Sig:=\CRing$ (Category of commutative rings)
\item $\Mod:\Sig^{op}\to\CAT$
\begin{itemize}
\item $\Mod(\Sigma):=\Sigma/\CRing$ (i.e. the category of $\Sigma$-algebras and $\Sigma$-homomorphisms in commutative ring theory)
\item $\Mod(\chi):=\chi/\CRing:\Sigma'/\CRing\to\Sigma/\CRing$
\end{itemize}
\item $\Sen:\Sig\to\Set$
\begin{itemize}
\item $\Sen(\Sigma):=\{t_{1}=t_{2}\mid t_{1},t_{2}\in\Sigma\}$
\item $\Sen(\chi)(t_{1}=t_{2}):=\chi(t_{1})=\chi(t_{2})$
\end{itemize}
\item $\A\models t_{1}=t_{2}$ iff $\A(t_{1})=\A(t_{2})$
\item $(\cdot)_{\Dex}\in|\Cat^{\Sig}|$
\begin{itemize}
\item $|\Sigma_{\Dex}|:=\mathcal{P}_{fin}\{\pos{n,\Sigma}\mid n\in\omega\}$, $\Sigma_{\Dex}:=\{\iota:X\to X'\,(injective)\mid X,X'\in|\Sigma_{\Dex}|\}$
\item $\chi_{\Dex}(\{\cdots\pos{n,\Sigma}\cdots\}):=\{\cdots\pos{n,\Sigma'}\cdots\}$ $\chi_{\Dex}(\iota)(\pos{n,\Sigma}):=\pos{n',\Sigma'}$ where $\iota(n,\Sigma)=\pos{n',\Sigma}$, $(\chi:\Sigma\to\Sigma')$
\end{itemize}
\item $\Dex(\cdot)\in|\newprod_{\cdot\in\Sig}(\cdot/\Sig)^{(\cdot)_{\Dex}}|$ consists of the following components:
\begin{itemize}
\item $\{\Sigma^{\Dex}(X):\Sigma\to\Sigma^{\Dex}[X]\in\Sig\}_{\Sigma\in|\Sig|,\,X\in|\Sigma_{\Dex}|}$ where $\Sigma^{\Dex}[X]$ is the polynomial ring and $\Sigma^{\Dex}(X)$ is the inclusion,
\item $\{\Sigma^{\Dex}[\iota]:\Sigma^{\Dex}[X]\to\Sigma^{\Dex}[X']\in\Sig\}_{\Sigma\in|\Sig|,\,\iota:X\to X'\in\Sigma_{\Dex}}$where $\Sigma^{\Dex}[\iota]$ is the injection, and
\item $\{\chi^{\Dex}[X]:\Sigma^{\Dex}[X]\to\Sigma'^{\Dex}[\chi_{\Dex}(X)]\in\Sig\}_{\chi:\Sigma\to\Sigma'\in\Sig,\,X\in|\Sigma_{\Dex}|}$ where $\chi^{\Dex}[X]$ is a expansion of $\chi$ which assigns $\pos{n,\Sigma'}$ for each $\pos{n,\Sigma}\in X$.
\end{itemize}
\end{itemize}
\end{example}
Like this, an algebraic structure itself can be regarded as a signature.
The operation of constructing a polynomial ring satisfies the axioms for extension by variables.
Since we use injective functions instead of inclusion here, we can also swap variables.
For example, by applying a group action to this, we can consider concepts similar to symmetric polynomials.
\begin{definition}[$\Dex$-institution morphisms]
$\INM=\pos{\Sig^{\INM},\Mod^{\INM},\Sen^{\INM};\Dex^{\INM}}:\INS\to\INS'$ is $\Dex$-institution morphism iff
\begin{itemize}
\item $\Sig^{\INM}:\Sig^{\INS}\to\Sig^{\INS'}\in\CAT$ is a functor
\item $\Mod^{\INM}:\Mod^{\INS}(\cdot)\Rightarrow\Mod^{\INS'}((\Sig^{\INM})^{op}(\cdot))$ is a natural transformation
\item $\Sen^{\INM}:\Sen^{\INS}(\cdot)\Leftarrow\Sen^{\INS'}(\Sig^{\INM}(\cdot))$ is a natural transformation
\item $\A\models^{\INS}\Sen^{\INM}(\Sigma)(\gamma')$ iff $\Mod^{\INM}(\Sigma)(\A)\models^{\INS'}\gamma'$ where $\Sigma\in\Sig^{\INS}$, $\A\in|\Mod^{\INS}(\Sigma)|$, and $\gamma'\in\Sen^{\INS'}(\Sig^{\INM}(\Sigma))$
\item $\Dex^{\INM}=\pos{(\cdot)_{\Dex^{\INM}}\in\Cat^{\Sig^{\INS}},\,\Dex^{\INM}(\cdot)\in\newprod_{\cdot\in\Sig^{\INS}}(\Sig^{\INM}(\cdot)/\Sig^{\INS'})^{\Sig^{\INM}(\cdot)_{\Dex^{\INS'}}}}$
\begin{itemize}
\item $(\cdot)_{\Dex^{\INM}}:(\cdot)_{\Dex^{\INS}}\Leftarrow\Sig^{\INM}(\cdot)_{\Dex^{\INS'}}$ is a natural transformation
\item $(\cdot)^{\Dex^{\INS}}$ and $(\cdot)^{\Dex^{\INS'}}$ cannot be compared as they are, so they are converted.
\begin{itemize}
\item $\Dex^{\INS}_{\INM}:=\newprod_{\Sig^{\INS}}((\cdot/\Sig^{\INM})^{(\cdot)_{\Dex^{\INM}}})(\Dex^{\INS})$
\item[]
\small
i.e. $\Dex^{\INS}_{\INM}(\chi:\Sigma\to\Sigma'):=\pos{(\Sigma/\Sig^{\INM})^{\Sigma_{\Dex^{\INM}}}(\chi^{\Dex^{\INS}}),\chi}:\pos{(\Sigma/\Sig^{\INM})^{\Sigma_{\Dex^{\INM}}}(\Sigma^{\Dex^{\INS}}),\Sigma}\to\pos{(\Sigma'/\Sig^{\INM})^{\Sigma'_{\Dex^{\INM}}}(\Sigma'^{\Dex^{\INS}}),\Sigma'}$
\\
where $(\Sigma/\Sig^{\INM}):\Sigma/\Sig^{\INS}\to\Sig^{\INM}(\Sigma)/\Sig^{\INS'}$ is the functor which assign $\Sig^{\INM}(\chi):\Sig^{\INM}(\chi')\to\Sig^{\INM}(\chi'')$ to $\chi:\chi'\to\chi''\in\Sigma/\Sig^{\INS}$.
This is also a natural transformation $(\cdot/\Sig^{\INM}):\cdot/\Sig^{\INS}\Rightarrow\Sig^{\INM}(\cdot)/\Sig^{\INS'}$.
Since $(\cdot)_{\Dex^{\INM}}:(\cdot)_{\Dex^{\INS}}\Leftarrow\Sig^{\INM}(\cdot)_{\Dex^{\INS'}}$ is also natural,
$(\cdot/\Sig^{\INM})^{(\cdot)_{\Dex^{\INM}}}=\Hom_{\CAT}((\cdot)_{\Dex^{\INM}}^{op},(\cdot/\Sig^{\INM})):(\cdot/\Sig^{\INS})^{(\cdot)_{\Dex^{\INS}}}\to(\Sig^{\INM}(\cdot)/\Sig^{\INS'})^{\Sig^{\INM}(\cdot)_{\Dex^{\INS'}}}\in\CAT^{\Sig^{\INS}}$.
Therefore we can apply $\newprod_{\Sig^{\INS}}:\CAT^{\Sig^{\INS}}\to\CAT'$.
\normalsize
\vspace{1.5mm}
\item $\Dex^{\INS'}_{\INM}:=\newprod_{\Sig^{\INM}}((\cdot/\Sig^{\INS'})^{(\cdot)_{\Dex^{\INS'}}})(\Dex^{\INS'})$
\item[]
\small
i.e. $\Dex^{\INS'}_{\INM}(\chi:\Sigma\to\Sigma'):=\pos{\Sig^{\INM}(\chi)^{\Dex^{\INS'}},\chi}:\pos{\Sig^{\INM}(\Sigma)^{\Dex^{\INS'}},\Sigma}\to\pos{\Sig^{\INM}(\Sigma')^{\Dex^{\INS'}},\Sigma'}$.
\normalsize
\end{itemize}
$\Dex^{\INM}(\cdot):\Dex^{\INS}_{\INM}(\cdot)\Leftarrow\Dex^{\INS'}_{\INM}(\cdot)\in\newprod_{\cdot\in\Sig^{\INS}}(\Sig^{\INM}(\cdot)/\Sig^{\INS'})^{\Sig^{\INM}(\cdot)_{\Dex^{\INS'}}}$
\end{itemize}
\item
For all $\Sigma\in|\Sig^{\INS}|$, and $X\in|\Sig^{\INM}(\Sigma)_{\Dex^{\INS'}}|$,
the pair $\pos{\Sigma^{\Dex^{\INM}}(X):\Sig^{\INM}(\Sigma)^{\Dex^{\INS'}}(X)\to(\Sigma/\Sig^{\INM})(\Sigma^{\Dex^{\INS}}(\Sigma_{\Dex^{\INM}}(X))),\Sigma^{\Dex^{\INS}}(\Sigma_{\Dex^{\INM}}(X))}$
is an universal morphism from $\Sig^{\INM}(\Sigma)^{\Dex^{\INS'}}(X)$ to $\Sigma/\Sig^{\INM}:\Sigma/\Sig^{\INS}\to\Sig^{\INM}(\Sigma)/\Sig^{\INS'}$
\item
For all $\Sigma\in|\Sig^{\INS}|$, $X\in|\Sig^{\INM}(\Sigma)_{\Dex^{\INS'}}|$, $\A\in|\Mod^{\INS}(\Sigma)|$, and
$\Sig^{\INM}(\Sigma)^{\Dex^{\INS'}}(X)$-expansion $\hat{\A}'$ of $\A':=\Mod^{\INM}(\Sigma)(\A)$,
there is unique $\Sigma^{\Dex^{\INS}}(\Sigma_{\Dex^{\INM}}(X))$-expansion $\hat{\A}$ of $\A$ such that
$(\Sigma^{\Dex^{\INM}}*\Mod^{\INM})(X)(\hat{\A})=\hat{\A}'$.
\begin{center}
$\xymatrix@R=40pt@C=40pt{
\hat{\A} \ar@{.}[r] \ar@{.>}[d]|{\red} & \Mod^{\INS}(\Sigma^{\Dex^{\INS}}[\Sigma_{\Dex^{\INM}}(X)]) \ar[rrr]|{(\Sigma^{\Dex^{\INM}}*\Mod^{\INM})(X)} \ar[d]|{\Mod^{\INS}(\Sigma^{\Dex^{\INS}}(\Sigma_{\Dex^{\INM}}(X)))} &&& \Mod^{\INS'}(\Sig^{\INM}(\Sigma)^{\Dex^{\INS'}}[X]) \ar[d]|{\Mod^{\INS'}(\Sig^{\INM}(\Sigma)^{\Dex^{\INS'}}(X))} & \ar@{.}[l] \hat{\A}' \ar@{.>}[d]|{\red} \\
\A \ar@{.}[r] & \Mod^{\INS}(\Sigma) \ar[rrr]|{\Mod^{\INM}(\Sigma)} &&& \Mod^{\INS'}(\Sig^{\INM}(\Sigma)) & \ar@{.}[l] \A'
}$
\end{center}
where $(\Sigma^{\Dex^{\INM}}*\Mod^{\INM}):=(\Mod^{\INS'}\circ_{H}(\Sigma^{\Dex^{\INM}}[\cdot])^{op})\circ_{V}(\Mod^{\INM}\circ_{H}(\Sigma^{\Dex^{\INS}}[\Sigma_{\Dex^{\INM}}(\cdot)])^{op})$
\begin{center}\small
\def\objectstyle{\scriptstyle}
\def\labelstyle{\scriptscriptstyle}
$\xymatrix@R=40pt@C=60pt{
&&&\ar@{}[d]|{\Uparrow\,(\Sigma^{\Dex^{\INM}}[\cdot])^{op}}&& \\
\CAT&\ar[l]^{\Mod^{\INS'}}(\Sig^{\INS'})^{op}&\ar[l]^{(\Sig^{\INM})^{op}}(\Sig^{\INS})^{op}\ar@/^50pt/[ll]^{\Mod^{\INS}}&&\ar[ll]^{(\Sigma^{\Dex^{\INS}}[\cdot])^{op}}(\Sigma_{\Dex^{\INS}})^{op}&\ar[l]^{(\Sigma_{\Dex^{\INM}})^{op}}(\Sig^{\INM}(\Sigma)_{\Dex^{\INS'}})^{op}\ar@/_50pt/[llll]_{(\Sig^{\INM}(\Sigma)^{\Dex^{\INS'}}[\cdot])^{op}} \\
&\ar@{}[u]|{\Uparrow\Mod^{\INM}}&&&&
}$
\end{center}
\end{itemize}
\end{definition}
\begin{enumerate}
\item
The part before $\Dex$ is the definition of institution morphism.
Intuitively, $\INS$ has more sentences than $\INS'$, and instead the class of models is constrained.
\item
$(\cdot)_{\Dex^{\INM}}:(\cdot)_{\Dex^{\INS}}\Leftarrow(\Sig^{\INM}(\cdot))_{\Dex^{\INS'}}$ represents how the set of variables in $\INS'$ is translated into $\INS$.
For example, if $\INS'$ is ordinary first-order logic and $\INS$ allows quantification of an infinite number of variables, this will not be surjective.
\item
$\Dex^{\INM}(\cdot):\Dex^{\INS}_{\INM}(\cdot)\Leftarrow\Dex^{\INS'}_{\INM}(\cdot)\in\newprod_{\cdot\in\Sig^{\INS}}(\Sig^{\INM}(\cdot)/\Sig^{\INS'})^{\Sig^{\INM}(\cdot)_{\Dex^{\INS'}}}$ can be rephrased as
$\{\pos{\Sigma^{\Dex^{\INM}},\id_{\Sigma}}\}_{\Sigma\in|\Sig^{\INS}|} :\{\pos{\chi^{\Dex^{\INS}_{\INM}},\chi}\}_{\chi\in\Sig^{\INS}}\Leftarrow\{\pos{\chi^{\Dex^{\INS'}_{\INM}},\chi}\}_{\chi\in\Sig^{\INS}}\in(\newcoprod_{\cdot\in\Sig^{\INS}}(\Sig^{\INM}(\cdot)/\Sig^{\INS'})^{\Sig^{\INM}(\cdot)_{\Dex^{\INS'}}})^{\Sig^{\INS}}$
(The condition on $\pi$ is implied by the right component being $\id$.)
\begin{proofcases}
\item[$\pos{\Sigma^{\Dex^{\INM}},\id_{\Sigma}}:\pos{\Sigma^{\Dex^{\INS}_{\INM}},\Sigma}\gets\pos{\Sigma^{\Dex^{\INS'}_{\INM}},\Sigma}\in(\Sig^{\INM}(\Sigma)/\Sig^{\INS'})^{\Sig^{\INM}(\Sigma)_{\Dex^{\INS'}}}\times\{\id_{\Sigma}:\Sigma\to\Sigma\}$]
Before discussing the commutativity of natural transformations, we first look at the condition for the destination to be contained in the target. Note that the elements of $\newcoprod$ are labeled, so we have ($\times\{\id_{\Sigma}:\Sigma\to\Sigma\}$).
However, the condition for the right-hand component is clear from the form, so this condition can be rephrased as $\Sigma^{\Dex^{\INM}}:\Sigma^{\Dex^{\INS}_{\INM}}\Leftarrow\Sigma^{\Dex^{\INS'}_{\INM}}\in(\Sig^{\INM}(\Sigma)/\Sig^{\INS'})^{\Sig^{\INM}(\Sigma)_{\Dex^{\INS'}}}$.
That is, for any $\iota:X\to Y\in\Sig^{\INM}(\Sigma)_{\Dex^{\INS'}}$, the following is commutative. (Note that this is a co-slice category.)
\begin{center}
$\xymatrix@R=30pt@C=70pt{
\Sigma^{\Dex^{\INS'}_{\INM}}[X] \ar[rr]|{\Sigma^{\Dex^{\INS'}_{\INM}}[\iota]} \ar[dd]|{\Sigma^{\Dex^{\INM}}[X]} & & \Sigma^{\Dex^{\INS'}_{\INM}}[Y] \ar[dd]|{\Sigma^{\Dex^{\INM}}[Y]} \\
& \Sig^{\INM}(\Sigma) \ar[ul]|{\Sigma^{\Dex^{\INS'}_{\INM}}(X)} \ar[ur]|{\Sigma^{\Dex^{\INS'}_{\INM}}(Y)} \ar[dl]|{\Sigma^{\Dex^{\INS}_{\INM}}(X)} \ar[dr]|{\Sigma^{\Dex^{\INS}_{\INM}}(Y)} & \\
\Sigma^{\Dex^{\INS}_{\INM}}[X] \ar[rr]|{\Sigma^{\Dex^{\INS}_{\INM}}[\iota]} & & \Sigma^{\Dex^{\INS}_{\INM}}[Y]
}$
\end{center}
The commutativity as a coslice category (commutativity of only $X$ or $Y$) is as follows:
\begin{align*}
&\Sigma^{\Dex^{\INS}_{\INM}}(X)
=(\Sigma/\Sig^{\INM})^{\Sigma_{\Dex^{\INM}}}(\Sigma^{\Dex^{\INS}})(X)
=\Sig^{\INM}(\Sigma^{\Dex^{\INS}}(\Sigma_{\Dex^{\INM}}(X))) \\
&\Sigma^{\Dex^{\INM}}[X]\circ\Sigma^{\Dex^{\INS'}_{\INM}}(X)
=\Sigma^{\Dex^{\INM}}[X]\circ\Sig^{\INM}(\Sigma)^{\Dex^{\INS'}}(X)
\end{align*}
\begin{equation}\label{eq:MDex-1}
\text{Therefore, }\Sig^{\INM}(\Sigma^{\Dex^{\INS}}(\Sigma_{\Dex^{\INM}}(X)))=\Sigma^{\Dex^{\INM}}[X]\circ\Sig^{\INM}(\Sigma)^{\Dex^{\INS'}}(X)
\end{equation}
The commutativity of the outer frame is as follows:
\begin{align*}
&\Sigma^{\Dex^{\INS}_{\INM}}(\iota)\circ\Sigma^{\Dex^{\INM}}(X)
=(\Sigma/\Sig^{\INM})^{\Sigma_{\Dex^{\INM}}}(\Sigma^{\Dex^{\INS}})(\iota)\circ\Sigma^{\Dex^{\INM}}(X)
=\Sig^{\INM}(\Sigma^{\Dex^{\INS}}(\Sigma_{\Dex^{\INM}}(\iota)))\circ\Sigma^{\Dex^{\INM}}(X) \\
&\Sigma^{\Dex^{\INM}}(Y)\circ\Sigma^{\Dex^{\INS'}_{\INM}}(\iota)
=\Sigma^{\Dex^{\INM}}(Y)\circ\Sig^{\INM}(\Sigma)^{\Dex^{\INS'}}(\iota)
\end{align*}
\begin{equation}\label{eq:MDex-2}
\text{Therefore, }\Sig^{\INM}(\Sigma^{\Dex^{\INS}}(\Sigma_{\Dex^{\INM}}[\iota]))\circ\Sigma^{\Dex^{\INM}}[X]=\Sigma^{\Dex^{\INM}}[Y]\circ\Sig^{\INM}(\Sigma)^{\Dex^{\INS'}}[\iota]
\end{equation}
\item[$\{\pos{\Sigma^{\Dex^{\INM}},\id_{\Sigma}}\}_{\Sigma\in|\Sig^{\INS}|}:\{\pos{\chi^{\Dex^{\INS}_{\INM}},\chi}\}_{\chi\in\Sig^{\INS}}\Leftarrow\{\pos{\chi^{\Dex^{\INS'}_{\INM}},\chi}\}_{\chi\in\Sig^{\INS}}\in(\newcoprod_{\cdot\in\Sig^{\INS}}(\Sig^{\INM}(\cdot)/\Sig^{\INS'})^{\Sig^{\INM}(\cdot)_{\Dex^{\INS'}}})^{\Sig^{\INS}}$]
we look at the overall commutativity.
\begin{center}
$\xymatrix@R=30pt@C=70pt{
\pos{\Sigma^{\Dex^{\INS'}_{\INM}},\Sigma} \ar[r]|{\pos{\chi^{\Dex^{\INS'}_{\INM}},\chi}} \ar[d]|{\pos{\Sigma^{\Dex^{\INM}},\id_{\Sigma}}} & \pos{\Sigma'^{\Dex^{\INS'}_{\INM}},\Sigma'} \ar[d]|{\pos{\Sigma'^{\Dex^{\INM}},\id_{\Sigma'}}} \\
\pos{\Sigma^{\Dex^{\INS}_{\INM}},\Sigma} \ar[r]|{\pos{\chi^{\Dex^{\INS}_{\INM}},\chi}} & \pos{\Sigma'^{\Dex^{\INS}_{\INM}},\Sigma'}
}$
\end{center}
\begin{align*}
&\pos{\chi^{\Dex^{\INS}_{\INM}},\chi}\circ\pos{\Sigma^{\Dex^{\INM}},\id_{\Sigma}}
=\pos{\chi^{\Dex^{\INS}_{\INM}}\circ_{V}\Sigma^{\Dex^{\INM}},\chi} \\
&=\pos{(\Sigma/\Sig^{\INM})^{\Sigma_{\Dex^{\INM}}}(\chi^{\Dex^{\INS}})\circ_{V}\Sigma^{\Dex^{\INM}},\chi}
=\pos{(\Sigma/\Sig^{\INM}\circ\chi^{\Dex^{\INS}}\circ\Sigma_{\Dex^{\INM}})\circ_{V}\Sigma^{\Dex^{\INM}},\chi} \\
&\pos{\Sigma'^{\Dex^{\INM}},\id_{\Sigma'}}\circ\pos{\chi^{\Dex^{\INS'}_{\INM}},\chi}
=\pos{(\Sig^{\INM}(\chi)/\Sig^{\INS'})^{\Sig^{\INM}(\chi)_{\Dex^{\INS'}}}(\Sigma'^{\Dex^{\INM}})\circ_{V}\chi^{\Dex^{\INS'}_{\INM}},\chi} \\
&=\pos{(\Sig^{\INM}(\chi)/\Sig^{\INS'})^{\Sig^{\INM}(\chi)_{\Dex^{\INS'}}}(\Sigma'^{\Dex^{\INM}})\circ_{V}\Sig^{\INM}(\chi)^{\Dex^{\INS'}},\chi}
=\pos{(\Sig^{\INM}(\chi)/\Sig^{\INS'}\circ\Sigma'^{\Dex^{\INM}}\circ\Sig^{\INM}(\chi)_{\Dex^{\INS'}})\circ_{V}\Sig^{\INM}(\chi)^{\Dex^{\INS'}},\chi}
\end{align*}
It is clear that the right components are equal, so we compare the left.
Let $X\in\Sig^{\INM}(\Sigma)_{\Dex^{\INS'}}$.
\begin{align*}
&((\Sigma/\Sig^{\INM}\circ\chi^{\Dex^{\INS}}\circ\Sigma_{\Dex^{\INM}})\circ_{V}\Sigma^{\Dex^{\INM}})[X]
=\Sig^{\INM}(\chi^{\Dex^{\INS}}[\Sigma_{\Dex^{\INM}}(X)])\circ\Sigma^{\Dex^{\INM}}[X] \\
&((\Sig^{\INM}(\chi)/\Sig^{\INS'}\circ\Sigma'^{\Dex^{\INM}}\circ\Sig^{\INM}(\chi)_{\Dex^{\INS'}})\circ_{V}\Sig^{\INM}(\chi)^{\Dex^{\INS'}})[X]
=\Sigma'^{\Dex^{\INM}}[\Sig^{\INM}(\chi)_{\Dex^{\INS'}}(X)]
\circ\Sig^{\INM}(\chi)^{\Dex^{\INS'}}[X]
\end{align*}
\begin{equation}\label{eq:MDex-3}
\text{Therefore, }
\Sig^{\INM}(\chi^{\Dex^{\INS}}[\Sigma_{\Dex^{\INM}}(X)])\circ\Sigma^{\Dex^{\INM}}[X]
=
\Sigma'^{\Dex^{\INM}}[\Sig^{\INM}(\chi)_{\Dex^{\INS'}}(X)]\circ\Sig^{\INM}(\chi)^{\Dex^{\INS'}}[X]
\end{equation}
\end{proofcases}
In summary, $\Dex^{\INM}(\cdot):\Dex^{\INS}_{\INM}(\cdot)\Leftarrow\Dex^{\INS'}_{\INM}(\cdot)\in\newprod_{\cdot\in\Sig^{\INS}}(\Sig^{\INM}(\cdot)/\Sig^{\INS'})^{\Sig^{\INM}(\cdot)_{\Dex^{\INS'}}}$ is consists of;
\begin{itemize}
\item $\{\Sigma^{\Dex^{\INM}}[X]:\Sigma^{\Dex^{\INS'}_{\INM}}[X]\to\Sigma^{\Dex^{\INS}_{\INM}}[X]\,(:\Sig^{\INM}(\Sigma)^{\Dex^{\INS'}}[X]\to\Sig^{\INM}(\Sigma^{\Dex^{\INS}}[\Sigma_{\Dex^{\INM}}(X)]))\in\Sig^{\INS'}\}_{\Sigma\in|\Sig^{\INS}|,\,X\in|\Sig^{\INM}(\Sigma)_{\Dex^{\INS'}}|}$
\end{itemize}
which satisfies \ref{eq:MDex-1}, \ref{eq:MDex-2}, and \ref{eq:MDex-3}.
{\par}
$\Sigma^{\Dex^{\INS}_{\INM}}[X]$ and
$\Sigma^{\Dex^{\INS'}_{\INM}} [X]$ are
written in a complex manner,
but essentially the difference lies in whether we first translate $\Sigma$ using $\Sig^{\INM}$ and then extend it based on the $X$ waiting ahead,
or conversely, pull back $X$ using $\Sigma_{\Dex^{\INM}}$ and use it to extend $\Sigma$.
$\Sigma^{\Dex^{\INM}}[X]$ serves to connect the two.
The following diagram will help in understanding the operations concerning extensions that will appear from here on.
\begin{center}
$\xymatrix@R=30pt@C=70pt{
\Sig^{\INM}(\Sigma)^{\Dex^{\INS'}}[X]\ar[r]^{\Sigma^{\Dex^{\INM}}[X]}\ar@{<-}[dd]|{\Sig^{\INM}(\Sigma)^{\Dex^{\INS'}}(X)}&\Sig^{\INM}(\Sigma^{\Dex^{\INS}}[\Sigma_{\Dex^{\INM}}(X)])\ar@{<-}[ddl]|{\Sig^{\INM}(\Sigma^{\Dex^{\INS}}(\Sigma_{\Dex^{\INM}}(X)))}&\Sigma^{\Dex^{\INS}}[\Sigma_{\Dex^{\INM}}(X)]\ar@{<-}[dd]|{\Sigma^{\Dex^{\INS}}(\Sigma_{\Dex^{\INM}}(X))}\ar@{--}[l]\\
&&\ar@{}[l]|{\leftarrow\Sig^{\INM}\text{\rotatebox[origin=c]{180}{$\vdash$}}\Space\Space}\\
\Sig^{\INM}(\Sigma)&&\Sigma\ar@{--}[ll]
}$
\end{center}
For example,
$(\Sigma^{\Dex^{\INM}}*\Mod^{\INM}):=(\Mod^{\INS'}\circ_{H}(\Sigma^{\Dex^{\INM}}[\cdot])^{op})\circ_{V}(\Mod^{\INM}\circ_{H}(\Sigma^{\Dex^{\INS}}[\Sigma_{\Dex^{\INM}}(\cdot)])^{op})$
means a step-by-step translation of the $\Sigma^{\Dex^{\INS}}[\Sigma_{\Dex^{\INM}}(X)]$-model along the top two edges of the diagram.
\item We rarely use the condition on universal morphisms, but we assume it because it simplifies discussions involving substitution.
\item As with institution, this is a condition for extending the set of sentences. The commutativity of the diagrams in the definition can be seen as follows:
\begin{align*}
&\Mod^{\INM}(\Sigma)\circ\Mod^{\INS}(\Sigma^{\Dex^{\INS}}(\Sigma_{\Dex^{\INM}}(X))) \\
&=\Mod^{\INS'}(\Sig^{\INM}(\Sigma^{\Dex^{\INS}}(\Sigma_{\Dex^{\INM}}(X))))\circ\Mod^{\INM}(\Sigma^{\Dex^{\INS}}[\Sigma_{\Dex^{\INM}}(X)])
\text{ since } \Mod^{\INM} \text{ is natural } \\
&=\Mod^{\INS'}(\Sigma^{\Dex^{\INM}}[X]\circ\Sig^{\INM}(\Sigma)^{\Dex^{\INS'}}(X))\circ\Mod^{\INM}(\Sigma^{\Dex^{\INS}}[\Sigma_{\Dex^{\INM}}(X)])
\text{ by \ref{eq:MDex-1} } \\
&=\Mod^{\INS'}(\Sig^{\INM}(\Sigma)^{\Dex^{\INS'}}(X))\circ\Mod^{\INS'}(\Sigma^{\Dex^{\INM}}[X])\circ\Mod^{\INM}(\Sigma^{\Dex^{\INS}}[\Sigma_{\Dex^{\INM}}(X)])
\\
&=\Mod^{\INS'}(\Sig^{\INM}(\Sigma)^{\Dex^{\INS'}}(X))\circ(\Sigma^{\Dex^{\INM}}*\Mod^{\INM})(X)
\text{ by the definition } \\
\end{align*}
\end{enumerate}
\begin{definition}\label{def:DINS}
We define the category $\DINS$\footnote{$\DINS$ is not exactly a category, since it is as large as $\CAT$.} of $\Dex$-institutions as follows:
\begin{itemize}
\item $\INS\in|\DINS|$ iff $\INS$ is a $\Dex$-institution
\item $\INM:\INS\to\INS'\in\DINS$ iff $\INM:\INS\to\INS'$ is a $\Dex$-institution morphism
\item Let $\INS\in|\DINS|$. We define $\id_{\INS}$ as follows:
\begin{itemize}
\item $\Sig^{\id_{\INS}}:=\id_{\Sig^{\INS}}:\Sig^{\INS}\to\Sig^{\INS}\in\CAT$
\item $\Mod^{\id_{\INS}}:=\id_{\Mod^{\INS}}:\Mod^{\INS}\Rightarrow\Mod^{\INS}\,\,(=\Mod^{\INS}((\Sig^{\id_{\INS}})^{op}(\cdot)))\in\CAT^{(\Sig^{\INS})^{op}}$
\item $\Sen^{\id_{\INS}}:=\id_{\Sen^{\INS}}:\Sen^{\INS}\Leftarrow\Sen^{\INS}\,\,(=\Sen^{\INS}(\Sig^{\id_{\INS}}(\cdot)))\in\Set^{\Sig^{\INS}}$
\item $(\cdot)_{\Dex^{\id_{\INS}}}:=\id_{(\cdot)_{\Dex^{\INS}}}:(\cdot)_{\Dex^{\INS}}\Leftarrow(\cdot)_{\Dex^{\INS}}\,\,(=\Sig^{\id_{\INS}}(\cdot)_{\Dex^{\INS}})\in\Cat^{\Sig^{\INS}}$
\item $\Dex^{\id_{\INS}}(\cdot):=\id_{\Dex^{\INS}}:\Dex^{\INS}\Leftarrow\Dex^{\INS}$
\end{itemize}
\item Let $\INM:\INS\to\INS',\,\INM':\INS'\to\INS''\in\DINS$. We define $\INM'\circ\INM$ as follows:
\begin{itemize}
\item $\Sig^{\INM'\circ\INM}:=\Sig^{\INM'}\circ\Sig^{\INM}:\Sig^{\INS}\to\Sig^{\INS''}\in\CAT$
\item $\Mod^{\INM'\circ\INM}:=(\Mod^{\INM'}\circ_{H}\Sig^{\INM\,op})\circ_{V}\Mod^{\INM}:\Mod^{\INS}(\cdot)\Rightarrow\Mod^{\INS''}((\Sig^{\INM'\circ\INM})^{op}(\cdot))\in\CAT^{(\Sig^{\INS})^{op}}$
\item $\Sen^{\INM'\circ\INM}:=\Sen^{\INM}\circ_{V}(\Sen^{\INM'}\circ_{H}\Sig^{\INM}):\Sen^{\INS}(\cdot)\Leftarrow\Sen^{\INS''}(\Sig^{\INM'\circ\INM}(\cdot))\in\Set^{\Sig^{\INS}}$
\item $(\cdot)_{\Dex^{\INM'\circ\INM}}:=(\cdot)_{\Dex^{\INM}}\circ_{V}((\cdot)_{\Dex^{\INM'}}\circ_{H}\Sig^{\INM}):(\cdot)_{\Dex^{\INS}}\Leftarrow\Sig^{\INM'\circ\INM}(\cdot)_{\Dex^{\INS''}}\in\Cat^{\Sig^{\INS}}$
\item
$
\Dex^{\INM'\circ\INM}:=\Dex^{\INM}_{\INM'}\circ_{V}\Dex^{\INM'}_{\INM}
$
where
\begin{itemize}
\item $\Dex^{\INM}_{\INM'}:=\newprod_{\Sig^{\INS}}((\Sig^{\INM}(\cdot)/\Sig^{\INM'})^{\Sig^{\INM}(\cdot)_{\Dex^{\INM'}}})(\Dex^{\INM})$
$=\newcoprod_{\Sig^{\INS}}(\Sig^{\INM}(\cdot)/\Sig^{\INM'})^{\Sig^{\INM}(\cdot)_{\Dex^{\INM'}}}\circ\Dex^{\INM}$
\item[] $=\{\pos{
\{
(\Sig^{\INM}(\Sigma)/\Sig^{\INM'})(\Sigma^{\Dex^{\INM}}(\Sig^{\INM}(\Sigma)_{\Dex^{\INM'}}(X)))
\}_{X\in|\Sig^{\INM'\circ\INM}(\Sigma)_{\Dex^{\INS''}}|},\id_{\Sigma}
}\}_{\Sigma\in|\Sig^{\INS}|}$
\item $\Dex^{\INM'}_{\INM}:=\newprod_{\Sig^{\INM}}((\Sig^{\INM'}(\cdot)/\Sig^{\INS''})^{\Sig^{\INM'}(\cdot)_{\Dex^{\INS''}}})(\Dex^{\INM'})$
$=\{\pos{\Sig^{\INM}(\Sigma)^{\Dex^{\INM'}},\id_{\Sigma}}\}_{\Sigma\in|\Sig^{\INS}|}$
\item[] $=\{\pos{\{\Sig^{\INM}(\Sigma)^{\Dex^{\INM'}}(X)\}_{X\in|\Sig^{\INM'\circ\INM}(\Sigma)_{\Dex^{\INS''}}|},\id_{\Sigma}}\}_{\Sigma\in|\Sig^{\INS}|}$
\end{itemize}
\end{itemize}
\end{itemize}
\end{definition}
\begin{lemma}\label{lemma:S*M}
Corresponding to the case of $\Mod$, we can also define the following for $\Sen$:
\begin{itemize}
\item $(\Sigma^{\Dex^{\INM}}*\Mod^{\INM})(X')=\Mod^{\INS'}(\Sigma^{\Dex^{\INM}}[X'])\circ\Mod^{\INM}(\Sigma^{\Dex^{\INS}}[\Sigma_{\Dex^{\INM}}(X')])$
\item $(\Sen^{\INM}*\Sigma^{\Dex^{\INM}})(X'):=\Sen^{\INM}(\Sigma^{\Dex^{\INS}}[\Sigma_{\Dex^{\INM}}(X')])\circ\Sen^{\INS'}(\Sigma^{\Dex^{\INM}}[X'])$
\end{itemize}
For these, the following holds true:
\begin{itemize}
\item For each $\INS\in|\DINS|$, $\Sigma\in|\Sig^{\INS}|$, $X\in|\Sigma_{\Dex^{\INS}}|$.
The following hold true.
\begin{itemize}
\item $(\Sigma^{\Dex^{\id_{\INS}}}*\Mod^{\id_{\INS}})(X)=\id_{\Mod^{\INS}(\Sigma^{\Dex^{\INS}}[X])}$
\item $(\Sen^{\id_{\INS}}*\Sigma^{\Dex^{\id_{\INS}}})(X)=\id_{\Sen^{\INS}(\Sigma^{\Dex^{\INS}}[X])}$
\end{itemize}
\item For each $\INM:\INS\to\INS',\,\INM':\INS'\to\INS''\in\DINS$,
\begin{itemize}
\item $\Sigma\in|\Sig^{\INS}|$, $\Sigma':=\Sig^{\INM}(\Sigma)\in|\Sig^{\INS}|$, $\Sigma'':=\Sig^{\INM'}(\Sigma')\in|\Sig^{\INS}|$,
\item $X''\in|\Sigma''_{\Dex^{\INS''}}|$, $X':=\Sigma'_{\Dex^{\INM'}}(X'')\in|\Sigma'_{\Dex^{\INS'}}|$, $X:=\Sigma_{\Dex^{\INM}}(X')\in|\Sigma_{\Dex^{\INS}}|$.
\end{itemize}
The following hold true.
\begin{itemize}
\item $(\Sigma^{\Dex^{\INM'\circ\INM}}*\Mod^{\INM'\circ\INM})(X'')=(\Sigma'^{\Dex^{\INM'}}*\Mod^{\INM'})(X'')\circ(\Sigma^{\Dex^{\INM}}*\Mod^{\INM})(X')$
\item $(\Sen^{\INM'\circ\INM}*\Sigma^{\Dex^{\INM'\circ\INM}})(X'')=(\Sen^{\INM}*\Sigma^{\Dex^{\INM}})(X')\circ(\Sen^{\INM'}*\Sigma'^{\Dex^{\INM'}})(X'')$
\end{itemize}
\end{itemize}
\end{lemma}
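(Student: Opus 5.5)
The plan is to unfold the definitions of the identity and composite morphisms in Definition~\ref{def:DINS}. Then each claim reduces to functoriality of $\Mod$ and $\Sen$ together with naturality of $\Mod^{\INM}$ and $\Sen^{\INM}$. The $\Sen$-statements are formally dual to the $\Mod$-statements: the arrows reverse, and $\Mod^{\INS}$ (contravariant) is replaced by $\Sen^{\INS}$ (covariant). So I will carry out the $\Mod$ case in detail and then indicate the mirrored computation.

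For the identity case, I first note the following. Since $\Dex^{\id_{\INS}}=\id_{\Dex^{\INS}}$ is an identity in $\newprod_{\cdot\in\Sig^{\INS}}(\cdot/\Sig^{\INS})^{(\cdot)_{\Dex^{\INS}}}$, its components are identities, so $\Sigma^{\Dex^{\id_{\INS}}}[X]=\id_{\Sigma^{\Dex^{\INS}}[X]}$. Also, $\Sigma_{\Dex^{\id_{\INS}}}(X)=X$. Hence
$(\Sigma^{\Dex^{\id_{\INS}}}*\Mod^{\id_{\INS}})(X)=\Mod^{\INS}(\id)\circ\id_{\Mod^{\INS}(\Sigma^{\Dex^{\INS}}[X])}=\id$ by functoriality of $\Mod^{\INS}$. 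The $\Sen$ case is identical.

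For composition, I first compute the $X''$-component of $\Sigma^{\Dex^{\INM'\circ\INM}}$ from the formula $\Dex^{\INM'\circ\INM}=\Dex^{\INM}_{\INM'}\circ_{V}\Dex^{\INM'}_{\INM}$. Vertical composition in a list category is componentwise, and $\Cod$ commutes with the functors $\Sig^{\INM'}(\Sigma')/\Sig^{\INM'}$. From this I expect
\[\Sigma^{\Dex^{\INM'\circ\INM}}[X'']=\Sig^{\INM'}(\Sigma^{\Dex^{\INM}}[X'])\circ\Sigma'^{\Dex^{\INM'}}[X''],\]
with $X=\Sigma_{\Dex^{\INM'\circ\INM}}(X'')=\Sigma_{\Dex^{\INM}}(\Sigma'_{\Dex^{\INM'}}(X''))$ by the definition of $(\cdot)_{\Dex^{\INM'\circ\INM}}$. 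Substituting into $(\Sigma^{\Dex^{\INM'\circ\INM}}*\Mod^{\INM'\circ\INM})(X'')$ gives
\[\Mod^{\INS''}(\Sigma'^{\Dex^{\INM'}}[X''])\circ\Mod^{\INS''}(\Sig^{\INM'}(\Sigma^{\Dex^{\INM}}[X']))\circ\Mod^{\INM'}(\Sig^{\INM}(\Sigma^{\Dex^{\INS}}[X]))\circ\Mod^{\INM}(\Sigma^{\Dex^{\INS}}[X]).\]
This uses the definition of $\Mod^{\INM'\circ\INM}$ at the object $\Sigma^{\Dex^{\INS}}[X]$.

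Next I apply naturality of $\Mod^{\INM'}$ along the arrow $\Sigma^{\Dex^{\INM}}[X']:\Sigma'^{\Dex^{\INS'}}[X']\to\Sig^{\INM}(\Sigma^{\Dex^{\INS}}[X])$ of $\Sig^{\INS'}$. This swaps the middle two factors into $\Mod^{\INM'}(\Sigma'^{\Dex^{\INS'}}[X'])\circ\Mod^{\INS'}(\Sigma^{\Dex^{\INM}}[X'])$. Regrouping then yields exactly $(\Sigma'^{\Dex^{\INM'}}*\Mod^{\INM'})(X'')\circ(\Sigma^{\Dex^{\INM}}*\Mod^{\INM})(X')$. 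The $\Sen$ version is the same computation, using naturality of $\Sen^{\INM'}$ in the opposite direction and the order of factors reversed.

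The main obstacle is the first step of the composition case: correctly extracting $\Sigma^{\Dex^{\INM'\circ\INM}}[X'']$ from the list-category formulas for $\Dex^{\INM}_{\INM'}$ and $\Dex^{\INM'}_{\INM}$. This requires tracking the $\newprod$-action on morphisms from Definition~\ref{def:lists} and Definition~\ref{def:lists-nat}, together with the $\Cod$-projection of the coslice functor $\Sig^{\INM}(\Sigma)/\Sig^{\INM'}$. After that, the remainder is a routine diagram chase with one use of naturality.
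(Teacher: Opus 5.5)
Your proposal is correct and follows the paper's own argument: the paper obtains $\Sigma^{\Dex^{\INM'\circ\INM}}[X'']=\Sig^{\INM'}(\Sigma^{\Dex^{\INM}}[X'])\circ\Sigma'^{\Dex^{\INM'}}[X'']$ from the definition of $\Dex^{\INM'\circ\INM}$, splits $\Mod^{\INM'\circ\INM}$ (resp.\ $\Sen^{\INM'\circ\INM}$) into its two components, and swaps the middle factors with one naturality square of $\Mod^{\INM'}$ (resp.\ $\Sen^{\INM'}$) along $\Sigma^{\Dex^{\INM}}[X']$, while the identity case reduces to identities exactly as you describe. The only slip is notational: the coslice functor that $\Cod$ passes through is $\Sigma'/\Sig^{\INM'}$ (with $\Sigma'=\Sig^{\INM}(\Sigma)$), not $\Sig^{\INM'}(\Sigma')/\Sig^{\INM'}$.
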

Since $\DINS$ is defined as a category, the introduction of compound sentences can be defined as a functor on $\DINS$.
We use the notation $\FOL:\DINS\to\DINS$, but this can also be applied to logic with second-order variables.
\begin{definition}\label{def:FOL:}
Let $\alpha$ be an infinite cardinal.
We define the functor $\FOL^{{\relax}}_{\alpha}:\DINS\to\DINS$ as follows.
\begin{itemize}
\item Let $\INS\in|\DINS|$. We define $\FOL^{{\relax}}_{\alpha}(\INS)$ as follows:
\begin{itemize}
\item Same as $\INS$ except for $\Sen$ and $\models$.
\item We define $\Sen^{\FOL^{{\relax}}_{\alpha}(\INS)}$ as follows:
\begin{itemize}
\item for $\Sigma\in|\Sig^{\INS}|(=|\Sig^{\FOL^{{\relax}}_{\alpha}(\INS)}|)$, we define $\Sen^{\FOL^{{\relax}}_{\alpha}(\INS)}(\Sigma)$ recursively as follows:
\begin{itemize}
\item $\pos{\phi,atomic}\in\Sen^{\FOL^{{\relax}}_{\alpha}(\INS)}(\Sigma)$ if $\phi\in\Sen^{\INS}(\Sigma)$
\footnote{"atomic" is a label that allows us to distinguish atomic formulas symbolically. It may be omitted if it is clear from the context.}
\item $\neg\phi:=\pos{\phi,\neg}\in\Sen^{\FOL^{{\relax}}_{\alpha}(\INS)}(\Sigma)$ if $\phi\in\Sen^{\FOL^{{\relax}}_{\alpha}(\INS)}(\Sigma)$
\item $\vee\phi:=\pos{\phi,\vee}\in\Sen^{\FOL^{{\relax}}_{\alpha}(\INS)}(\Sigma)$ if $\phi:N\to\Sen^{\FOL^{{\relax}}_{\alpha}(\INS)}(\Sigma)$, and $N\in\alpha$
\item $\Exists{X}\phi:=\pos{\pos{\phi,X},\exists}\in\Sen^{\FOL^{{\relax}}_{\alpha}(\INS)}(\Sigma)$ if $X\in|\Sigma_{\Dex^{\INS}}|$, and $\phi\in\Sen^{\FOL^{{\relax}}_{\alpha}(\INS)}(\Sigma^{\Dex^{\INS}}[X])$\footnote{As we will explain in the proof of well-definedness, we can also introduce $\exists^{\kappa}$ (there exist $\kappa$ instances).}
\end{itemize}
\item for $\chi:\Sigma\to\Sigma'\in\Sig^{\INS}(=\Sig^{\FOL^{{\relax}}_{\alpha}(\INS)})$, we define $\Sen^{\FOL^{{\relax}}_{\alpha}(\INS)}(\chi):\Sen^{\FOL^{{\relax}}_{\alpha}(\INS)}(\Sigma)\to\Sen^{\FOL^{{\relax}}_{\alpha}(\INS)}(\Sigma')$ recursively as follows:
\begin{itemize}
\item $\Sen^{\FOL^{{\relax}}_{\alpha}(\INS)}(\chi)(\pos{\phi,atomic}):=\pos{\Sen^{\INS}(\chi)(\phi),atomic}$
\item $\Sen^{\FOL^{{\relax}}_{\alpha}(\INS)}(\chi)(\neg\phi):=\neg\Sen^{\FOL^{{\relax}}_{\alpha}(\INS)}(\chi)(\phi)$
\item $\Sen^{\FOL^{{\relax}}_{\alpha}(\INS)}(\chi)(\vee\phi):=\vee\pos{\Sen^{\FOL^{{\relax}}_{\alpha}(\INS)}(\chi)(\phi_{i})\mid i\in\Dom(\phi)}$
\item $\Sen^{\FOL^{{\relax}}_{\alpha}(\INS)}(\chi)(\Exists{X}\phi):=\Exists{\chi_{\Dex^{\INS}}(X)}\Sen^{\FOL^{{\relax}}_{\alpha}(\INS)}(\chi^{\Dex^{\INS}}[X])(\phi)$
\end{itemize}
\end{itemize}
\item We define ${\models^{\FOL^{{\relax}}_{\alpha}(\INS)}}$ as follows:
\begin{itemize}
\item $\A\models^{\FOL^{{\relax}}_{\alpha}(\INS)}\pos{\phi,atomic}$ if $\A\models^{\INS}\phi$
\item $\A\models^{\FOL^{{\relax}}_{\alpha}(\INS)}\neg\phi$ if $\A\not\models^{\FOL^{{\relax}}_{\alpha}(\INS)}\phi$
\item $\A\models^{\FOL^{{\relax}}_{\alpha}(\INS)}\vee\phi$ if $\A\models^{\FOL^{{\relax}}_{\alpha}(\INS)}\phi_{i}$ for some $i\in\Dom(\phi)$
\item $\A\models^{\FOL^{{\relax}}_{\alpha}(\INS)}\Exists{X}\phi$ if $\hat{\A}\models^{\FOL^{{\relax}}_{\alpha}(\INS)}\phi$ for some $\Sigma^{\Dex^{\INS}}(X)$-expansion $\hat{\A}$ of $\A$
\end{itemize}
\end{itemize}
\item Let $\INM:\INS\to\INS'\in\DINS$. We define $\FOL^{{\relax}}_{\alpha}(\INM)$ as follows:
\begin{itemize}
\item Same as $\INM:\INS\to\INS'$ except for $\Sen$.
\item We define $\Sen^{\FOL^{{\relax}}_{\alpha}(\INM)}:\Sen^{\FOL^{{\relax}}_{\alpha}(\INS)}\Leftarrow\Sen^{\FOL^{{\relax}}_{\alpha}(\INS')}\circ\Sig^{\INM}$ as follows:
\begin{itemize}
\item Let $\Sigma\in|\Sig^{\INS}|$. We define $\Sen^{\FOL^{{\relax}}_{\alpha}(\INM)}(\Sigma):\Sen^{\FOL^{{\relax}}_{\alpha}(\INS)}(\Sigma)\gets\Sen^{\FOL^{{\relax}}_{\alpha}(\INS')}(\Sig^{\INM}(\Sigma))\in\Set$ as follows:
\begin{itemize}
\item $\Sen^{\FOL^{{\relax}}_{\alpha}(\INM)}(\Sigma)(\pos{\phi,atomic}):=\pos{\Sen^{\INM}(\Sigma)(\phi),atomic}$
\item $\Sen^{\FOL^{{\relax}}_{\alpha}(\INM)}(\Sigma)(\neg\phi):=\neg\Sen^{\FOL^{{\relax}}_{\alpha}(\INM)}(\Sigma)(\phi)$
\item $\Sen^{\FOL^{{\relax}}_{\alpha}(\INM)}(\Sigma)(\vee\phi):=\vee\pos{\Sen^{\FOL^{{\relax}}_{\alpha}(\INM)}(\Sigma)(\phi_{i})\mid i\in\Dom(\phi)}$
\item $\Sen^{\FOL^{{\relax}}_{\alpha}(\INM)}(\Sigma)(\Exists{X}\phi):=\Exists{\Sigma_{\Dex^{\INM}}(X)}(\Sen^{\FOL^{{\relax}}_{\alpha}(\INM)}*\Sigma^{\Dex^{\INM}})(X)(\phi)$
\\
where $(\Sen^{\FOL^{{\relax}}_{\alpha}(\INM)}*\Sigma^{\Dex^{\INM}})(\cdot):=(\Sen^{\FOL^{{\relax}}_{\alpha}(\INM)}\circ_{H}(\Sigma^{\Dex^{\INS}}[\Sigma_{\Dex^{\INM}}(\cdot)]))\circ_{V}(\Sen^{\FOL^{{\relax}}_{\alpha}(\INS')}\circ_{H}(\Sigma^{\Dex^{\INM}}[\cdot]))$
\\
i.e. $(\Sen^{\FOL^{{\relax}}_{\alpha}(\INM)}*\Sigma^{\Dex^{\INM}})(X):=\Sen^{\FOL^{{\relax}}_{\alpha}(\INM)}(\Sigma^{\Dex^{\INS}}[\Sigma_{\Dex^{\INM}}(X)])\circ\Sen^{\FOL^{{\relax}}_{\alpha}(\INS')}(\Sigma^{\Dex^{\INM}}[X])$
\begin{center}\small
\def\objectstyle{\scriptstyle}
\def\labelstyle{\scriptscriptstyle}
$\xymatrix@R=28pt@C=42pt{
&&&\ar@{}[d]|{\Downarrow\,(\Sigma^{\Dex^{\INM}}[\cdot])}&&&&&\\
\Set&
\ar[l]^{\Sen^{\FOL^{{\relax}}_{\alpha}(\INS')}}(\Sig^{\INS'})&
\ar[l]^{(\Sig^{\INM})}(\Sig^{\INS})\ar@/^50pt/[ll]^{\Sen^{\FOL^{{\relax}}_{\alpha}(\INS)}}&&
\ar[ll]^{(\Sigma^{\Dex^{\INS}}[\cdot])}(\Sigma_{\Dex^{\INS}})&
\ar[l]^{(\Sigma_{\Dex^{\INM}})}(\Sig^{\INM}(\Sigma)_{\Dex^{\INS'}})\ar@/_50pt/[llll]_{(\Sig^{\INM}(\Sigma)^{\Dex^{\INS'}}[\cdot])} &&&\\
&\ar@{}[u]|{\Downarrow\Sen^{\FOL^{{\relax}}_{\alpha}(\INM)}}&&&&&&&
}$
\end{center}
\end{itemize}
\end{itemize}
\end{itemize}
\end{itemize}
\end{definition}
Later, when using quantification in proof systems, the concept of substitution will be necessary.
For now, we will define substitution simply as signature morphism.
\begin{definition}[Substitutions]
Let $\chi:\Sigma\to\Sigma'\in\Sig^{\INS}$.
Here, $\chi$-substitution simply means a morphism $\theta\in\Sig^{\INS}$ such that $\theta\circ\chi=\id_{\Sigma}$.
\end{definition}
\begin{example}\label{exa:substitution}
From the properties of universal morphisms, for
$\id_{\Sigma^{\Dex^{\INS}}[X]}:\Sigma^{\Dex^{\INS}}(X)\to\Sigma^{\Dex^{\INS}}(X)$ we can get the unique substitution $\theta$ which satisfies the following commutativity
\begin{center}
$\xymatrix@R=50pt@C=100pt{
\Sigma^{\Dex^{\INS}}[X]\ar[r]|{\Sigma^{\Dex^{\INS}}(X)^{\Dex^{\INS}}[X]}\ar[dr]|{\id_{\Sigma^{\Dex^{\INS}}[X]}}&\Sigma^{\Dex^{\INS}}[X]^{\Dex^{\INS}}[X']\ar[d]|{\theta}\\
\Sigma\ar[u]|{\Sigma^{\Dex^{\INS}}(X)}\ar[r]|{\Sigma^{\Dex^{\INS}}(X)}&\Sigma^{\Dex^{\INS}}[X]
}$
\end{center}
where $X':=\Sigma^{\Dex^{\INS}}(X)_{\Dex^{\INS}}(X)$.
Intuitively, $\theta$ maps $X'$ to $X$.
\end{example}

\begin{lemma}\label{lemma:translation_of_substitution}
Using the properties of universal morphisms, we can obtain a first-order "translation" of substitution.
\begin{itemize}
\item Let $\INS\in|\DINS|$. For all $\chi:\Sigma\to\Sigma'\in\Sig^{\INS}$, $X\in|\Sigma_{\Dex^{\INS}}|$, and $\Sigma^{\Dex^{\INS}}(X)$-substitution $\theta:\Sigma^{\Dex^{\INS}}[X]\to\Sigma$,
there is unique $\Sigma'^{\Dex^{\INS}}(\chi_{\Dex^{\INS}}(X))$-substitution $\theta':\Sigma'^{\Dex^{\INS}}[\chi_{\Dex^{\INS}}(X)]\to\Sigma$ which satisfies
the commutativity $\theta'\circ\chi^{\Dex^{\INS}}[X]=\chi\circ\theta$.
\begin{center}
$\xymatrix@R=50pt@C=100pt{
\Sigma^{\Dex^{\INS}}[X]\ar[r]|{\chi^{\Dex^{\INS}}[X]}\ar[d]|{\theta}&\Sigma'^{\Dex^{\INS}}[\chi_{\Dex^{\INS}}(X)]\ar[d]|{\theta'}\\
\Sigma\ar[r]|{\chi}&\Sigma'
}$
\end{center}
\item Let $\INM:\INS\to\INS'\in\DINS$.
For all $\Sigma\in|\Sig^{\INS}|$, $X\in\Sig^{\INM}(\Sigma)_{\Dex^{\INS'}}$, and $\Sig^{\INM}(\Sigma)^{\Dex^{\INS'}}(X)$-substitution $\theta:\Sig^{\INM}(\Sigma)^{\Dex^{\INS'}}[X]\to\Sig^{\INM}(\Sigma)$,
there is unique $\Sigma^{\Dex^{\INS}}(\Sigma_{\Dex^{\INM}}(X))$-substitution $\theta':\Sigma^{\Dex^{\INS}}[\Sigma_{\Dex^{\INM}}(X)]\to\Sigma$ which satisfies
the commutativity $\Sig^{\INM}(\theta')\circ\Sigma^{\Dex^{\INM}}[X]=\theta$.
\begin{center}
$\xymatrix@R=30pt@C=70pt{
\Sig^{\INM}(\Sigma)^{\Dex^{\INS'}}[X]\ar[r]^{\Sigma^{\Dex^{\INM}}[X]}\ar[dd]_{\theta}&\Sig^{\INM}(\Sigma^{\Dex^{\INS}}[\Sigma_{\Dex^{\INM}}(X)])\ar[ddl]|{\Sig^{\INM}(\theta')}&\Sigma^{\Dex^{\INS}}[\Sigma_{\Dex^{\INM}}(X)]\ar[dd]^{\theta'}\ar@{--}[l]\\
&&\ar@{}[l]|{\leftarrow\Sig^{\INM}\text{\rotatebox[origin=c]{180}{$\vdash$}}\Space\Space}\\
\Sig^{\INM}(\Sigma)&&\Sigma\ar@{--}[ll]
}$
\end{center}
\end{itemize}
\end{lemma}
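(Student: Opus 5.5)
The plan is to obtain both parts directly from the universal-morphism conditions in the definitions of $\Dex$-institution and $\Dex$-institution morphism. In each case I choose the right object of the appropriate coslice category and read the resulting mediating morphism as a substitution. The point is that ``$\theta'$ is a substitution'' is exactly the statement that $\theta'$ is a morphism in a coslice category into the identity object. The required commutativity is exactly the factorization equation of the universal morphism. (In the first item the codomain of $\theta'$ should be $\Sigma'$, not $\Sigma$, as the diagram shows.)

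\emph{First item.} Take the object $\id_{\Sigma'}\in|\Sigma'/\Sig^{\INS}|$. Then $(\chi/\Sig^{\INS})(\id_{\Sigma'})=\chi$. Consider $\chi\circ\theta:\Sigma^{\Dex^{\INS}}[X]\to\Sigma'$. Since $\theta\circ\Sigma^{\Dex^{\INS}}(X)=\id_{\Sigma}$, we have $(\chi\circ\theta)\circ\Sigma^{\Dex^{\INS}}(X)=\chi$. Hence $\chi\circ\theta$ is a morphism $\Sigma^{\Dex^{\INS}}(X)\to(\chi/\Sig^{\INS})(\id_{\Sigma'})$ in $\Sigma/\Sig^{\INS}$.

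By the universal-morphism condition of Definition~\ref{def:dins}, applied to $\pos{\chi^{\Dex^{\INS}}(X),\Sigma'^{\Dex^{\INS}}(\chi_{\Dex^{\INS}}(X))}$, there is a unique morphism $\theta':\Sigma'^{\Dex^{\INS}}(\chi_{\Dex^{\INS}}(X))\to\id_{\Sigma'}$ in $\Sigma'/\Sig^{\INS}$ such that $(\chi/\Sig^{\INS})(\theta')\circ\chi^{\Dex^{\INS}}(X)=\chi\circ\theta$. Unwinding the coslice structure gives two facts:
\begin{itemize}
\item being a coslice morphism means $\theta'\circ\Sigma'^{\Dex^{\INS}}(\chi_{\Dex^{\INS}}(X))=\id_{\Sigma'}$, i.e.\ $\theta'$ is a $\Sigma'^{\Dex^{\INS}}(\chi_{\Dex^{\INS}}(X))$-substitution;
\item $\chi/\Sig^{\INS}$ acts as the identity on underlying morphisms of $\Sig^{\INS}$, so the factorization reads $\theta'\circ\chi^{\Dex^{\INS}}[X]=\chi\circ\theta$.
\end{itemize}
For uniqueness, any substitution $\theta''$ with $\theta''\circ\chi^{\Dex^{\INS}}[X]=\chi\circ\theta$ is, by the same reading, a coslice morphism with the same factorization. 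So $\theta''=\theta'$.

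\emph{Second item.} This is the same argument, using the universal-morphism condition in the definition of $\Dex$-institution morphism. Take $\id_{\Sigma}\in|\Sigma/\Sig^{\INS}|$. Since $\Sig^{\INM}$ is a functor, $(\Sigma/\Sig^{\INM})(\id_{\Sigma})=\id_{\Sig^{\INM}(\Sigma)}$. The hypothesis that $\theta$ is a $\Sig^{\INM}(\Sigma)^{\Dex^{\INS'}}(X)$-substitution says precisely that $\theta:\Sig^{\INM}(\Sigma)^{\Dex^{\INS'}}(X)\to(\Sigma/\Sig^{\INM})(\id_{\Sigma})$ is a morphism in $\Sig^{\INM}(\Sigma)/\Sig^{\INS'}$.

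Universality of $\pos{\Sigma^{\Dex^{\INM}}(X),\Sigma^{\Dex^{\INS}}(\Sigma_{\Dex^{\INM}}(X))}$ then gives a unique $\theta':\Sigma^{\Dex^{\INS}}(\Sigma_{\Dex^{\INM}}(X))\to\id_{\Sigma}$ in $\Sigma/\Sig^{\INS}$ with $(\Sigma/\Sig^{\INM})(\theta')\circ\Sigma^{\Dex^{\INM}}(X)=\theta$. Read in $\Sig$, this says:
\begin{itemize}
\item $\theta'\circ\Sigma^{\Dex^{\INS}}(\Sigma_{\Dex^{\INM}}(X))=\id_{\Sigma}$, so $\theta'$ is a substitution;
\item $\Sig^{\INM}(\theta')\circ\Sigma^{\Dex^{\INM}}[X]=\theta$.
\end{itemize}
Uniqueness among substitutions follows exactly as in the first item.

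\emph{Main obstacle.} There is no real mathematical obstacle. The only care needed is bookkeeping in the coslice notation: checking that the functors $\chi/\Sig^{\INS}$ and $\Sigma/\Sig^{\INM}$ act on underlying morphisms as stated (identity, respectively $\Sig^{\INM}$). One also has to distinguish the coslice objects $\Sigma^{\Dex}(X)$ from their codomains $\Sigma^{\Dex}[X]$, using the convention $\chi^{\Dex}(X)=\chi^{\Dex}[X]$ from the footnote to Definition~\ref{def:dins}.
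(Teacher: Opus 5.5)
Your proposal is correct and matches the paper's proof. In both items you take the identity object ($\id_{\Sigma'}$, resp.\ $\id_{\Sigma}$) of the relevant coslice category, note that $\chi\circ\theta$ (resp.\ $\theta$) is a coslice morphism into its image, and read the mediating morphism given by the universal-morphism condition as the required substitution. Your remark that the codomain of $\theta'$ in the first item should be $\Sigma'$ is also right, and your explicit uniqueness check spells out what the paper compresses into ``in other words''.
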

\begin{remark}
Since Substitution can be translated, we can also use it like $\theta\equiv_{X}\theta'$ instead of an equality, for example.
The star $*$ in \cite{go-icalp24} is also a type of expression that includes quantification, as it asserts the existence of a finite path semantically.
A similar expression can be achieved by using a two-variable sentence instead of an action.
The problem is that $*$ can only be applied to binary relations where the domain and codomain coincide, but $\Dex$-institution does not have a "sort" structure.
However, this becomes possible by duplicating the variable.
For example, let $\Sigma\in|\Sig^{\INS_{\FOL_{0}}}|$, $X=\{x\}\,(x=\pos{0,s,\Sigma})$.
$\Sigma^{\Dex}[X]^{\Dex}[X']\,(X'=(\Sigma^{\Dex}(X))_{\Dex}(X))$ has two variables that have the same sort.
Therefore, in the general case as well, a similar expression is possible by using the elements of $\Sen(\Sigma^{\Dex}[X]^{\Dex}[X'])$.
\end{remark}

\section{Proof systems and completeness}\label{sec:Pc}
\begin{definition}[Sequent]
A sequent of $\INS$ is a family of relations $\vdash=\{\vdash_{\Sigma}\subseteq\mathcal{P}(\Sen^{\INS}(\Sigma))\times\mathcal{P}(\Sen^{\INS}(\Sigma))\}_{\Sigma\in|\Sig^{\INS}|}$ that satisfies the following "structural rules".
\par\noindent 
\begin{center}\small
\textbf{The structural rules}\\
\begin{tabular}{lll}
\\
&$\Modify(\chi)~\proofrule{\,\Gamma\vdash_{\Sigma}\Delta\,}{\,\Gamma'\vdash_{\Sigma'}\Delta'\,}$
$\Space\chi:\Sigma\to\Sigma'\in\Sig^{\INS},\,\,\Gamma'\supseteq\Sen^{\INS}(\chi)(\Gamma),\,\,\Delta'\supseteq\Sen^{\INS}(\chi)(\Delta)$\\
&\\
&$\Init_{\psi}~\proofrule{}{\,\{\psi\}\cup\Gamma\vdash_{\Sigma}\Delta\cup\{\psi\}\,}$
$\Space\Cut^{\psi}~\proofrule{\,\Gamma\vdash_{\Sigma}\Delta\cup\{\psi\}\,\Space\,\{\psi\}\cup\Gamma'\vdash_{\Sigma}\Delta'\,}{\,\Gamma\cup\Gamma'\vdash_{\Sigma}\Delta\cup\Delta'\,}$
\\
\end{tabular}
\end{center}
\medskip
To shorten the notation, $\Gamma\cup\Gamma'$ may be abbreviated as $\Gamma\,\Gamma'$, and $\{\phi\}$ as $\phi$.
Also, $\Gamma\,\Gamma\vdash_{\Sigma}\Delta\,\Delta'$ may be written as $\Gamma'\vdash_{\Gamma\mid\Sigma\mid\Delta}\Delta'$.
\end{definition}

\begin{definition}[Semantic sequent]\label{def:Semantic-sequent}
The semantic sequent $\vDash^{\INS}$ is defined as follows:
\[
\Gamma\vDash^{\INS}\Delta\text{ iff}\text{ there is no model }\A\text{ such that }\A\models^{\INS}\gamma\text{ for all }\gamma\in\Gamma\text{ and }\A\not\models^{\INS}\gamma\text{ for all }\gamma\in\Delta
\]
In standard notation, $\Gamma\vDash\Delta$ is often used with meanings like $\wedge\Gamma\Rightarrow\wedge\Delta$, but here, in accordance with the term sequent, it is interpreted as $\wedge\Gamma\Rightarrow\vee\Delta$.
\end{definition}

\begin{definition}[Soundness, Completeness, and Compactness]
Let $\vdash$ be the sequent of $\INS$.
\begin{itemize}
\item
$\vdash$ is sound (complete)
if ${\vdash}\subseteq{\vDash^{\INS}}\,({\vdash}\supseteq{\vDash^{\INS}})$ holds true.
\item Let $\Modify(\INS)$ be the category defined as follows:
\begin{itemize}
\item $T\in|\Modify(\INS)|$ iff $T=\pos{\Gamma,\Sigma,\Delta}$ where $\Gamma,\Delta\subseteq\Sen^{\INS}(\Sigma)$
\item $\chi:\pos{\Gamma,\Sigma,\Delta}\to\pos{\Gamma',\Sigma',\Delta'}\in\Modify(\INS)$ iff $\chi:\Sigma\to\Sigma'\in\Sig^{\INS}$ such that $\Sen^{\INS}(\chi)(\Gamma)\subseteq\Gamma'$ and $\Sen^{\INS}(\chi)(\Delta)\subseteq\Delta'$.
\end{itemize}
$\vdash$ is compact if
each directed diagram $\pos{\pos{\Gamma_{i},\Sigma_{i},\Delta_{i}}\in\Modify(\INS)}_{i\in\pos{J,\leq}}$ with a co-limit $\pos{\Gamma,\Sigma,\Delta}$ preserve consistency i.e. if $\Gamma_{i}\not\vdash_{\Sigma_{i}}\Delta_{i}$ for each $i\in J$, then $\Gamma\not\vdash_{\Sigma}\Delta$ holds true.
\end{itemize}
\end{definition}

\begin{definition}[Syntactic sequent]\label{def:Syntactic-sequent}
We define the sequent $\vdash^{\alpha\,\INS}$ of $\FOL_{\alpha}(\INS)$ as the family of smallest relations satisfying the following:
\par\noindent 
\begin{center}
\textbf{The proof rule for atomic sentences}\\
\begin{tabular}{c}
\\
$\Atom\,\proofrule{}{\,\OL{\Gamma_{b}}\vdash_{\Gamma\mid\Sigma\mid\Delta}\OL{\Delta_{b}}\,}\,(\Gamma_{b}\vdash^{\INS}_{\Sigma}\Delta_{b})$
\end{tabular}
\end{center}
\medskip
\par\noindent 
\begin{center}
\textbf{The proof rules for complex sentences}\\
\begin{tabular}{ll}
\\
$\neg_{L}~\proofrule{\,\vdash_{\Gamma\mid\Sigma\mid\Delta}\phi\,}{\,\neg\phi\vdash_{\Gamma\mid\Sigma\mid\Delta}\,}$
&
$\neg_{R}~\proofrule{\,\phi\vdash_{\Gamma\,\Sigma\mid\Delta}\,}{\,\vdash_{\Gamma\mid\Sigma\mid\Delta}\neg\phi\,}$
\\
\\
${\vee}_{L}~\proofrule{\,\pos{\,\UL{\phi_{n}}\vdash_{\Gamma_{n}\mid\Sigma\mid\Delta_{n}}}_{n\in\Dom(\phi)}\,}{\,\OL{\vee\phi}\vdash_{\cup_{n\in\Dom(\phi)}\Gamma_{n}\mid\Sigma\mid\cup_{n\in\Dom(\phi)}\Delta_{n}}\,}$
&
${\vee}_{R}^{n}~\proofrule{\,\vdash_{\Gamma\mid\Sigma\mid\Delta}\UL{\phi_{n}}\,}{\,\vdash_{\Gamma\mid\Sigma\mid\Delta}\OL{\vee\phi}\,}\,(n\in\Dom(\phi))$
\\
\\
$\exists_{L}~\proofrule{\,\UL{\phi}\vdash_{\Sigma(X)(\Gamma)\mid\Sigma^{\Dex}[X]\mid\Sigma(X)(\Delta)}\,}{\,\OL{\Exists{X}\phi}\vdash_{\Gamma\mid\Sigma\mid\Delta}\,}$
&
$\exists_{R}^{\theta}~\proofrule{\,\vdash_{\Gamma\mid\Sigma\mid\Delta}\UL{\Sen^{\FOL_{\alpha}(\INS)}(\theta)(\phi)}\,}{\,\vdash_{\Gamma\mid\Sigma\mid\Delta}\OL{\Exists{X}\phi}\,}\,(\theta\text{ is a }\Sigma^{\Dex}(X)\text{-substitution})$
\\
\end{tabular}
\end{center}
\medskip
\end{definition}

\begin{lemma}[Soundness]\label{lemma:syn-soundness}
For each $\INS\in\DINS$ and $\Sigma\in|\Sig^{\INS}|$,
$\vdash^{\alpha\,\INS}_{\Sigma}\subseteq\vDash^{\FOL^{{\relax}}_{\alpha}(\INS)}_{\Sigma}$.
\end{lemma}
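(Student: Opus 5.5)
Since $\vdash^{\alpha\,\INS}$ is defined as the smallest family of relations closed under the structural rules, $\Atom$, and the rules for complex sentences, it suffices to show that $\vDash^{\FOL_{\alpha}(\INS)}$ is itself a sequent (closed under $\Modify$, $\Init$, $\Cut$) and is closed under each proof rule of Definition~\ref{def:Syntactic-sequent}. Minimality then gives $\vdash^{\alpha\,\INS}_{\Sigma}\subseteq\vDash^{\FOL_{\alpha}(\INS)}_{\Sigma}$. Before the rule-by-rule check, I will first prove that $\FOL_{\alpha}(\INS)$ satisfies the satisfaction condition, i.e.\ $\Mod(\chi)(\A')\models\gamma$ iff $\A'\models\Sen^{\FOL_{\alpha}(\INS)}(\chi)(\gamma)$. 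This is done by induction on $\gamma$. The atomic case is the satisfaction condition of $\INS$, and the $\neg$ and $\vee$ cases are immediate. For the $\exists$ case I use the unique-expansion condition of Definition~\ref{def:dins}: expansions $\hat{\A}$ of $\Mod(\chi)(\A')$ along $\Sigma^{\Dex}(X)$ correspond bijectively to expansions $\hat{\A}'$ of $\A'$ along $\Sigma'^{\Dex}(\chi_{\Dex}(X))$ via $\Mod(\chi^{\Dex}[X])$. The induction hypothesis for $\chi^{\Dex}[X]$ then closes the case. This is presumably part of the well-definedness of $\FOL_{\alpha}$, which I may take as given.

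Next come the structural rules. $\Init$ and $\Cut$ are pure propositional reasoning on a single model: for $\Cut$, a countermodel $\A$ of the conclusion either satisfies $\psi$, and so refutes the right premise, or does not, and so refutes the left premise. For $\Modify(\chi)$, suppose $\A'$ is a countermodel of $\Gamma'\vDash_{\Sigma'}\Delta'$. By the satisfaction condition, $\Mod(\chi)(\A')$ satisfies all of $\Gamma$ and none of $\Delta$, since $\Sen(\chi)(\Gamma)\subseteq\Gamma'$ and $\Sen(\chi)(\Delta)\subseteq\Delta'$. This contradicts the premise. $\Atom$ is sound once we assume the base sequent $\vdash^{\INS}$ is sound for $\INS$; the intended reading is that $\Gamma_b\vdash^{\INS}_{\Sigma}\Delta_b$ is a valid atomic sequent. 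The extra contexts $\Gamma,\Delta$ only weaken, and atomic satisfaction coincides with $\models^{\INS}$. The rules $\neg_L,\neg_R,\vee_L,\vee_R^n$ follow directly from the clauses of $\models^{\FOL_{\alpha}(\INS)}$, by taking a countermodel of the conclusion and exhibiting it as a countermodel of some premise.

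The quantifier rules are the main point. For $\exists_L$, let $\A$ be a countermodel of the conclusion. Then $\A\models\Exists{X}\phi$, so some $\Sigma^{\Dex}(X)$-expansion $\hat{\A}$ satisfies $\phi$. By the satisfaction condition along $\Sigma^{\Dex}(X)$, $\hat{\A}$ still satisfies the translated $\Gamma$ and falsifies the translated $\Delta$, because $\Mod(\Sigma^{\Dex}(X))(\hat{\A})=\A$. Hence $\hat{\A}$ is a countermodel of the premise. For $\exists^{\theta}_R$, let $\A$ falsify the conclusion, and consider the premise, which by hypothesis is semantically valid. Then either $\A$ fails the context or $\A\models\Sen(\theta)(\phi)$. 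In the latter case I set $\hat{\A}:=\Mod(\theta)(\A)$. Since $\theta\circ\Sigma^{\Dex}(X)=\id_{\Sigma}$ and $\Mod$ is functorial, $\Mod(\Sigma^{\Dex}(X))(\hat{\A})=\A$, so $\hat{\A}$ is a $\Sigma^{\Dex}(X)$-expansion of $\A$. By the satisfaction condition along $\theta$, $\hat{\A}\models\phi$, so $\A\models\Exists{X}\phi$, a contradiction.

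I expect the main obstacle to be bookkeeping rather than substance. The $\exists$ case of the satisfaction condition needs the expansion condition and equation~\eqref{eq:IDex-2} to align the reducts, and the notation $\Sigma(X)(\Gamma)$ in $\exists_L$ must be read as $\Sen(\Sigma^{\Dex}(X))(\Gamma)$. Once those are fixed, every remaining step is a one-line semantic check.
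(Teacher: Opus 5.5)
Your proposal is correct and follows the paper's route. The paper's soundness proof checks $\Atom$, $\neg_{L}$, $\neg_{R}$, $\vee_{L}$, $\vee_{R}^{n}$, $\exists_{L}$ and $\exists_{R}^{\theta}$ with the same countermodel arguments you give, including $\Mod(\theta)(\A)$ as the witnessing expansion for $\exists_{R}^{\theta}$; the satisfaction condition for $\FOL_{\alpha}(\INS)$ and the closure of $\vDash$ under $\Modify$, $\Init$, $\Cut$ are proved separately in the well-definedness proofs of Definitions~\ref{def:FOL:} and~\ref{def:Semantic-sequent}, as you anticipate (and, as the paper remarks, the $\exists$ case of that satisfaction condition needs only the existence half of the expansion condition, not the full bijection you cite).
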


\begin{definition}
Let $\INS\in|\DINS|$.
\begin{itemize}
\item
A subset $\Gamma\subseteq\Sen^{\INS}(\Sigma)$ is called
epi-basic if $\Gamma$ has a $\Sigma$-model $T_{\Sigma,\Gamma}$ (called epi-basic model), which satisfies:
\begin{itemize}
\item $\A\models^{\INS}\Gamma$ iff there is a homomorphism $h:T_{\Sigma,\Gamma}\to\A\in\Mod^{\INS}(\Sigma)$
\item such homomorphism is unique
\end{itemize}
\item
A model $\A\in|\Mod^{\INS}(\Sigma)|$ is reachable if for each $X\in|\Sigma_{\Dex^{\INS}}|$, and $\Sigma^{\Dex^{\INS}}(X)$-expansion $\hat{\A}$,
there is a $\Sigma^{\Dex^{\INS}}(X)$-substitution $\theta$ such that $\Mod^{\INS}(\theta)(\A)=\hat{\A}$.
\end{itemize}
\end{definition}

\begin{definition}
We define the full subcategory $\DINS_{b}$ as follows:
\begin{itemize}
\item[]
$\INS\in|\DINS_{b}|$ iff
\begin{description}
\item[Colimits]
$\Sig^{\INS}$ is co-complete (for directed diagrams).
$\Sen^{\INS}$, $(\cdot)_{\Dex^{\INS}}$, $Sub^{\INS}$ are cocontinuous (for directed colimit) over $\Sig^{\INS}$,
where, $Sub^{\INS}:\Sig^{\INS}\to\Set$ is a functor determined by the sets of substitutions and the translations dealt with in Lemma~\ref{lemma:translation_of_substitution}.
\item[Atomic sentences]
$\Gamma$ has a reachable epi-basic model $T_{\Sigma,\Gamma}$ for each $\Sigma\in|\Sig^{\INS}|$ and $\Gamma\subseteq\Sen^{\INS}(\Sigma)$.
\item[Compact] $\vDash^{\INS}$ is compact.
\end{description}
\end{itemize}
\end{definition}

\begin{lemma}\label{lemma:basic-compactness}
Let $\INS\in|\DINS_{b}|$, the following hold true:
\begin{enumerate}
\item\label{enu:b-compa1} if there is a homomorphism $h:\A\to\B$, and $\A\models^{\INS}\phi$ then $\B\models^{\INS}\phi$
\item\label{enu:b-compa2} $T_{\Sigma,\Gamma}\models^{\INS}\phi$ iff $\Gamma\vDash^{\INS}_{\Sigma}\phi$, (therefore, $\Gamma\vDash^{\INS}_{\Sigma}\Delta$ iff $\Gamma\vDash^{\INS}_{\Sigma}\delta$ for some $\delta\in\Delta$)
\end{enumerate}
\end{lemma}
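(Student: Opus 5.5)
The plan is to get both items from the universal property of epi-basic models, which the \textbf{Atomic sentences} condition of $\DINS_{b}$ supplies for every $\Sigma\in|\Sig^{\INS}|$ and every $\Gamma\subseteq\Sen^{\INS}(\Sigma)$. Item~\ref{enu:b-compa1} comes from composing homomorphisms out of $T_{\Sigma,\{\phi\}}$. Item~\ref{enu:b-compa2} comes from the observation that $T_{\Sigma,\Gamma}$ is itself a model of $\Gamma$ and maps homomorphically into every model of $\Gamma$. Neither reachability nor the \textbf{Colimits} and \textbf{Compact} conditions should be needed here.

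For item~\ref{enu:b-compa1}, let $h:\A\to\B\in\Mod^{\INS}(\Sigma)$ and suppose $\A\models^{\INS}\phi$. Take the epi-basic model $T:=T_{\Sigma,\{\phi\}}$ of the singleton $\{\phi\}$. Since $\A\models^{\INS}\{\phi\}$, there is a homomorphism $g:T\to\A$. Then $h\circ g:T\to\B$ is a homomorphism in $\Mod^{\INS}(\Sigma)$. By the ``if'' direction of the epi-basic property, $\B\models^{\INS}\{\phi\}$, that is, $\B\models^{\INS}\phi$. Uniqueness of the homomorphism plays no role in this step.

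For item~\ref{enu:b-compa2}, I first record that $T_{\Sigma,\Gamma}\models^{\INS}\Gamma$. This holds because $\id_{T_{\Sigma,\Gamma}}$ is a homomorphism $T_{\Sigma,\Gamma}\to T_{\Sigma,\Gamma}$; it is also part of the definition (\emph{$\Gamma$ has a $\Sigma$-model $T_{\Sigma,\Gamma}$}).
\begin{itemize}
\item ($\Leftarrow$) If $\Gamma\vDash^{\INS}_{\Sigma}\phi$, then $T_{\Sigma,\Gamma}$, being a model of $\Gamma$, satisfies $\phi$.
\item ($\Rightarrow$) If $T_{\Sigma,\Gamma}\models^{\INS}\phi$, let $\A\models^{\INS}\Gamma$ be arbitrary. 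The epi-basic property gives a homomorphism $T_{\Sigma,\Gamma}\to\A$, and item~\ref{enu:b-compa1} gives $\A\models^{\INS}\phi$. Hence $\Gamma\vDash^{\INS}_{\Sigma}\phi$.
\end{itemize}

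For the parenthetical consequence, I unfold Definition~\ref{def:Semantic-sequent}.
\begin{itemize}
\item If $\Gamma\vDash^{\INS}_{\Sigma}\delta$ for some $\delta\in\Delta$, then every model of $\Gamma$ satisfies $\delta$. So no model of $\Gamma$ falsifies all of $\Delta$, and $\Gamma\vDash^{\INS}_{\Sigma}\Delta$.
\item Conversely, suppose $\Gamma\vDash^{\INS}_{\Sigma}\Delta$. Since $T_{\Sigma,\Gamma}\models^{\INS}\Gamma$, it cannot falsify every member of $\Delta$, so $T_{\Sigma,\Gamma}\models^{\INS}\delta$ for some $\delta\in\Delta$. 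The equivalence just proved then gives $\Gamma\vDash^{\INS}_{\Sigma}\delta$.
\end{itemize}
A degenerate case I would mention is $\Delta=\emptyset$. Then $\Gamma\vDash\emptyset$ would mean that $\Gamma$ has no model, which the existence of $T_{\Sigma,\Gamma}$ rules out. So both sides of the equivalence are false and it still holds.

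I expect no real obstacle. The only point needing care is to use the epi-basic model of the singleton $\{\phi\}$ in item~\ref{enu:b-compa1}, rather than trying to argue about $\phi$ directly, since atomic sentences of an abstract $\INS$ carry no internal structure. For this I rely on the \textbf{Atomic sentences} condition, which ensures $T_{\Sigma,\{\phi\}}$ exists.
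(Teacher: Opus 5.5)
Your proposal is correct and follows the paper's proof essentially step for step. Item~1 composes $T\to\A\to\B$ and applies the epi-basic property, and item~2 uses $T_{\Sigma,\Gamma}\models^{\INS}\Gamma$ together with item~1. Your explicit use of $T_{\Sigma,\{\phi\}}$ in item~1 states precisely what the paper writes loosely as $T_{\Sigma,\Gamma}$, and your argument for the parenthetical consequence supplies a step the paper leaves unstated.
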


\begin{theorem}[Compactness]\label{theorem:syn-compactness}
For each $\INS\in|\DINS_{b}|$, $\vdash^{\omega\,\INS}$ is compact.
\end{theorem}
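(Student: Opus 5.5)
We have to show: if a directed diagram $\pos{\Gamma_i,\Sigma_i,\Delta_i}_{i\in J}$ in $\Modify(\FOL_\omega(\INS))$ has colimit $\pos{\Gamma,\Sigma,\Delta}$ and $\Gamma\vdash^{\omega\,\INS}_{\Sigma}\Delta$, then $\Gamma_i\vdash^{\omega\,\INS}_{\Sigma_i}\Delta_i$ for some $i$. The plan is to argue by induction on derivations, exploiting that with $\alpha=\omega$ every rule instance has finitely many premises and each premise uses only finitely many sentences. The one delicate exception is $\exists_L$ and $\exists_R^\theta$, which move to another signature.

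\textbf{Step 1: reduce to finite sequents.} By induction on derivations, if $\Gamma\vdash_\Sigma\Delta$ then there are finite $\Gamma_0\subseteq\Gamma$ and $\Delta_0\subseteq\Delta$ with $\Gamma_0\vdash_\Sigma\Delta_0$. The cases go as follows. $\Init$ is immediate. $\Cut$ and $\neg_{L},\neg_{R}$ follow directly from the induction hypothesis. For $\vee_L$, the disjunction has finitely many disjuncts because $N<\omega$, so the finite union of the hypotheses suffices. $\Modify(\chi)$ maps finite sets to finite sets. $\exists_L$ and $\exists_R^\theta$ pass through $\Sen(\Sigma^{\Dex}(X))$, which preserves finiteness. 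For $\Atom$, the side condition $\Gamma_b\vdash^{\INS}_\Sigma\Delta_b$ must be read as the basic consequence $\vDash^{\INS}$, which is compact by assumption (Compact in $\DINS_b$). Together with Lemma~\ref{lemma:basic-compactness}(\ref{enu:b-compa2}), this gives finite $\Gamma_b'$ and a single $\delta\in\Delta_b$.

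\textbf{Step 2: descend finite data along the colimit.} Since $\Sen^{\FOL_\omega(\INS)}$ is built by finitary recursion from $\Sen^{\INS}$ and $(\cdot)_{\Dex^{\INS}}$, both cocontinuous on directed diagrams, $\Sen^{\FOL_\omega(\INS)}$ is also cocontinuous; this is checked by induction on sentence rank. It follows that each finite $\Gamma_0,\Delta_0$ is the image of finite subsets of some $\Sen(\Sigma_i)$, and by directedness we may take these inside $\Gamma_i$ and $\Delta_i$ after moving further along $J$.

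The core of the proof is then the claim that a derivation of a finite sequent over $\Sigma$ descends to a derivation over some $\Sigma_i$. I will prove this by induction on the derivation, pushing each sentence and each rule parameter back to a stage $j\geq i$. Three kinds of parameter need care:
\begin{itemize}
\item Signature morphisms in $\Modify$ and substitutions $\theta$ in $\exists_R^\theta$ descend because $Sub^{\INS}$ is cocontinuous: $\theta$ comes from some $\theta_j$, translated via Lemma~\ref{lemma:translation_of_substitution}.
\item Quantifier blocks $X$ descend because $(\cdot)_{\Dex}$ is cocontinuous.
\item For $\exists_L$, the premise lives over $\Sigma^{\Dex}[X]$. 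I use the pushout/universal property of $\chi^{\Dex}[X]$ to show that $\Sigma_j^{\Dex}[X_j]$ forms a directed diagram with colimit $\Sigma^{\Dex}[X]$, so the induction hypothesis applies there.
\end{itemize}
Once every rule instance has descended to one stage, $\Modify$ transports the result into $\Gamma_i\vdash_{\Sigma_i}\Delta_i$.

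\textbf{Main obstacle.} I expect the main difficulty to be the $\exists_L$ case and the equalisation step. Two stage-$j$ sentences that become equal in the colimit must already be equal at some later stage, which needs cocontinuity as a genuine colimit in $\Set$ rather than mere surjectivity. It also needs that extending by $X$ commutes with directed colimits, which I would obtain by showing that pushouts along $\Sigma^{\Dex}(X)$ commute with directed colimits. If this descent turns out too heavy, I would try a fallback instead: prove completeness first via a Henkin-style construction using the reachable epi-basic models $T_{\Sigma,\Gamma}$, and derive compactness from semantic compactness. That route would require Lemma~\ref{lemma:syn-soundness} and the completeness theorem, which come later, so I would keep the syntactic induction above as the primary route.
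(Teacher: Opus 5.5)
Your route is essentially the paper's. Both argue by induction on derivations: $\Atom$ leaves are handled by compactness of $\vDash^{\INS}$, every logical rule instance is pulled back to some stage of the directed diagram via cocontinuity, and the finitely many premises (since $\alpha=\omega$) are merged by directedness.

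The paper does without your Step~1. It keeps the whole diagram $\pos{\Gamma_j,\Sigma_j,\Delta_j}$ as context, applies the induction hypothesis to that diagram enlarged by the descended premise formula (e.g.\ $\phi$ for $\neg_L$), and then moves to a stage $l$ where the descended principal formula lies in $\Gamma_l$. Your finite-sequent packaging proves the same statement. Your $\exists_L$ case (the pushout squares make $\Sigma^{\Dex}[X]$ the directed colimit of the $\Sigma_j^{\Dex}[X_j]$) and your $\exists_R^{\theta}$ case (via $Sub^{\INS}$ plus equalising blocks in $(\cdot)_{\Dex}$) supply exactly what the paper dismisses as ``roughly the same''.

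\textbf{The $\Modify$ step fails as stated.} A $\Modify(\chi)$ node cannot be descended. Its premise lives over an arbitrary $\Sigma'$, and the morphism $\Sigma'\to\Sigma$ need not factor through any colimit injection $\chi_j:\Sigma_j\to\Sigma$. Cocontinuity of $Sub^{\INS}$ does not help: it only concerns substitutions $\theta:\Sigma^{\Dex}[X]\to\Sigma$ with $\theta\circ\Sigma^{\Dex}(X)=\id_{\Sigma}$, not arbitrary morphisms into $\Sigma$.

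The remedy is that $\vdash^{\omega\,\INS}$ is generated by $\Atom$ and the logical rules alone. The paper's well-definedness proof shows $\Modify$, $\Init$ and $\Cut$ are admissible by rebuilding proof trees. So induct over such derivations, and use $\Modify$ only as weakening at the very end.

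\textbf{The $\Atom$ leaf needs one more sentence.} Let $\Gamma_b^i,\Delta_b^i$ be stage-$i$ preimages. The leaf descends by compactness of $\vDash^{\INS}$ applied to the diagram $\pos{\Sigma_{i\le j}(\Gamma_b^i),\Sigma_j,\Sigma_{i\le j}(\Delta_b^i)}_{j\ge i}$. Finiteness of $\Gamma_b$ from Step~1 alone does not move the side condition to a stage.

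\textbf{The fallback is not available.} The paper's completeness proof invokes this very compactness at the limit stages of its chain, so deriving compactness from it would be circular. It also needs the extra hypothesis on $\power_{\omega}$, which the compactness theorem does not assume.
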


\begin{theorem}[Completeness]\label{theorem:syn-completeness}
Let $\INS\in|\DINS_{b}|$, and assume $\power_{\omega}(\Sigma)=\power_{\omega}(\Sigma^{\Dex^{\INS}}[X])$ holds true for each $\Sigma\in|\Sig^{\INS}|$ and $X\in|\Sigma_{\Dex^{\INS}}|$ where $\power_{\omega}(\Sigma'):=\card(\Sen^{\FOL_{\omega}(\INS)}(\Sigma'))$.
Then $\vdash^{\omega\,\INS}$ is complete i.e. ${\vdash^{\omega\,\INS}_{\Sigma}}={\vDash^{\FOL_{\omega}(\INS)}_{\Sigma}}$ holds true for each $\Sigma\in|\Sig^{\INS}|$.
\end{theorem}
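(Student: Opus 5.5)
Soundness (Lemma~\ref{lemma:syn-soundness}) already gives ${\vdash^{\omega\,\INS}_{\Sigma}}\subseteq{\vDash^{\FOL_{\omega}(\INS)}_{\Sigma}}$, so only the converse remains. We prove its contrapositive: if $\Gamma\not\vdash^{\omega\,\INS}_{\Sigma}\Delta$, we build a $\Sigma$-model satisfying every sentence of $\Gamma$ and none of $\Delta$. The argument is a Henkin-style saturation carried out inside $\Modify(\FOL_{\omega}(\INS))$. Theorem~\ref{theorem:syn-compactness} passes consistency to directed colimits of unprovable triples, and the $\DINS_{b}$ hypotheses supply the atomic model at the end.

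\textbf{Saturation.} Let $\kappa:=\power_{\omega}(\Sigma)$. We build a chain of morphisms in $\Modify(\FOL_{\omega}(\INS))$,
\[\pos{\Gamma_{0},\Sigma_{0},\Delta_{0}}=\pos{\Gamma,\Sigma,\Delta}\to\cdots\to\pos{\Gamma_{\beta},\Sigma_{\beta},\Delta_{\beta}}\to\cdots\qquad(\beta\le\kappa),\]
all with $\Gamma_{\beta}\not\vdash_{\Sigma_{\beta}}\Delta_{\beta}$, and take directed colimits at limit stages. These colimits exist by the \textbf{Colimits} clause, and Theorem~\ref{theorem:syn-compactness} keeps them consistent. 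The schedule enumerates, with repetitions, all pairs (stage, sentence) of length $\kappa$. The hypothesis $\power_{\omega}(\Sigma)=\power_{\omega}(\Sigma^{\Dex}[X])$, applied repeatedly along the chain and with cocontinuity of $\Sen$, keeps $\card(\Sen(\Sigma_{\beta}))\le\kappa$, so this bookkeeping closes. At a successor step we treat one sentence $\psi$ over the current signature.
\begin{itemize}
\item First decide $\psi$. If $\Gamma_{\beta}\,\psi\not\vdash\Delta_{\beta}$, add $\psi$ to the left. Otherwise $\Gamma_{\beta}\not\vdash\Delta_{\beta}\,\psi$ must hold, since $\Cut^{\psi}$ would otherwise give $\Gamma_{\beta}\vdash\Delta_{\beta}$; then add $\psi$ to the right.
\item Then add the witness that the connective demands. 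A negation moves $\phi$ to the opposite side, using $\neg_{L},\neg_{R}$. A disjunction on the left keeps some consistent disjunct, using $\vee_{L}$; on the right it adds all disjuncts, using $\vee_{R}^{n}$ and $\Cut$.
\item For $\Exists{X}\phi$ on the left, move to $\Sigma^{\Dex}[X]$ along $\Sigma^{\Dex}(X)$ via $\Modify$ and add $\phi$ to the left. Rule $\exists_{L}$ shows consistency is preserved, and then $\theta\circ\Sigma^{\Dex}(X)=\id$ is available where needed.
\item For $\Exists{X}\phi$ on the right, leave it pending. It is handled at the colimit by adding $\Sen(\theta)(\phi)$ to the right for every substitution $\theta$, which stays consistent by $\exists_{R}^{\theta}$.
\end{itemize}
The composite $\Sigma\to\Sigma_{\kappa}$ then carries a complete, Hintikka-saturated, consistent pair $\pos{\Gamma_{\kappa},\Delta_{\kappa}}$. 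Completeness of the pair says each sentence lies on exactly one side. Cocontinuity of $Sub^{\INS}$ is what lets every substitution over $\Sigma_{\kappa}$ be listed at some earlier stage.

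\textbf{Model and truth lemma.} Let $\Gamma_{b}$ be the atomic members of $\Gamma_{\kappa}$ and take the reachable epi-basic model $T:=T_{\Sigma_{\kappa},\Gamma_{b}}$. We prove by induction on sentences that $T\models\psi$ for $\psi\in\Gamma_{\kappa}$ and $T\not\models\psi$ for $\psi\in\Delta_{\kappa}$.
\begin{itemize}
\item \emph{Atomic.} For $\delta$ atomic in $\Delta_{\kappa}$, suppose $T\models\delta$. Lemma~\ref{lemma:basic-compactness} gives $\Gamma_{b}\vDash^{\INS}\delta$. By \textbf{Compact} we may shrink $\Gamma_{b}$ to the relevant part, and rule $\Atom$ then makes the pair provable, a contradiction.
\item \emph{Connectives.} Negation and disjunction follow directly from saturation.
\item \emph{$\Exists{X}\phi\in\Gamma_{\kappa}$.} A witness was added at the extended signature, and the satisfaction condition transports it back along the chain to give $T\models\Exists{X}\phi$.
\item \emph{$\Exists{X}\phi\in\Delta_{\kappa}$.} Let $\hat{T}$ be any $\Sigma^{\Dex}(X)$-expansion. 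Reachability gives $\theta$ with $\Mod(\theta)(T)=\hat{T}$. By construction $\Sen(\theta)(\phi)\in\Delta_{\kappa}$, so the induction hypothesis gives $T\not\models\Sen(\theta)(\phi)$. The satisfaction condition turns this into $\hat{T}\not\models\phi$, hence $T\not\models\Exists{X}\phi$.
\end{itemize}
Reducing $T$ along $\Sigma\to\Sigma_{\kappa}$ gives the required countermodel to $\Gamma\vDash\Delta$.

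\textbf{Main obstacle.} The hardest part is the bookkeeping of the existential cases across the chain. New $\Sigma^{\Dex}[X]$-extensions create new sentences and new substitutions, and each must be handled at some later stage. The cardinality hypothesis on $\power_{\omega}$ and cocontinuity of $\Sen$, $(\cdot)_{\Dex}$ and $Sub$ are exactly what make a $\kappa$-length schedule suffice. A second delicate point is that the ambient signature keeps changing: a sentence over $\Sigma_{\beta}$ has to be identified with its image over $\Sigma_{\kappa}$, using $\Modify$ together with the translation of substitutions from Lemma~\ref{lemma:translation_of_substitution}.
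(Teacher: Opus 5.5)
Your proposal is correct and follows essentially the same route as the paper's proof: a Henkin chain of length $\power_{\omega}(\Sigma)$ in $\Modify(\FOL_{\omega}(\INS))$ that decides each scheduled sentence via $\Cut$, adds $\exists_{L}$-witnesses by passing to $\Sigma^{\Dex}[X]$, uses Theorem~\ref{theorem:syn-compactness} at limit stages, and ends with a truth lemma for the reachable epi-basic model of the atomic part of the resulting complete pair. The differences are cosmetic: your two-sided Hintikka formulation (including the pending right-hand existentials) spells out what the paper gets directly from maximal consistency, and you transport the existential witness back by reducts along $\Sigma_{\beta}^{\Dex}[X]\to\Sigma_{\kappa}$ rather than through the substitution of Example~\ref{exa:substitution} and Lemma~\ref{lemma:translation_of_substitution}.
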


\section{Conclusion and Development}
In conventional institution theory, variable structures are treated axiomatically, and quantified sentences are handled within internal logic.
While this is a sufficient method of expression for many discussions, it also has the drawback of being ad-hoc and complicating the conditions for applying results.
Furthermore, axiomatization using internal logic requires users to define compound sentences themselves, which is somewhat incompatible with the convenience desired by institution theory.
In this paper, we used a generalization of functor categories to directly introduce variable structures as logical components, and showed that basic results can be derived from natural definitions.

The acquisition of the category of $\Dex$-institutions allows for comparison between $\Dex$-institutions.
By adding the concept of modification to create a two-category system, the definition of equivalence also becomes possible.
For example, it becomes possible to formalize discussions such as comparing first-order logic defined in the usual way with first-order logic defined using fanctorial semantics.
Since the introduction of a compound sentence is done using the functor $\FOL$, comparing the atomic formula and the variable structure will automatically provide the answer for the remaining compound part.
In fanctorial semantics, if the class of categories that constitutes the category of signatures changes, the meaning of extensions by variables (often free constructions) also changes accordingly.
Unless we compare the logic including the variable structure, we cannot effectively use the results for the original logic.
However, since we can compare variable structures, even if we add constraints to the model or new representations to an already defined logic $L$ to create a logic $L'$, we can still derive some results from at least the information $L'\in|\DINS/L|$.

Since there are many discussions related to variable structure, it is thought that using this direct introduction will make many institutional-related methods more concise and easier to use.
Consider substitution as an example.
Usually, a conceptual substitution in an institution is often treated as something that includes information about $\Sen(\theta)$ and $\Mod(\theta)$ that satisfy the satisfaction condition.
To put it more strongly, it is like a signature morphism within an extended institution.
Let $\Sigma=\pos{S,F,P}\in\Sig^{\INS_{\FOL_{0}}}$.
If we define $\T(\Sigma)=\pos{S,F',P}$ by $F'_{{\to}s}=T_{\Sigma\,s}$ (set of terms) and $F'_{\BS{s}\to s}=F_{\BS{s}\to s}$, then the term generation operation is, in a sense, an extension of the signature.
This can be canonically defined as a Kleisli category over $\Sig^{\INS_{\FOL_{0}}}$, and the morphisms of the Kleisli category can be considered as generalized signature morphisms.
That is, the generalized signature morphism $\chi:\Sigma\to\Sigma'\in\Sig^{\INS_{\FOL^{\gene}_{0}}}$ is the morphism $\chi:\Sigma\to\T(\Sigma')\in\Sig^{\INS_{\FOL_{0}}}$.
This naturally includes the concept of substitution.
Here, we also consider extending $\Sen$.
Applying $\Sen^{\INS_{\FOL_{0}}}$ normally yields $\Sen^{\INS_{\FOL_{0}}}(\chi):\Sen(\Sigma)\to\Sen(\T(\Sigma'))$.
So to define $\Sen^{\INS_{\FOL^{\gene}_{0}}}(\chi):\Sen(\Sigma)\to\Sen(\Sigma')$, we need to change the target from $\Sen(\T(\Sigma'))$ to $\Sen(\Sigma')$.
\begin{center}
$
\xymatrix@R=30pt@C=60pt{
&\Sen(\T(\Sigma'))\ar[d]|{\epsilon}\\
\ar[ur]|{\Sen^{\INS_{\FOL_{0}}}(\chi)}\Sen(\Sigma)\ar[r]|{\Sen^{\INS_{\FOL^{\gene}_{0}}}(\chi)}&\Sen(\Sigma')
}
$
\end{center}
This suggests defining a natural transformation $\epsilon:\Sen\circ\T\Rightarrow\Sen$.
Note that this is similar to sentence translation between institutions.
Therefore, these concepts can also be understood more structurally, as in the case of $\Dex$-institutions (for example, through a formulation like "monads on $\Dex$-institutions").

\bibliographystyle{plainurl}
\bibliography{dinsb}

@book{Diaconescu2025,
  author = {R{\u{a}}zvan Diaconescu},
  title = {Institution-Independent Model Theory},
  series = {Studies in Universal Logic},
  publisher = {Birkh{\"{a}}user},
  year = {2025},
  _bib2doi_finished = {true},
}

@inproceedings{go-icalp24,
  author       = {Go Hashimoto and
                  Daniel G\u{a}in\u{a} and
                  Ionut \c{T}u\c{t}u},
  editor       = {Karl Bringmann and
                  Martin Grohe and
                  Gabriele Puppis and
                  Ola Svensson},
  title        = {{Forcing, Transition Algebras, and Calculi}},
  booktitle    = {51st International Colloquium on Automata, Languages, and Programming,
                  {ICALP} 2024, July 8-12, 2024, Tallinn, Estonia},
  series       = {LIPIcs},
  volume       = {297},
  pages        = {143:1--143:17},
  publisher    = {Schloss Dagstuhl - Leibniz-Zentrum f{\"{u}}r Informatik},
  year         = {2024},
  doi          = {10.4230/LIPICS.ICALP.2024.143}
}
\newpage
\appendix

\section{Proof of the results presented in Section~\ref{sec:pre}}

\begin{proof}[Proof of well-definedness of Definition~\ref{def:elements}]
We verify that $\newcoprod_{S}:\Cat^{S^{op}}\to\Cat$ is a functor.
\begin{proofcases}
\item[$\newcoprod_{S}M\in|\Cat|$] Let $M\in|\Cat^{S^{op}}|$.
By the axiom of replacement, $\newcoprod_{S}M$ is a set.
We verify that the operations satisfy the axioms.
\begin{itemize}
\item Let $\pos{h,\chi}:\pos{\A,\Sigma}\to\pos{\A',\Sigma'}\in\newcoprod_{S}M$.
\begin{align*}
&\id_{\pos{\A',\Sigma'}}\circ\pos{h,\chi}=\pos{\id_{\A'},\id_{\Sigma'}}\circ\pos{h,\chi}=\pos{M(\chi)(\id_{\A'})\circ h,\id_{\Sigma'}\circ\chi}=\pos{\id_{M(\chi)(\A')}\circ h,\id_{\Sigma'}\circ\chi}=\pos{h,\chi} \\
&\pos{h,\chi}\circ\id_{\pos{\A,\Sigma}}=\pos{h,\chi}\circ\pos{\id_{\A},\id_{\Sigma}}=\pos{M(\id_{\Sigma})(h)\circ\id_{\A},\chi\circ\id_{\Sigma}}=\pos{h\circ\id_{\A},\chi\circ\id_{\Sigma}}=\pos{h,\chi}
\end{align*}
\item Let $\pos{h,\chi}:\pos{\A,\Sigma}\to\pos{\A',\Sigma'},\,\pos{h',\chi'}:\pos{\A',\Sigma'}\to\pos{\A'',\Sigma''},\,\pos{h'',\chi''}:\pos{\A'',\Sigma''}\to\pos{\A''',\Sigma'''}\in\newcoprod_{S}M$.
\begin{align*}
&(\pos{h'',\chi''}\circ\pos{h',\chi'})\circ\pos{h,\chi}=
\pos{M(\chi')(h'')\circ h',\chi''\circ\chi'}\circ\pos{h,\chi}=
\pos{M(\chi)(M(\chi')(h'')\circ h')\circ h,(\chi''\circ\chi')\circ\chi}= \\
&\pos{M(\chi)(M(\chi')(h''))\circ(M(\chi)(h')\circ h),\chi''\circ(\chi'\circ\chi)}=
\pos{M(\chi'\circ\chi)(h'')\circ(M(\chi)(h')\circ h),\chi''\circ(\chi'\circ\chi)}= \\
&\pos{h'',\chi''}\circ\pos{M(\chi)(h')\circ h,\chi'\circ\chi}=
\pos{h'',\chi''}\circ(\pos{h',\chi'}\circ\pos{h,\chi})
\end{align*}
\end{itemize}
\item[$\newcoprod_{S}\eta:\newcoprod_{S}M\to\newcoprod_{S}M'\in\Cat$] Let $\eta:M\to M'\in\Cat^{S^{op}}$.
\begin{itemize}
\item Let $\pos{\A,\Sigma}\in|\newcoprod_{S}M|$.
\begin{align*}
\newcoprod_{S}\eta(\id_{\pos{\A,\Sigma}})
&={\newcoprod}_{S}\eta(\pos{\id_{\A},\id_{\Sigma}})=\pos{\eta_{\Sigma}(\id_{\A}),\id_{\Sigma}} \\
&=\pos{\id_{\eta_{\Sigma}(\A)},\id_{\Sigma}} \\
&=\id_{\pos{\eta_{\Sigma}(\A),\Sigma}}=\id_{\oldcoprod_{S}\eta(\pos{\A,\Sigma})}
\end{align*}
\item Let $\pos{h,\chi}:\pos{\A,\Sigma}\to\pos{\A',\Sigma'},\,\pos{h',\chi'}:\pos{\A',\Sigma'}\to\pos{\A'',\Sigma''}\in\newcoprod_{S}M$.
\begin{align*}
\newcoprod_{S}\eta(\pos{h',\chi'}\circ\pos{h,\chi})
&=\newcoprod_{S}\eta(\pos{M(\chi)(h')\circ h,\chi'\circ\chi})=\pos{\eta_{\Sigma}(M(\chi)(h')\circ h),\chi'\circ\chi} \\
&=\pos{\eta_{\Sigma}(M(\chi)(h'))\circ\eta_{\Sigma}(h),\chi'\circ\chi}(=\pos{M'(\chi)(\eta_{\Sigma'}(h'))\circ\eta_{\Sigma}(h),\chi'\circ\chi}) \\
&=\pos{\eta_{\Sigma'}(h'),\chi'}\circ\pos{\eta_{\Sigma}(h),\chi}=\newcoprod_{S}\eta(\pos{h',\chi'})\circ\newcoprod_{S}\eta(\pos{h,\chi})
\end{align*}
\end{itemize}
\item[$\newcoprod_{S}:\Cat^{S^{op}}\to\Cat$ is a functor]\
\begin{itemize}
\item Let $M\in|\Cat^{S^{op}}|$. Let $\pos{h,\chi}:\pos{\A,\Sigma}\to\pos{\A',\Sigma'}\in\newcoprod_{S}M$.
\begin{align*}
\newcoprod_{S}\id_{M}(\pos{h,\chi}:\pos{\A,\Sigma}\to\pos{\A',\Sigma'})
&=\pos{(\id_{M})_{\Sigma}(h),\chi}:\pos{(\id_{M})_{\Sigma}(\A),\Sigma}\to\pos{(\id_{M})_{\Sigma'}(\A'),\Sigma'} \\
&=\pos{\id_{M(\Sigma)}(h),\chi}:\pos{\id_{M(\Sigma)}(\A),\Sigma}\to\pos{\id_{M(\Sigma')}(\A'),\Sigma'} \\
&=\pos{h,\chi}:\pos{\A,\Sigma}\to\pos{\A',\Sigma'} \\
&=\id_{\oldcoprod_{S}M}(\pos{h,\chi}:\pos{\A,\Sigma}\to\pos{\A',\Sigma'})
\end{align*}
\item Let $\eta:M\to M',\,\eta':M'\to M''\in\Cat^{S^{op}}$. Let $\pos{h,\chi}:\pos{\A,\Sigma}\to\pos{\A',\Sigma'}\in\newcoprod_{S}M$.
\begin{align*}
\newcoprod_{S}(\eta'\circ_{V}\eta)(\pos{h,\chi}:\pos{\A,\Sigma}\to\pos{\A',\Sigma'})
&=\pos{(\eta'\circ_{V}\eta)_{\Sigma}(h),\chi}:\pos{(\eta'\circ_{V}\eta)_{\Sigma}(\A),\Sigma}\to\pos{(\eta'\circ_{V}\eta)_{\Sigma'}(\A'),\Sigma'} \\
&=\pos{(\eta'_{\Sigma}\circ\eta_{\Sigma})(h),\chi}:\pos{(\eta'_{\Sigma}\circ\eta_{\Sigma})(\A),\Sigma}\to\pos{(\eta'_{\Sigma'}\circ\eta_{\Sigma'})(\A'),\Sigma'} \\
&=\pos{\eta'_{\Sigma}(\eta_{\Sigma}(h)),\chi}:\pos{\eta'_{\Sigma}(\eta_{\Sigma}(\A)),\Sigma}\to\pos{\eta'_{\Sigma'}(\eta_{\Sigma'}(\A')),\Sigma'} \\
&=(\newcoprod_{S}\eta'\circ\newcoprod_{S}\eta)(\pos{h,\chi}:\pos{\A,\Sigma}\to\pos{\A',\Sigma'})
\end{align*}
\end{itemize}
\end{proofcases}
\end{proof}

\begin{proof}[Proof of well-definedness of Definition~\ref{def:elements-nat}]
We verify that $\newcoprod_{L}:\newcoprod_{S}\circ\,\Cat^{L^{op}}\to\newcoprod_{S'}$ is a natural transformation.
\begin{proofcases}
\item[$(\newcoprod_{L}M)(\pos{\A,\Sigma})\in|\newcoprod_{S'}M|,\,(\newcoprod_{L}M)(\pos{h,\chi})\in\newcoprod_{S'}M$] Let $\pos{h,\chi}:\pos{\A,\Sigma}\to\pos{\A',\Sigma'}\in\newcoprod_{S}(M\circ L^{op})$.
\begin{itemize}
\item Since $\pos{\A,\Sigma}\in|\newcoprod_{S}M\circ L^{op}|$ means $\A\in|M(L(\Sigma))|$, $(\newcoprod_{L}M)(\pos{\A,\Sigma})=\pos{\A,L(\Sigma)}\in|\newcoprod_{S'}M|$
\item Since $\pos{h,\chi:\Sigma\to\Sigma'}\in\newcoprod_{S}M\circ L^{op}$ means $h\in M(L(\Sigma))$, $(\newcoprod_{L}M)(\pos{h,\chi})=\pos{h,L(\chi)}\in\newcoprod_{S'}M$
\end{itemize}
\item[$\newcoprod_{L}M:\newcoprod_{S}(M\circ L^{op})\to\newcoprod_{S'}M\in\Cat$] Let $M\in|\Cat^{S'^{op}}|$.
\begin{itemize}
\item Let $\pos{\A,\Sigma}\in|\newcoprod_{S}(M\circ L^{op})|$.
\begin{align*}
&\newcoprod_{L}M(\id_{\pos{\A,\Sigma}})=
\newcoprod_{L}M(\pos{\id_{\A},\id_{\Sigma}})=
\pos{\id_{\A},L(\id_{\Sigma})}=
\pos{\id_{\A},\id_{L(\Sigma)}}=
\id_{\pos{\A,L(\Sigma)}}=
\id_{\oldcoprod_{L}M(\pos{\A,\Sigma})}
\end{align*}
\item Let $\pos{h,\chi}:\pos{\A,\Sigma}\to\pos{\A',\Sigma'},\,\pos{h',\chi'}:\pos{\A',\Sigma'}\to\pos{\A'',\Sigma''}\in\newcoprod_{S}(M\circ L^{op})$.
\begin{align*}
&\newcoprod_{L}M(\pos{h',\chi'}\circ\pos{h,\chi})=
\newcoprod_{L}M(\pos{M(L(\chi))(h')\circ h,\chi'\circ\chi})=
\pos{M(L(\chi))(h')\circ h,L(\chi'\circ\chi)}= \\
&\pos{M(L(\chi))(h')\circ h,L(\chi')\circ L(\chi)}=
\pos{h',L(\chi')}\circ\pos{h,L(\chi)}=
\newcoprod_{L}M(\pos{h',\chi'})\circ\newcoprod_{L}M(\pos{h,\chi})
\end{align*}
\end{itemize}
\item[$\newcoprod_{L}:\newcoprod_{S}\circ\,\Cat^{L^{op}}\to\newcoprod_{S'}\in\Cat^{\Cat^{S'^{op}}}$ is a natural transformation.] Let $\eta:M\to M'\in\Cat^{S'^{op}}$.
We verify that the following is commutative:
\begin{center}\small
$\xymatrix@R=40pt@C=70pt{
\newcoprod_{S}(M\circ L^{op}) \ar[r]|{\newcoprod_{S}(\eta\circ L^{op})} \ar[d]|{\newcoprod_{L}M} & \newcoprod_{S}(M'\circ L^{op}) \ar[d]|{\newcoprod_{L}M'} \\
\newcoprod_{S'}M \ar[r]|{\newcoprod_{S'}\eta} & \newcoprod_{S'}M'
}$
\end{center}
Let $\pos{h,\chi}:\pos{\A,\Sigma}\to\pos{\A',\Sigma'}\in\newcoprod_{S}(M\circ L^{op})$.
\begin{align*}
&(\newcoprod_{S'}\eta)\circ(\newcoprod_{L}M)(\pos{h,\chi})=
(\newcoprod_{S'}\eta)(\pos{h,L(\chi):L(\Sigma)\to L(\Sigma')})=
\pos{\eta_{L(\Sigma)}(h),L(\chi)}= \\
&\pos{(\eta\circ L^{op})_{\Sigma}(h),L(\chi)}=
(\newcoprod_{L}M')(\pos{(\eta\circ L^{op})_{\Sigma}(h),\chi})=
(\newcoprod_{L}M')\circ(\newcoprod_{S}(\eta\circ L^{op}))(\pos{h,\chi})
\end{align*}
\end{proofcases}
\end{proof}

\begin{proof}[Proof of Lemma~\ref{lemma:coprod_is_a_list}]\
\begin{proofcases}
\item[$\newcoprod_{\id_{S}}=\id_{\oldcoprod_{S}}$] Let $S\in|\Cat|$.
\begin{itemize}
\item Checking the domain and the codomain.
\begin{align*}
&\newcoprod_{\id_{S}}:\newcoprod_{S}\circ\,\Cat^{\id_{S}^{op}}\to\newcoprod_{S}\,\,(:\newcoprod_{S}\to\newcoprod_{S}) \\
&\id_{\oldcoprod_{S}}:\newcoprod_{S}\to\newcoprod_{S}
\end{align*}
\item Let $M\in|\Cat^{S^{op}}|$ and $\pos{h,\chi}:\pos{\A,\Sigma}\to\pos{\A',\Sigma'}\in\newcoprod_{S}M$.
\begin{align*}
&\newcoprod_{\id_{S}}M(\pos{h,\chi}:\pos{\A,\Sigma}\to\pos{\A',\Sigma'})
=\pos{h,\id_{S}(\chi)}:\pos{\A,\id_{S}(\Sigma)}\to\pos{\A',\id_{S}(\Sigma')}
=\pos{h,\chi}:\pos{\A,\Sigma}\to\pos{\A',\Sigma'} \\
&\id_{\oldcoprod_{S}}M(\pos{h,\chi}:\pos{\A,\Sigma}\to\pos{\A',\Sigma'})
=\id_{\oldcoprod_{S}M}(\pos{h,\chi}:\pos{\A,\Sigma}\to\pos{\A',\Sigma'})
=\pos{h,\chi}:\pos{\A,\Sigma}\to\pos{\A',\Sigma'}
\end{align*}
\end{itemize}
\item[$\newcoprod_{L\circ^{op}L'}=\Cat^{\Cat^{L'^{op}}}(\newcoprod_{L})\circ_{V}^{op}\newcoprod_{L'}$] Let $L:S\to S',\,L':S'\to S''\in\Cat$.
\begin{itemize}
\item Checking the domain and the codomain.
\begin{align*}
&\newcoprod_{L'\circ L}:\newcoprod_{S}\circ\,\Cat^{(L'\circ L)^{op}}\to\newcoprod_{S''} \\
&\newcoprod_{L'}\circ_{V}\,\Cat^{\Cat^{L'^{op}}}(\newcoprod_{L}):\newcoprod_{S}\circ\,\Cat^{(L'\circ L)^{op}}\to\newcoprod_{S''} \text{ since }\\
&-\,\newcoprod_{L'}:\newcoprod_{S'}\circ\,\Cat^{L'^{op}}\to\newcoprod_{S''} \\
&-\,\Cat^{\Cat^{L'^{op}}}(\newcoprod_{L}):\Cat^{\Cat^{L'^{op}}}(\newcoprod_{S}\circ\,\Cat^{L^{op}})\to\Cat^{\Cat^{L'^{op}}}(\newcoprod_{S'})\,\,(:\newcoprod_{S}\circ\,\Cat^{(L'\circ L)^{op}}\to\newcoprod_{S'}\circ\,\Cat^{L'^{op}})
\end{align*}
\item Let $M\in|\Cat^{S''^{op}}|$ and $\pos{h,\chi}:\pos{\A,\Sigma}\to\pos{\A',\Sigma'}\in\newcoprod_{S}M\circ L'^{op}\circ L^{op}$.
\begin{align*}
&\newcoprod_{L'\circ L}M(\pos{h,\chi}:\pos{\A,\Sigma}\to\pos{\A',\Sigma'})
=\pos{h,(L'\circ L)(\chi)}:\pos{\A,(L'\circ L)(\Sigma)}\to\pos{\A',(L'\circ L)(\Sigma')} \\
&(\newcoprod_{L'}\circ_{V}\,\Cat^{\Cat^{L'^{op}}}(\newcoprod_{L}))(M)(\pos{h,\chi}:\pos{\A,\Sigma}\to\pos{\A',\Sigma'})
=(\newcoprod_{L'}M\circ\,(\newcoprod_{L}M\circ L'^{op}))(\pos{h,\chi}:\pos{\A,\Sigma}\to\pos{\A',\Sigma'}) \\
&=\newcoprod_{L'}M(\pos{h,L(\chi)}:\pos{\A,L(\Sigma)}\to\pos{\A',L(\Sigma')})
=\pos{h,(L'\circ L)(\chi)}:\pos{\A,(L'\circ L)(\Sigma)}\to\pos{\A',(L'\circ L)(\Sigma')}
\end{align*}
\end{itemize}
\end{proofcases}
\end{proof}

\begin{proof}[Proof of well-definedness of Definition~\ref{def:lists}]
We verify that $\newprod:\Cat^{S^{op}}\to\Cat$ is a functor.
\begin{proofcases}
\item[$\newprod_{S}M\in|\Cat|$] Let $M\in|\Cat^{S^{op}}|$.
Since $S$ and $\newcoprod_{S}M$ are sets, $\newprod_{S}M\subseteq(\newcoprod_{S}M)^{S}$ is also a set.
To verify that it is a subcategory, it is sufficient to check whether the operations are closed.
\begin{itemize}
\item Let $c\in|\newprod_{S}M|$. $\id_{\pi}\circ_{H}\id_{c}=\id_{\pi\circ c}=\id_{\id_{S}}$.
\item Let $h:c\to c,\,h':c'\to c''\in\newprod M$. $\id_{\pi}\circ_{H}(h'\circ_{V}h)=(\id_{\pi}\circ_{H}h')\circ_{V}(\id_{\pi}\circ_{H}h)=\id_{\id_{S}}\circ_{V}\id_{\id_{S}}=\id_{\id_{S}}$.
\end{itemize}
\item[$\newprod_{S}\eta:\newprod_{S}M\to\newprod_{S}M'\in\Cat$] Let $\eta:M\to M'\in\Cat^{S^{op}}$.
Since $\newcoprod_{S}\eta:\newcoprod_{S}M\to\newcoprod_{S}M'$ is a functor, $(\newcoprod_{S}\eta\circ\cdot):(\newcoprod_{S}M)^{S}\to(\newcoprod_{S}M')^{S}$ is also a functor.
$\newprod_{S}M\subseteq(\newcoprod_{S}M)^{S}$ and $\newprod_{S}M'\subseteq(\newcoprod_{S}M')^{S}$ are subcategories, and
$\newprod_{S}\eta$ is $(\newcoprod_{S}\eta\circ\cdot)$ restricted to $\newprod_{S}M$,
so it is sufficient to verify $(\newcoprod_{S}\eta\circ\cdot)(\newprod_{S}M)\subseteq\newprod_{S}M'$.
By definition, $\newcoprod_{S}\eta$ does not change the right component, so for $c\in|\newprod_{S}M|$,
we can say that $\pi\circ(\newcoprod_{S}\eta\circ c)=\pi\circ c=\id_{S}$, and therefore $\newprod_{S}\eta(c)=\newcoprod_{S}\eta\circ c\in|\newprod_{S}M'|$.
The same is true for $h:c\to c\in\newprod_{S}M$.
\item[$\newprod_{S}:\Cat^{S^{op}}\to\Cat$ is a functor]\
\begin{itemize}
\item Let $M\in|\Cat^{S^{op}}|$. Let $h:c\to c'\in\newprod_{S}M$.
\begin{align*}
\newprod_{S}\id_{M}(h:c\to c')
&=\newcoprod_{S}\id_{M}\circ_{H}h:\newcoprod_{S}\id_{M}\circ c\to\newcoprod_{S}\id_{M}\circ c' \\
&=\id_{\oldcoprod_{S}M}\circ_{H}h:\id_{\oldcoprod_{S}M}\circ c\to\id_{\oldcoprod_{S}M}\circ c' \\
&=h:c\to c' \\
&=\id_{\oldprod_{S}M}(h:c\to c')
\end{align*}
\item Let $\eta:M\to M',\,\eta':M'\to M''\in\Cat^{S^{op}}$. Let $h:c\to c'\in\newprod_{S}M$.
\begin{align*}
\newprod_{S}(\eta'\circ_{V}\eta)(h:c\to c')
&=\newcoprod_{S}(\eta'\circ_{V}\eta)\circ_{H}h:\newcoprod_{S}(\eta'\circ_{V}\eta)\circ c\to\newcoprod_{S}(\eta'\circ_{V}\eta)\circ c' \\
&=(\newcoprod_{S}\eta'\circ\newcoprod_{S}\eta)\circ_{H}h:(\newcoprod_{S}\eta'\circ\newcoprod_{S}\eta)\circ c\to(\newcoprod_{S}\eta'\circ\newcoprod_{S}\eta)\circ c' \\
&=\newcoprod_{S}\eta'\circ_{H}(\newcoprod_{S}\eta\circ_{H}h):\newcoprod_{S}\eta'\circ(\newcoprod_{S}\eta\circ c)\to\newcoprod_{S}\eta'\circ(\newcoprod_{S}\eta\circ c') \\
&=(\newprod_{S}\eta'\circ\newprod_{S}\eta)(h:c\to c')
\end{align*}
\end{itemize}
\end{proofcases}
\end{proof}

\begin{proof}[Proof of well-definedness of Definition~\ref{def:lists-nat}]
We verify that $\newprod_{L}:\newprod_{S'}\to\newprod_{S}\circ\,\Cat^{L^{op}}$ is a natural transformation.
\begin{proofcases}
\item[$(\newprod_{L}M)(c)\in|\newprod_{S}(M\circ L^{op})|,\,(\newprod_{L}M)(h)\in\newprod_{S}(M\circ L^{op})$] Let $h:c\to c'\in\newprod_{S'}M$.
\begin{itemize}
\item We show $(\newprod_{L}M)(c)\in|\newprod_{S}(M\circ L^{op})|$.
\begin{itemize}
\item Let $\Sigma\in|S|$.
\begin{align*}
&(\newprod_{L}M)(c)(\id_{\Sigma})=\{\pos{L(\chi)^{c},\chi}\}_{\chi\in S}(\id_{\Sigma})=\pos{L(\id_{\Sigma})^{c},\id_{\Sigma}}=\pos{\id_{L(\Sigma)}^{c},\id_{\Sigma}} \\
&=\pos{\id_{L(\Sigma)^{c}},\id_{\Sigma}}=\id_{\pos{L(\Sigma)^{c},\Sigma}}=\id_{(\oldprod_{L}M)(c)(\Sigma)}
\end{align*}
\item Let $\chi:\Sigma\to\Sigma',\,\chi':\Sigma'\to\Sigma''\in S$.
\begin{align*}
&(\newprod_{L}M)(c)(\chi'\circ\chi)=\{\pos{L(\chi)^{c},\chi}\}_{\chi\in S}(\chi'\circ\chi)=\pos{L(\chi'\circ\chi)^{c},\chi'\circ\chi}=\pos{(L(\chi')\circ L(\chi))^{c},\chi'\circ\chi} \\
&=\pos{L(\chi')^{c}\circ L(\chi)^{c},\chi'\circ\chi}=\pos{L(\chi')^{c},\chi'}\circ\pos{L(\chi)^{c},\chi}=(\newprod_{L}M)(c)(\chi')\circ(\newprod_{L}M)(c)(\chi)
\end{align*}
\end{itemize}
$\pi\circ(\newprod_{L}M)(c)=\pi\circ\{\pos{L(\chi)^{c},\chi}\}_{\chi\in S}=\{\chi\}_{\chi\in S}=\id_{S}$
Therefore, $(\newprod_{L}M)(c)\in|\newprod_{S}(M\circ L^{op})|$.
\item We show $(\newprod_{L}M)(h)\in\newprod_{S}(M\circ L^{op})$.
First, we check the following commutativity.
\begin{center}\small
$\xymatrix@R=40pt@C=80pt{
(\newprod_{L}M)(c)(\Sigma) \ar[r]|{(\newprod_{L}M)(c)(\chi)} \ar[d]|{(\newprod_{L}M)(h)(\Sigma)} & (\newprod_{L}M)(c)(\Sigma') \ar[d]|{(\newprod_{L}M)(h)(\Sigma')} \\
(\newprod_{L}M)(c')(\Sigma) \ar[r]|{(\newprod_{L}M)(c')(\chi)} & (\newprod_{L}M)(c')(\Sigma')
}$
\end{center}
\begin{align*}
&(\newprod_{L}M)(c')(\chi)\circ(\newprod_{L}M)(h)(\Sigma)=\pos{L(\chi)^{c'},\chi}\circ\pos{L(\Sigma)^{h},\id_{\Sigma}}=\pos{L(\chi)^{c'}\circ L(\Sigma)^{h},\chi} \\
&(\newprod_{L}M)(h)(\Sigma')\circ(\newprod_{L}M)(c)(\chi)=\pos{L(\Sigma')^{h},\id_{\Sigma}}\circ\pos{L(\chi)^{c},\chi}
=\pos{M(L(\chi))(L(\Sigma')^{h})\circ L(\chi)^{c},\chi}
\end{align*}
Now that we know that the right components are equal, we will slightly transform the right components in order to compare the left components.
\begin{align*}
&\pos{L(\chi)^{c'}\circ L(\Sigma)^{h},L(\chi)}=\pos{L(\chi)^{c'},L(\chi)}\circ\pos{L(\Sigma)^{h},\id_{L(\Sigma)}}=c'(\chi)\circ h(\Sigma) \\
&\pos{M(L(\chi))(L(\Sigma')^{h})\circ L(\chi)^{c},L(\chi)}=\pos{L(\Sigma')^{h},\id_{L(\Sigma')}}\circ\pos{L(\chi)^{c},L(\chi)}=h(\Sigma')\circ c(\chi)
\end{align*}
Since, $h:c\Rightarrow c',\,\,c'(\chi)\circ h(\Sigma)=h(\Sigma')\circ c(\chi)$. Therefore, commutativity holds true. \\
$\id_{\pi}\circ(\newprod_{L})(h)=\pi\circ\{\pos{L(\Sigma)^{h},\id_{\Sigma}}\}_{\Sigma\in|S|}=\{\id_{\Sigma}\}_{\Sigma\in|S|}=\id_{\id_{S}}$
Therefore, $(\newprod_{L}M)(h)\in\newprod_{S}(M\circ L^{op})$.
\end{itemize}
\item[$\newprod_{L}M:\newprod_{S'}M\to\newprod_{S}(M\circ L^{op})\in\Cat$] Let $M\in|\Cat^{S'^{op}}|$.
\begin{itemize}
\item Let $c\in|\newprod_{S'}M|$.
\begin{align*}
(\newprod_{L}M)(\id_{c})
=&\{\pos{L(\Sigma)^{\id_{c}},\id_{\Sigma}}\}_{\Sigma\in|S|}:\{\pos{L(\chi)^{c},\chi}\}_{\chi\in S}\Rightarrow\{\pos{L(\chi)^{c},\chi}\}_{\chi\in S} \\
=&\{\pos{\id_{L(\Sigma)^{c}},\id_{\Sigma}}\}_{\Sigma\in|S|}:\{\pos{L(\chi)^{c},\chi}\}_{\chi\in S}\Rightarrow\{\pos{L(\chi)^{c},\chi}\}_{\chi\in S} \\
=&\id_{\{\pos{L(\chi)^{c},\chi}\}_{\chi\in S}}=\id_{(\oldprod_{L}M)(c)}
\end{align*}
\item Let $h:c\to c',\,h':c'\to c''\in\newcoprod_{S'}M$.
\begin{align*}
(\newprod_{L}M)(h'\circ h)
=&\{\pos{L(\Sigma)^{h'\circ h},\id_{\Sigma}}\}_{\Sigma\in|S|}:\{\pos{L(\chi)^{c},\chi}\}_{\chi\in S}\Rightarrow\{\pos{L(\chi)^{c''},\chi}\}_{\chi\in S} \\
=&\{\pos{L(\Sigma)^{h'}\circ L(\Sigma)^{h},\id_{\Sigma}}\}_{\Sigma\in|S|}:\{\pos{L(\chi)^{c},\chi}\}_{\chi\in S}\Rightarrow\{\pos{L(\chi)^{c''},\chi}\}_{\chi\in S} \\
=&\{\pos{L(\Sigma)^{h'},\id_{\Sigma}}\}_{\Sigma\in|S|}\circ_{V}\{\pos{L(\Sigma)^{h},\id_{\Sigma}}\}_{\Sigma\in|S|}
=(\newprod_{L}M)(h')\circ(\newprod_{L}M)(h)
\end{align*}
\end{itemize}
\item[$\newprod_{L}:\newprod_{S'}\to\newprod_{S}\circ\,\Cat^{L^{op}}\in\Cat^{(\Cat^{S'^{op}})}$ is a natural transformation.] Let $\eta:M\to M'\in\Cat^{S'^{op}}$.
We verify that the following is commutative:
\begin{center}\small
$\xymatrix@R=40pt@C=70pt{
\newprod_{S'}M \ar[r]|{\newprod_{S'}\eta} \ar[d]|{\newprod_{L}M} & \newprod_{S'}M' \ar[d]|{\newprod_{L}M'} \\
\newprod_{S}(M\circ L^{op}) \ar[r]|{\newprod_{S}(\eta\circ L^{op})} & \newprod_{S}(M'\circ L^{op})
}$
\end{center}
Let $h:c\to c'\in\newprod_{S'}M$.
\begin{align*}
&(\newprod_{S}(\eta\circ L^{op}))\circ(\newcoprod_{L}M)(h)=(\newprod_{S}(\eta\circ L^{op}))(\{\pos{L(\Sigma)^{h},\id_{\Sigma}}\}_{\Sigma\in|S|}:\{\pos{L(\chi)^{c},\chi}\}_{\chi\in S}\to\{\pos{L(\chi)^{c'},\chi}\}_{\chi\in S}) \\
&=(\newcoprod_{S}(\eta\circ L^{op}))\circ_{H}\{\pos{L(\Sigma)^{h},\id_{\Sigma}}\}_{\Sigma\in|S|}=\{\pos{(\eta\circ L^{op})_{\Sigma}(L(\Sigma)^{h}),\id_{\Sigma}}\}_{\Sigma\in|S|}=\{\pos{\eta_{L(\Sigma)}(L(\Sigma)^{h}),\id_{\Sigma}}\}_{\Sigma\in|S|} \\ 
& \\
&(\newprod_{L}M')\circ(\newprod_{S'}\eta)(h)=(\newprod_{L}M')((\newcoprod_{S'}\eta)\circ_{H}h)=\{\pos{L(\Sigma)^{(\oldcoprod_{S'}\eta)\,\circ_{H}h},\id_{\Sigma}}\}_{\Sigma\in|S|}=\{\pos{L(\Sigma)^{\{\oldcoprod_{S'}\eta(h_{\Sigma'})\}_{\Sigma'\in|S|}},\id_{\Sigma}}\}_{\Sigma\in|S|} \\
&=\{\pos{L(\Sigma)^{\{\oldcoprod_{S'}\eta(\pos{\Sigma'^{h},\id_{\Sigma'}})\}_{\Sigma'\in|S|}},\id_{\Sigma}}\}_{\Sigma\in|S|}=\{\pos{L(\Sigma)^{\{\pos{\eta_{\Sigma'}(\Sigma'^{h}),\id_{\Sigma'}}\}_{\Sigma'\in|S|}},\id_{\Sigma}}\}_{\Sigma\in|S|}=\{\pos{\eta_{L(\Sigma)}(L(\Sigma)^{h}),\id_{\Sigma}}\}_{\Sigma\in|S|}
\end{align*}
Therefore, it is commutative.
\end{proofcases}
\end{proof}

\begin{proof}[Proof of Lemma~\ref{lemma:prod_is_a_list}]\
\begin{proofcases}
\item[$\newprod_{\id_{S}}=\id_{\oldprod_{S}}$] Let $S\in|\Cat|$.
\begin{itemize}
\item Checking the domain and the codomain.
\begin{align*}
&\newprod_{\id_{S}}:\newprod_{S}\to\newprod_{S}\circ\,\Cat^{\id_{S}^{op}}\,\,(:\newprod_{S}\to\newprod_{S}) \\
&\id_{\oldprod_{S}}:\newprod_{S}\to\newprod_{S}
\end{align*}
\item Let $M\in|\Cat^{S^{op}}|$ and $h:c\to c'\in\newprod_{S}M$.
\begin{align*}
&\newprod_{\id_{S}}M(h:c\to c')
=\{\pos{\id_{S}(\Sigma)^{h},\id_{\Sigma}}\}_{\Sigma\in|S|}:\{\pos{\id_{S}(\chi)^{c},\chi}\}_{\chi\in S}\to\{\pos{\id_{S}(\chi)^{c'},\chi}\}_{\chi\in S} \\
&=\{\pos{\Sigma^{h},\id_{\Sigma}}\}_{\Sigma\in|S|}:\{\pos{\chi^{c},\chi}\}_{\chi\in S}\to\{\pos{\chi^{c'},\chi}\}_{\chi\in S}
=h:c\to c' \\
&\id_{\oldprod_{S}}M(h:c\to c')
=\id_{\oldprod_{S}M}(h:c\to c')
=h:c\to c'
\end{align*}
\end{itemize}
\item[$\newprod_{L\circ^{op}L'}=\Cat^{\Cat^{L'^{op}}}(\newprod_{L})\circ_{V}\newprod_{L'}$] Let $L:S\to S',\,L':S'\to S''\in\Cat$.
\begin{itemize}
\item Checking the domain and the codomain.
\begin{align*}
&\newprod_{L'\circ L}:\newprod_{S''}\to\newprod_{S}\circ\,\Cat^{(L'\circ L)^{op}} \\
&\Cat^{\Cat^{L'^{op}}}(\newprod_{L})\circ\newprod_{L'}:\newprod_{S''}\to\newprod_{S}\circ\,\Cat^{(L'\circ L)^{op}} \text{ since }\\
&-\,\newprod_{L'}:\newprod_{S''}\to\newprod_{S'}\circ\,\Cat^{L'^{op}} \\
&-\,\Cat^{\Cat^{L'^{op}}}(\newprod_{L}):\Cat^{\Cat^{L'^{op}}}(\newprod_{S'})\to\Cat^{\Cat^{L'^{op}}}(\newprod_{S}\circ\,\Cat^{L^{op}})\,\,(:\newprod_{S'}\circ\,\Cat^{L'^{op}}\to\newprod_{S}\circ\,\Cat^{(L'\circ L)^{op}})
\end{align*}
\item Let $M\in|\Cat^{S''^{op}}|$ and $h:c\to c'\in\newprod_{S''}M$.
\begin{align*}
&\newprod_{L'\circ L}M(h:c\to c')
=\{\pos{(L'\circ L)(\Sigma)^{h},\id_{\Sigma}}\}_{\Sigma\in|S|}:\{\pos{(L'\circ L)(\chi)^{c},\chi}\}_{\chi\in S}\to\{\pos{(L'\circ L)(\chi)^{c'},\chi}\}_{\chi\in S} \\
&(\Cat^{\Cat^{L'^{op}}}(\newprod_{L})\circ_{V}\newprod_{L'})(M)(h:c\to c')
=((\newprod_{L}M\circ L'^{op})\circ\newprod_{L'}M)(h:c\to c') \\
&=(\newprod_{L}M\circ L'^{op})(\{\pos{L'(\Sigma)^{h},\id_{\Sigma}}\}_{\Sigma\in|S|}:\{\pos{L'(\chi)^{c},\chi}\}_{\chi\in S}\to\{\pos{L'(\chi)^{c'},\chi}\}_{\chi\in S}) \\
&=\{\pos{(L'\circ L)(\Sigma)^{h},\id_{\Sigma}}\}_{\Sigma\in|S|}:\{\pos{(L'\circ L)(\chi)^{c},\chi}\}_{\chi\in S}\to\{\pos{(L'\circ L)(\chi)^{c'},\chi}\}_{\chi\in S}
\end{align*}
\end{itemize}
\end{proofcases}
\end{proof}

\section{Proof of the results presented in Section~\ref{sec:Dq}}

\begin{proof}[Proof of well-definedness of Definition~\ref{def:DINS}]
We will investigate well-definedness.
(Incidentally, the last condition in the definition of $\Dex$-institution and $\Dex$-institution morphism assumes uniqueness of the model extension, but as can be seen from the proof below, they form a category even if we only assume existence.)
{\par}
What needs to be verified is whether the objects constructed by $\id$ and $\circ$ properly form the $\Dex$-institution morphisms, and whether they satisfy the category axioms such as associativity.
The parts other than $\Dex$ are functors and natural transformations determined from the diagram. For example, $\Set^{\INM'\circ\INM}$ is the composition of the following two natural transformations.
\begin{center}\small
\def\objectstyle{\scriptstyle}
\def\labelstyle{\scriptscriptstyle}
$\xymatrix@R=30pt@C=90pt{
&\Sig^{\INS''} \ar@/_20pt/[ld]|{\Sen^{\INS''}} \\
\Set \ar@{}[ru]|{\Downarrow\Sen^{\INM'}} \ar@{}[rd]|{\Downarrow\Sen^{\INM}} &\Sig^{\INS'} \ar[l]|{\Sen^{\INS'}} \ar[u]|{\Sig^{\INM'}} \\
&\Sig^{\INS} \ar@/^20pt/[lu]|{\Sen^{\INS}} \ar[u]|{\Sig^{\INM}}
}$
\end{center}
Since they cannot be composed directly, we compose them by matching their domains using “$\circ_{H}\Sig^{\INM}$”.
$\Cat$ is a strict 2-category, and generally, such computations of natural transformations are fixed regardless of the order of interpretation of the diagram.
Therefore, it is clear that these definitions satisfy axioms such as associativity.
Thus, here we only verify the satisfaction condition and the part concerning $\Dex^{\INM'\circ\INM}$.
\begin{proofcases}
\item[confirmation of institution morphisms]\
\begin{proofcases}
\item[$\id_{\INS}:\INS\to\INS\in\DINS$] Let $\INS\in|\DINS|$.
\begin{proofcases}
\item[$\A\models^{\INS}\Sen^{\id_{\INS}}(\Sigma)(\gamma')$ iff $\Mod^{\id_{\INS}}(\Sigma)(\A)\models^{\INS}\gamma'$]
$\A\models^{\INS}\Sen^{\id_{\INS}}(\Sigma)(\gamma')\,(=\gamma')$ iff $(\A=)\,\Mod^{\id_{\INS}}(\Sigma)(\A)\models^{\INS}\gamma'$
\item[$\Dex^{\id_{\INS}}(\cdot):\Dex^{\cod(\id_{\INS})}_{\id_{\INS}}(\cdot)\Leftarrow\Dex^{\dom(\id_{\INS})}_{\id_{\INS}}(\cdot)\in\newprod_{\cdot\in\Sig^{\INS}}(\Sig^{\id_{\INS}}(\cdot)/\Sig^{\INS})^{\Sig^{\id_{\INS}}(\cdot)_{\Dex^{\INS}}}$]
\
\begin{itemize}
\item $\Dex^{\cod(\id_{\INS})}_{\id_{\INS}}:=\newprod_{\Sig^{\INS}}((\cdot/\Sig^{\id_{\INS}})^{(\cdot)_{\Dex^{\id_{\INS}}}})(\Dex^{\INS})
=\newprod_{\Sig^{\INS}}((\cdot/\id_{\Sig^{\INS}})^{\id_{(\cdot)_{\Dex^{\INS}}}})(\Dex^{\INS})$
\item[] $=\newprod_{\Sig^{\INS}}(\id_{(\cdot/\Sig^{\INS})^{(\cdot)_{\Dex^{\INS}}}})(\Dex^{\INS})
=\id_{\oldprod_{\Sig^{\INS}}(\cdot/\Sig^{\INS})^{(\cdot)_{\Dex^{\INS}}}}(\Dex^{\INS})=\Dex^{\INS}$
\item $\Dex^{\dom(\id_{\INS})}_{\id_{\INS}}:=\newprod_{\Sig^{\id_{\INS}}}((\cdot/\Sig^{\INS})^{(\cdot)_{\Dex^{\INS}}})(\Dex^{\INS})
=\newprod_{\id_{\Sig^{\INS}}}((\cdot/\Sig^{\INS})^{(\cdot)_{\Dex^{\INS}}})(\Dex^{\INS})$
\item[] $=\id_{\oldprod_{\Sig^{\INS}}(\cdot/\Sig^{\INS})^{(\cdot)_{\Dex^{\INS}}}}(\Dex^{\INS})=\Dex^{\INS}$
\end{itemize}
$\Dex^{\id_{\INS}}(\cdot)=\id_{\Dex^{\INS}}:\Dex^{\cod(\id_{\INS})}_{\id_{\INS}}(\cdot)\Leftarrow\Dex^{\dom(\id_{\INS})}_{\id_{\INS}}(\cdot)
\in\newprod_{\Sig^{\INS}}(\cdot/\Sig^{\INS})^{(\cdot)_{\Dex^{\INS}}}=
\newprod_{\cdot\in\Sig^{\INS}}(\Sig^{\id_{\INS}}(\cdot)/\Sig^{\INS})^{\Sig^{\id_{\INS}}(\cdot)_{\Dex^{\INS}}}$
\item[universal morphism]
Let $\Sigma\in|\Sig^{\INS}|$, $X\in|\Sigma_{\Dex^{\INS}}|$.
For any $\chi\in|\Sigma/\Sig^{\INS}|$ and $\theta:\Sig^{\id_{\INS}}(\Sigma)^{\Dex^{\INS}}(X)\to(\Sigma/\Sig^{\id_{\INS}})(\chi)$,
we need to show that there exists a unique morphism $\theta'\in|\Sigma/\Sig^{\INS}|$
such that $(\Sigma/\Sig^{\id_{\INS}})(\theta')\circ\Sigma^{\Dex^{\INM}}(X)=\theta$.
However, by definition, $(\Sigma/\Sig^{\id_{\INS}})$ and $\Sigma^{\Dex^{\INM}}(X)$ are "$\id$",
so $\theta':=\theta$ is a unique morphism that satisfies the condition.
\item[translatability of expanded models]
Let take $\Sigma\in|\Sig^{\INS}|$, $X\in|\Sig^{\id_{\INS}}(\Sigma)_{\Dex^{\INS}}|$, $\A\in|\Mod^{\INS}(\Sigma)|$, and
$\Sig^{\id_{\INS}}(\Sigma)^{\Dex^{\INS}}(X)$-expansion $\hat{\A}'$ of $\A':=\Mod^{\id_{\INS}}(\Sigma)(\A)$.
\begin{itemize}
\item
Since $(\Sigma^{\Dex^{\id_{\INS}}}*\Mod^{\id_{\INS}})(X)=(\id_{\Mod^{\INS}}\circ_{H}(\Sigma^{\Dex^{\id_{\INS}}}[X])^{op})\circ_{V}(\Mod^{\id_{\INS}}\circ_{H}\id_{(\Sigma^{\Dex^{\INS}}[\Sigma_{\Dex^{\id_{\INS}}}(X)])^{op}})$\\
$=(\id_{\Mod^{\INS}}\circ_{H}\id_{(\Sig^{\id_{\INS}})^{op}\circ(\Sigma^{\Dex^{\INS}}[\Sigma_{\Dex^{\id_{\INS}}}(X)])^{op}}\circ_{V}((\id_{\Mod^{\INS}\circ(\Sig^{\id_{\INS}})^{op}}\circ_{H}\id_{(\Sigma^{\Dex^{\INS}}[\Sigma_{\Dex^{\id_{\INS}}}(X)])^{op}})$
$=\id_{\Mod^{\INS}(\Sigma^{\Dex^{\INS}}[X])}$,
\item
there is (unique) $\Sigma^{\Dex^{\INS}}(\Sigma_{\Dex^{\id_{\INS}}}(X))$-expansion $\hat{\A}:=\hat{\A}'$ of $\A$ such that
$(\Sigma^{\Dex^{\id_{\INS}}}*\Mod^{\id_{\INS}})(X)(\hat{\A})\,(=\id_{\Mod^{\INS}(\Sigma^{\Dex^{\INS}}[X])}(\hat{\A})=\hat{\A})\,=\hat{\A}'$.
\end{itemize}
\end{proofcases}
\item[$\INM'\circ\INM:\INS\to\INS''\in\DINS$] Let $\INM:\INS\to\INS',\,\INM':\INS'\to\INS''\in\DINS$.
\begin{proofcases}
\item[$\A\models^{\INS}\Sen^{\INM'\circ\INM}(\Sigma)(\gamma')$ iff $\Mod^{\INM'\circ\INM}(\Sigma)(\A)\models^{\INS''}\gamma'$]
$\A\models^{\INS}\Sen^{\INM'\circ\INM}(\Sigma)(\gamma')$
iff $\A\models^{\INS}\Sen^{\INM}(\Sigma)\circ\Sen^{\INM'}(\Sig^{\INM}(\Sigma))(\gamma')$
iff $\Mod^{\INM'}(\Sig^{\INM}(\Sigma))\circ\Mod^{\INM}(\Sigma)(\A)\models^{\INS}\gamma'$
iff $\Mod^{\INM'\circ\INM}(\Sigma)(\A)\models^{\INS''}\gamma'$
\item[$\Dex^{\INM'\circ\INM}(\cdot):\Dex^{\INS}_{\INM'\circ\INM}(\cdot)\Leftarrow\Dex^{\INS''}_{\INM'\circ\INM}(\cdot)\in\newprod_{\cdot\in\Sig^{\INS}}(\Sig^{\INM'\circ\INM}(\cdot)/\Sig^{\INS''})^{\Sig^{\INM'\circ\INM}(\cdot)_{\Dex^{\INS''}}}$]
\
\begin{itemize}
\item $\Dex^{\INS}_{\INM'\circ\INM}:=\newprod_{\Sig^{\INS}}((\cdot/\Sig^{\INM'\circ\INM})^{(\cdot)_{\Dex^{\INM'\circ\INM}}})(\Dex^{\INS})
=\newprod_{\Sig^{\INS}}((\cdot/(\Sig^{\INM'}\circ\Sig^{\INM}))^{(\cdot)_{\Dex^{\INM}}\circ_{V}((\cdot)_{\Dex^{\INM'}}\circ_{H}\Sig^{\INM})})(\Dex^{\INS})$
\item[] $=\newprod_{\Sig^{\INS}}(((\Sig^{\INM}(\cdot)/\Sig^{\INM'})\circ(\cdot/\Sig^{\INM}))^{(\cdot)_{\Dex^{\INM}}\circ_{V}((\cdot)_{\Dex^{\INM'}}\circ_{H}\Sig^{\INM})})(\Dex^{\INS})$
\item[] $=\newprod_{\Sig^{\INS}}((\Sig^{\INM}(\cdot)/\Sig^{\INM'})^{\Sig^{\INM}(\cdot)_{\Dex^{\INM'}}}\circ(\cdot/\Sig^{\INM})^{(\cdot)_{\Dex^{\INM}}})(\Dex^{\INS})$
\item[] $=(\newprod_{\Sig^{\INS}}((\Sig^{\INM}(\cdot)/\Sig^{\INM'})^{\Sig^{\INM}(\cdot)_{\Dex^{\INM'}}})\circ\newprod_{\Sig^{\INS}}((\cdot/\Sig^{\INM})^{(\cdot)_{\Dex^{\INM}}}))(\Dex^{\INS})$
\item[] $=\newprod_{\Sig^{\INS}}((\Sig^{\INM}(\cdot)/\Sig^{\INM'})^{\Sig^{\INM}(\cdot)_{\Dex^{\INM'}}})(\Dex^{\INS}_{\INM})$
\item $\Dex^{\INS''}_{\INM'\circ\INM}:=\newprod_{\Sig^{\INM'\circ\INM}}((\cdot/\Sig^{\INS''})^{(\cdot)_{\Dex^{\INS''}}})(\Dex^{\INS''})
=\newprod_{\Sig^{\INM'}\circ\Sig^{\INM}}((\cdot/\Sig^{\INS''})^{(\cdot)_{\Dex^{\INS''}}})(\Dex^{\INS''})$
\item[] $=(\newprod_{\Sig^{\INM}}((\cdot/\Sig^{\INS''})^{(\cdot)_{\Dex^{\INS''}}}\circ(\Sig^{\INM'})^{op})\circ\newprod_{\Sig^{\INM'}}((\cdot/\Sig^{\INS''})^{(\cdot)_{\Dex^{\INS''}}}))(\Dex^{\INS''})$ (Lemma~\ref{lemma:prod_is_a_list})
\item[] $=(\newprod_{\Sig^{\INM}}((\Sig^{\INM'}(\cdot)/\Sig^{\INS''})^{\Sig^{\INM'}(\cdot)_{\Dex^{\INS''}}})\circ\newprod_{\Sig^{\INM'}}((\cdot/\Sig^{\INS''})^{(\cdot)_{\Dex^{\INS''}}}))(\Dex^{\INS''})$
\item[] $=\newprod_{\Sig^{\INM}}((\Sig^{\INM'}(\cdot)/\Sig^{\INS''})^{\Sig^{\INM'}(\cdot)_{\Dex^{\INS''}}})(\Dex^{\INS''}_{\INM'})$
\item By $\Dex^{\INM}:\Dex^{\INS}_{\INM}\Leftarrow\Dex^{\INS'}_{\INM}$, $\Dex^{\INM'}:\Dex^{\INS'}_{\INM'}\Leftarrow\Dex^{\INS''}_{\INM'}$, and above two, 
$\Dex^{\INM'\circ\INM}(\cdot)
=\newprod_{\Sig^{\INS}}((\Sig^{\INM}(\cdot)/\Sig^{\INM'})^{\Sig^{\INM}(\cdot)_{\Dex^{\INM'}}})(\Dex^{\INM})
\circ_{V}
\newprod_{\Sig^{\INM}}((\Sig^{\INM'}(\cdot)/\Sig^{\INS''})^{\Sig^{\INM'}(\cdot)_{\Dex^{\INS''}}})(\Dex^{\INM'})
:\Dex^{\INS}_{\INM'\circ\INM}(\cdot)\Leftarrow\Dex^{\INS''}_{\INM'\circ\INM}(\cdot)\in\newprod_{\cdot\in\Sig^{\INS}}(\Sig^{\INM'\circ\INM}(\cdot)/\Sig^{\INS''})^{\Sig^{\INM'\circ\INM}(\cdot)_{\Dex^{\INS''}}}$
holds true.
\end{itemize}
\item[universal morphism]
Let $\Sigma\in|\Sig^{\INS}|$ and $X\in|\Sig^{\INM'\circ\INM}(\Sigma)_{\Dex^{\INS''}}|$.
For any $\chi\in|\Sigma/\Sig^{\INS}|$ and $\theta:\Sig^{\INM'\circ\INM}(\Sigma)^{\Dex^{\INS}}(X)\to(\Sigma/\Sig^{\INM'\circ\INM})(\chi)$,
we must prove that there exists a unique $\theta''\in|\Sigma/\Sig^{\INS}|$ such that $(\Sigma/\Sig^{\INM'\circ\INM})(\theta'')\circ\Sigma^{\Dex^{\INM'\circ\INM}}(X)=\theta$.
Note that $\Sigma^{\Dex^{\INM'\circ\INM}}(X)=(\Sig^{\INM}(\Sigma)/\Sig^{\INM'})(\Sigma^{\Dex^{\INM}}(\Sig^{\INM}(\Sigma)_{\Dex^{\INM'}}(X)))\circ\Sig^{\INM}(\Sigma)^{\Dex^{\INM'}}(X)$ by definition, so the condition can be rephrased as follows:
\begin{align*}
\theta&=(\Sigma/\Sig^{\INM'\circ\INM})(\theta'')\circ\Sigma^{\Dex^{\INM'\circ\INM}}(X)\\
&=((\Sig^{\INM}(\Sigma)/\Sig^{\INM'})\circ(\Sigma/\Sig^{\INM}))(\theta'')\circ(\Sig^{\INM}(\Sigma)/\Sig^{\INM'})(\Sigma^{\Dex^{\INM}}(\Sig^{\INM}(\Sigma)_{\Dex^{\INM'}}(X)))
\circ\Sig^{\INM}(\Sigma)^{\Dex^{\INM'}}(X)\\
&=(\Sig^{\INM}(\Sigma)/\Sig^{\INM'})((\Sigma/\Sig^{\INM})(\theta'')\circ\Sigma^{\Dex^{\INM}}(\Sig^{\INM}(\Sigma)_{\Dex^{\INM'}}(X)))
\circ\Sig^{\INM}(\Sigma)^{\Dex^{\INM'}}(X)
\end{align*}
Due to the properties of universal morphisms with respect to $\INM'$, there exists a unique $\theta'$ such that
$\theta=(\Sig^{\INM}(\Sigma)/\Sig^{\INM'})(\theta')\circ\Sig^{\INM}(\Sigma)^{\Dex^{\INM'}}(X)$
and due to the properties of universal morphisms with respect to $\INM$, there exists a unique $\theta''$ such that
$\theta'=(\Sigma/\Sig^{\INM})(\theta'')\circ\Sigma^{\Dex^{\INM}}(\Sig^{\INM}(\Sigma)_{\Dex^{\INM'}}(X))$.
\item[translatability of expanded models]
Let take $\Sigma\in|\Sig^{\INS}|$, $X''\in|\Sig^{\INM'\circ\INM}(\Sigma)_{\Dex^{\INS''}}|$, $\A\in|\Mod^{\INS}(\Sigma)|$, and
$\Sig^{\INM'\circ\INM}(\Sigma)^{\Dex^{\INS''}}(X'')$-expansion $\hat{\A}''$ of $\A'':=\Mod^{\INM'\circ\INM}(\Sigma)(\A)$.
\begin{itemize}
\item Let $\Sigma':=\Sig^{\INM}(\Sigma)\in|\Sig^{\INS'}|$, $\Sigma'':=\Sig^{\INM'}(\Sigma')\in|\Sig^{\INS''}|$, $X':=\Sigma'_{\Dex^{\INM'}}(X'')\in|\Sigma'_{\Dex^{\INS'}}|$. Note that $X''\in|\Sigma''_{\Dex^{\INS''}}|$.
\item
$(\Sigma^{\Dex^{\INM'\circ\INM}}*\Mod^{\INM'\circ\INM})(X'')=\Mod^{\INS''}(\Sigma^{\Dex^{\INM'\circ\INM}}[X''])\circ\Mod^{\INM'\circ\INM}(\Sigma^{\Dex^{\INS}}[\Sigma_{\Dex^{\INM'\circ\INM}}(X'')])$
\begin{itemize}
\item
$\Mod^{\INS''}(\Sigma^{\Dex^{\INM'\circ\INM}}[X''])$
$=\Mod^{\INS''}(\Sig^{\INM'}(\Sigma^{\Dex^{\INM}}[\Sig^{\INM}(\Sigma)_{\Dex^{\INM'}}(X'')])\circ\Sig^{\INM}(\Sigma)^{\Dex^{\INM'}}[X''])$ \\
$=\Mod^{\INS''}(\Sig^{\INM}(\Sigma)^{\Dex^{\INM'}}[X''])\circ\Mod^{\INS''}(\Sig^{\INM'}(\Sigma^{\Dex^{\INM}}[\Sig^{\INM}(\Sigma)_{\Dex^{\INM'}}(X'')]))$ \\
$=\Mod^{\INS''}(\Sigma'^{\Dex^{\INM'}}[X''])\circ\Mod^{\INS''}(\Sig^{\INM'}(\Sigma^{\Dex^{\INM}}[X']))$
\item
$\Mod^{\INM'\circ\INM}(\Sigma^{\Dex^{\INS}}[\Sigma_{\Dex^{\INM'\circ\INM}}(X'')])$
$=
((\Mod^{\INM'}\circ_{H}\Sig^{\INM\,op})\circ_{V}\Mod^{\INM})
(\Sigma^{\Dex^{\INS}}[\Sigma_{\Dex^{\INM}}(\Sig^{\INM}(\Sigma)_{\Dex^{\INM'}}(X''))])
$
\\
$=
\Mod^{\INM'}(\Sig^{\INM}(\Sigma^{\Dex^{\INS}}[\Sigma_{\Dex^{\INM}}(\Sig^{\INM}(\Sigma)_{\Dex^{\INM'}}(X''))]))
\circ
\Mod^{\INM}(\Sigma^{\Dex^{\INS}}[\Sigma_{\Dex^{\INM}}(\Sig^{\INM}(\Sigma)_{\Dex^{\INM'}}(X''))])
$
\\
$=
\Mod^{\INM'}(\Sig^{\INM}(\Sigma^{\Dex^{\INS}}[\Sigma_{\Dex^{\INM}}(X')]))
\circ
\Mod^{\INM}(\Sigma^{\Dex^{\INS}}[\Sigma_{\Dex^{\INM}}(X')])
$
\\
$=
\Mod^{\INM'}(
\Sigma^{\Dex^{\INS}_{\INM}}[X']
)
\circ
\Mod^{\INM}(\Sigma^{\Dex^{\INS}}[\Sigma_{\Dex^{\INM}}(X')])
$
\item[] Since $\Sigma^{\Dex^{\INM}}[X']:\Sigma^{\Dex^{\INS'}_{\INM}}[X']\to\Sigma^{\Dex^{\INS}_{\INM}}[X']$ and $\Mod^{\INM'}$ is a natural transformation, the following is commutative.
\begin{center}
$\xymatrix@R=30pt@C=120pt{
\Mod^{\INS'}(\Sigma^{\Dex^{\INS'}_{\INM}}[X']) \ar[d]|{\Mod^{\INM'}(\Sigma^{\Dex^{\INS'}_{\INM}}[X'])}
&\Mod^{\INS'}(\Sigma^{\Dex^{\INS}_{\INM}}[X']) \ar[l]|{\Mod^{\INS'}(\Sigma^{\Dex^{\INM}}[X'])} \ar[d]|{\Mod^{\INM'}(\Sigma^{\Dex^{\INS}_{\INM}}[X'])} \\
\Mod^{\INS''}(\Sig^{\INM'}(\Sigma^{\Dex^{\INS'}_{\INM}}[X']))
&\Mod^{\INS''}(\Sig^{\INM'}(\Sigma^{\Dex^{\INS}_{\INM}}[X'])) \ar[l]|{\Mod^{\INS''}(\Sig^{\INM'}(\Sigma^{\Dex^{\INM}}[X']))}
}$
\end{center}
\item[]
$\Mod^{\INS''}(\Sig^{\INM'}(\Sigma^{\Dex^{\INM}}[X']))\circ\Mod^{\INM'}(\Sigma^{\Dex^{\INS}_{\INM}}[X'])$
$=\Mod^{\INM'}(\Sigma^{\Dex^{\INS'}_{\INM}}[X'])\circ\Mod^{\INS'}(\Sigma^{\Dex^{\INM}}[X'])$ \\
$=\Mod^{\INM'}(
\Sig^{\INM}(\Sigma)^{\Dex^{\INS'}}[\Sig^{\INM}(\Sigma)_{\Dex^{\INM'}}(X'')]
)\circ\Mod^{\INS'}(\Sigma^{\Dex^{\INM}}[X'])$
$=\Mod^{\INM'}(
\Sigma'^{\Dex^{\INS'}}[\Sigma'_{\Dex^{\INM'}}(X'')]
)\circ\Mod^{\INS'}(\Sigma^{\Dex^{\INM}}[X'])$
\item[] Therefore, \\
$(\Sigma^{\Dex^{\INM'\circ\INM}}*\Mod^{\INM'\circ\INM})(X'')$ \\
$=
\Mod^{\INS''}(\Sigma'^{\Dex^{\INM'}}[X''])
\circ
\Mod^{\INM'}(\Sigma'^{\Dex^{\INS'}}[\Sigma'_{\Dex^{\INM'}}(X'')])
\circ
\Mod^{\INS'}(\Sigma^{\Dex^{\INM}}[X'])
\circ
\Mod^{\INM}(\Sigma^{\Dex^{\INS}}[\Sigma_{\Dex^{\INM}}(X')])
$
\\
$=
(\Sigma'^{\Dex^{\INM'}}*\Mod^{\INM'})(X'')
\circ
(\Sigma^{\Dex^{\INM}}*\Mod^{\INM})(X')
$
\end{itemize}
\item
From the above result, this operation can be split into two parts.
Since $\INM$ and $\INM'$ are $\Dex$-institution morphisms,
we can perform model translation for each of
$(\Sigma'^{\Dex^{\INM'}}*\Mod^{\INM'})(X'')$ and $(\Sigma^{\Dex^{\INM}}*\Mod^{\INM})(X')$.
Therefore, we can say that a model satisfying the conditions also exists for
$(\Sigma^{\Dex^{\INM'\circ\INM}}*\Mod^{\INM'\circ\INM})(X'')$.
(If both possess the uniqueness property, it exists uniquely.)
\end{itemize}
\end{proofcases}
\end{proofcases}
\item[confirmation of axioms]\
\begin{proofcases}
\item[identity element] Let $\INM:\INS\to\INS'\in\DINS$.
\begin{itemize}
\item
$\Dex^{\id_{\INS'}\circ\INM}(\cdot)
=
\newprod_{\Sig^{\INS}}((\Sig^{\INM}(\cdot)/\Sig^{\id_{\INS'}})^{\Sig^{\INM}(\cdot)_{\Dex^{\id_{\INS'}}}})(\Dex^{\INM})
\circ_{V}
\newprod_{\Sig^{\INM}}((\Sig^{\id_{\INS'}}(\cdot)/\Sig^{\INS'})^{\Sig^{\id_{\INS'}}(\cdot)_{\Dex^{\INS'}}})(\Dex^{\id_{\INS'}})
$
\\
$
=\Dex^{\INM}(\cdot)\circ_{V}\Dex^{\id_{\INS'}}(\cdot)
=\Dex^{\INM}(\cdot)
$
\item
$
\Dex^{\INM\circ\id_{\INS}}
=
\newprod_{\Sig^{\INS}}((\Sig^{\id_{\INS}}(\cdot)/\Sig^{\INM})^{\Sig^{\id_{\INS}}(\cdot)_{\Dex^{\INM}}})(\Dex^{\id_{\INS}})
\circ_{V}
\newprod_{\Sig^{\id_{\INS}}}((\Sig^{\INM}(\cdot)/\Sig^{\INS'})^{\Sig^{\INM}(\cdot)_{\Dex^{\INS'}}})(\Dex^{\INM})
$
\\
$
=\Dex^{\id_{\INS}}(\cdot)\circ_{V}\Dex^{\INM}(\cdot)
=\Dex^{\INM}(\cdot)
$
\end{itemize}
Therefore, $\Dex^{\id_{\INS'}\circ\INM}=\Dex^{\INM\circ\id_{\INS}}$.
\item[associative] Let $\INM:\INS\to\INS',\,\INM:\INS'\to\INS'',\,\INM:\INS''\to\INS'''\in\DINS$.
\begin{itemize}
\item Let $\Sigma\in|\Sig^{\INS}|$, $\Sigma':=\Sig^{\INM}(\Sigma)\in|\Sig^{\INS'}|$, $\Sigma'':=\Sig^{\INM'}(\Sigma')\in|\Sig^{\INS''}|$, $\Sigma''':=\Sig^{\INM''}(\Sigma'')\in|\Sig^{\INS'''}|$.
\item Let $X'''\in|\Sigma'''_{\Dex^{\INS'''}}|$, $X'':=\Sigma''_{\Dex^{\INM''}}(X''')\in|\Sigma''_{\Dex^{\INS''}}|$, $X':=\Sigma'_{\Dex^{\INM'}}(X'')\in|\Sigma'_{\Dex^{\INS'}}|$, $X:=\Sigma_{\Dex^{\INM}}(X')\in|\Sigma_{\Dex^{\INS}}|$.
\end{itemize}
\begin{itemize}
\item $\Dex^{(\INM''\circ\INM')\circ\INM}(\Sigma)=\pos{\Sigma^{\Dex^{(\INM''\circ\INM')\circ\INM}},\id_{\Sigma}}$
\item[]
$\Sigma^{\Dex^{(\INM''\circ\INM')\circ\INM}}[X''']$
$=\Sig^{\INM''\circ\INM'}(\Sigma^{\Dex^{\INM}}[\Sig^{\INM}(\Sigma)_{\Dex^{\INM''\circ\INM'}}(X''')])\circ\Sig^{\INM}(\Sigma)^{\Dex^{\INM''\circ\INM'}}[X''']$ \\
$=\Sig^{\INM''\circ\INM'}(\Sigma^{\Dex^{\INM}}[\Sigma'_{\Dex^{\INM'\circ\INM}}(X''')])\circ\Sigma'^{\Dex^{\INM''\circ\INM'}}[X''']$ \\
$=\Sig^{\INM''}(\Sig^{\INM'}(\Sigma^{\Dex^{\INM}}[\Sigma'_{\Dex^{\INM'}}(\Sig^{\INM'}(\Sigma')_{\Dex^{\INM''}}(X'''))]))\circ\Sig^{\INM''}(\Sigma'^{\Dex^{\INM'}}[\Sig^{\INM'}(\Sigma')_{\Dex^{\INM''}}(X''')])\circ\Sig^{\INM'}(\Sigma')^{\Dex^{\INM''}}[X''']$ \\
$=\Sig^{\INM''}(\Sig^{\INM'}(\Sigma^{\Dex^{\INM}}[X']))\circ\Sig^{\INM''}(\Sigma'^{\Dex^{\INM'}}[X''])\circ\Sigma''^{\Dex^{\INM''}}[X''']$ \\
$=\Sigma^{\Dex^{\INM}}[X']''\circ\Sigma'^{\Dex^{\INM'}}[X'']'\circ\Sigma''^{\Dex^{\INM''}}[X''']$ \\
where $\Sigma^{\Dex^{\INM}}[X']'':=\Sig^{\INM''}(\Sig^{\INM'}(\Sigma^{\Dex^{\INM}}[X']))$
and $\Sigma'^{\Dex^{\INM'}}[X'']':=\Sig^{\INM''}(\Sigma'^{\Dex^{\INM'}}[X''])$
\item $\Dex^{\INM''\circ(\INM'\circ\INM)}(\Sigma)=\pos{\Sigma^{\Dex^{\INM''\circ(\INM'\circ\INM)}},\id_{\Sigma}}$
\item[]
$\Sigma^{\Dex^{\INM''\circ(\INM'\circ\INM)}}[X''']$
$=\Sig^{\INM''}(\Sigma^{\Dex^{\INM'\circ\INM}}[\Sig^{\INM'\circ\INM}(\Sigma)_{\Dex^{\INM''}}(X''')])\circ\Sig^{\INM'\circ\INM}(\Sigma)^{\Dex^{\INM''}}[X''']$ \\
$=\Sig^{\INM''}(\Sigma^{\Dex^{\INM'\circ\INM}}[X''])\circ\Sigma''^{\Dex^{\INM''}}[X''']$ \\
$=\Sig^{\INM''}(\Sig^{\INM'}(\Sigma^{\Dex^{\INM}}[\Sig^{\INM}(\Sigma)_{\Dex^{\INM'}}(X'')])\circ\Sig^{\INM}(\Sigma)^{\Dex^{\INM'}}[X''])\circ\Sigma''^{\Dex^{\INM''}}[X''']$ \\
$=\Sig^{\INM''}(\Sig^{\INM'}(\Sigma^{\Dex^{\INM}}[X'])\circ\Sigma'^{\Dex^{\INM'}}[X''])\circ\Sigma''^{\Dex^{\INM''}}[X''']$ \\
$=\Sig^{\INM''}(\Sig^{\INM'}(\Sigma^{\Dex^{\INM}}[X']))\circ\Sig^{\INM''}(\Sigma'^{\Dex^{\INM'}}[X''])\circ\Sigma''^{\Dex^{\INM''}}[X''']$ \\
$=\Sigma^{\Dex^{\INM}}[X']''\circ\Sigma'^{\Dex^{\INM'}}[X'']'\circ\Sigma''^{\Dex^{\INM''}}[X''']$ \\
where $\Sigma^{\Dex^{\INM}}[X']'':=\Sig^{\INM''}(\Sig^{\INM'}(\Sigma^{\Dex^{\INM}}[X']))$
and $\Sigma'^{\Dex^{\INM'}}[X'']':=\Sig^{\INM''}(\Sigma'^{\Dex^{\INM'}}[X''])$
\end{itemize}
Therefore, $\Dex^{(\INM''\circ\INM')\circ\INM}=\Dex^{\INM''\circ(\INM'\circ\INM)}$.
\end{proofcases}
\end{proofcases}
\end{proof}

\begin{proof}[Proof of Lemma~\ref{lemma:S*M}]
$\Mod$ has already been discussed in the proof above, so we will now look at the part about $\Sen$.
\begin{proofcases}
\item[$(\Sen^{\id_{\INS}}*\Sigma^{\Dex^{\id_{\INS}}})(X)=\id_{\Sen^{\INS}(\Sigma^{\Dex^{\INS}}[X])}$]\
\begin{itemize}
\item[]$(\Sen^{\id_{\INS}}*\Sigma^{\Dex^{\id_{\INS}}})(X)$
\item[]$=\Sen^{\id_{\INS}}(\Sigma^{\Dex^{\INS}}[\Sigma_{\Dex^{\id_{\INS}}}(X)])\circ\Sen^{\INS}(\Sigma^{\Dex^{\id_{\INS}}}[X])$
\item[]$=\Sen^{\id_{\INS}}(\Sigma^{\Dex^{\INS}}[\id_{\Sigma_{\Dex^{\INS}}}(X)])\circ\Sen^{\INS}(\Sigma^{\id_{\Dex^{\INS}}}[X])$
\item[]$=\Sen^{\id_{\INS}}(\Sigma^{\Dex^{\INS}}[X])\circ\Sen^{\INS}(\id_{\Sigma^{\Dex^{\INS}}[X]})$
\item[]$=\id_{\Sen^{\INS}(\Sigma^{\Dex^{\INS}}[X])}\circ\id_{\Sen^{\INS}(\Sigma^{\Dex^{\INS}}[X])}$
\item[]$=\id_{\Sen^{\INS}(\Sigma^{\Dex^{\INS}}[X])}$
\end{itemize}
\item[$(\Sen^{\INM'\circ\INM}*\Sigma^{\Dex^{\INM'\circ\INM}})(X'')=(\Sen^{\INM}*\Sigma^{\Dex^{\INM}})(X')\circ(\Sen^{\INM'}*\Sigma'^{\Dex^{\INM'}})(X'')$]\ \\ \ \\
$(\Sen^{\INM'\circ\INM}*\Sigma^{\Dex^{\INM'\circ\INM}})(X'')
=\Sen^{\INM'\circ\INM}(\Sigma^{\Dex^{\INS}}[\Sigma_{\Dex^{\INM'\circ\INM}}(X'')])
\circ\Sen^{\INS''}(\Sigma^{\Dex^{\INM'\circ\INM}}[X''])$
\begin{itemize}
\item
$\Sen^{\INS''}(\Sigma^{\Dex^{\INM'\circ\INM}}[X''])$
$=\Sen^{\INS''}(\Sig^{\INM'}(\Sigma^{\Dex^{\INM}}[\Sig^{\INM}(\Sigma)_{\Dex^{\INM'}}(X'')])\circ\Sig^{\INM}(\Sigma)^{\Dex^{\INM'}}[X''])$ \\
$=\Sen^{\INS''}(\Sig^{\INM'}(\Sigma^{\Dex^{\INM}}[\Sig^{\INM}(\Sigma)_{\Dex^{\INM'}}(X'')]))
\circ\Sen^{\INS''}(\Sig^{\INM}(\Sigma)^{\Dex^{\INM'}}[X''])$\\
$=\Sen^{\INS''}(\Sig^{\INM'}(\Sigma^{\Dex^{\INM}}[X']))
\circ\Sen^{\INS''}(\Sigma'^{\Dex^{\INM'}}[X''])$
\item
$\Sen^{\INM'\circ\INM}(\Sigma^{\Dex^{\INS}}[\Sigma_{\Dex^{\INM'\circ\INM}}(X'')])$
$=
(\Sen^{\INM}\circ_{V}(\Sen^{\INM'}\circ_{H}\Sig^{\INM}))
(\Sigma^{\Dex^{\INS}}[\Sigma_{\Dex^{\INM}}(\Sig^{\INM}(\Sigma)_{\Dex^{\INM'}}(X''))])
$
\\
$=
\Sen^{\INM}(\Sigma^{\Dex^{\INS}}[\Sigma_{\Dex^{\INM}}(\Sig^{\INM}(\Sigma)_{\Dex^{\INM'}}(X''))])
\circ
\Sen^{\INM'}(\Sig^{\INM}(\Sigma^{\Dex^{\INS}}[\Sigma_{\Dex^{\INM}}(\Sig^{\INM}(\Sigma)_{\Dex^{\INM'}}(X''))]))
$
\\
$=
\Sen^{\INM}(\Sigma^{\Dex^{\INS}}[\Sigma_{\Dex^{\INM}}(X')])
\circ
\Sen^{\INM'}(\Sig^{\INM}(\Sigma^{\Dex^{\INS}}[\Sigma_{\Dex^{\INM}}(X')]))
$
\\
$=
\Sen^{\INM}(\Sigma^{\Dex^{\INS}}[\Sigma_{\Dex^{\INM}}(X')])
\circ
\Sen^{\INM'}(\Sigma^{\Dex^{\INS}_{\INM}}[X'])
$
\item[] Since $\Sigma^{\Dex^{\INM}}[X']:\Sigma^{\Dex^{\INS'}_{\INM}}[X']\to\Sigma^{\Dex^{\INS}_{\INM}}[X']$ and $\Sen^{\INM'}$ is a natural transformation, the following is commutative.
\begin{center}
$\xymatrix@R=30pt@C=120pt{
\Sen^{\INS'}(\Sigma^{\Dex^{\INS'}_{\INM}}[X']) \ar@{<-}[d]|{\Sen^{\INM'}(\Sigma^{\Dex^{\INS'}_{\INM}}[X'])}
&\Sen^{\INS'}(\Sigma^{\Dex^{\INS}_{\INM}}[X']) \ar@{<-}[l]|{\Sen^{\INS'}(\Sigma^{\Dex^{\INM}}[X'])} \ar@{<-}[d]|{\Sen^{\INM'}(\Sigma^{\Dex^{\INS}_{\INM}}[X'])} \\
\Sen^{\INS''}(\Sig^{\INM'}(\Sigma^{\Dex^{\INS'}_{\INM}}[X']))
&\Sen^{\INS''}(\Sig^{\INM'}(\Sigma^{\Dex^{\INS}_{\INM}}[X'])) \ar@{<-}[l]|{\Sen^{\INS''}(\Sig^{\INM'}(\Sigma^{\Dex^{\INM}}[X']))}
}$
\end{center}
\item[]
$\Sen^{\INM'}(\Sigma^{\Dex^{\INS}_{\INM}}[X'])\circ\Sen^{\INS''}(\Sig^{\INM'}(\Sigma^{\Dex^{\INM}}[X']))$
$=\Sen^{\INS'}(\Sigma^{\Dex^{\INM}}[X'])\circ\Sen^{\INM'}(\Sigma^{\Dex^{\INS'}_{\INM}}[X'])$ \\
$=\Sen^{\INS'}(\Sigma^{\Dex^{\INM}}[X'])
\circ\Sen^{\INM'}(\Sig^{\INM}(\Sigma)^{\Dex^{\INS'}}[\Sig^{\INM}(\Sigma)_{\Dex^{\INM'}}(X'')])$
$=\Sen^{\INS'}(\Sigma^{\Dex^{\INM}}[X'])
\circ\Sen^{\INM'}(\Sigma'^{\Dex^{\INS'}}[\Sigma'_{\Dex^{\INM'}}(X'')])$
\item[] Therefore, \\
$(\Sen^{\INM'\circ\INM}*\Sigma^{\Dex^{\INM'\circ\INM}})(X'')$ \\
$=
\Sen^{\INM}(\Sigma^{\Dex^{\INS}}[\Sigma_{\Dex^{\INM}}(X')])
\circ
\Sen^{\INS'}(\Sigma^{\Dex^{\INM}}[X'])
\circ
\Sen^{\INM'}(\Sigma'^{\Dex^{\INS'}}[\Sigma'_{\Dex^{\INM'}}(X'')])
\circ
\Sen^{\INS''}(\Sigma'^{\Dex^{\INM'}}[X''])
$
\\
$=
(\Sen^{\INM}*\Sigma^{\Dex^{\INM}})(X')
\circ
(\Sen^{\INM'}*\Sigma'^{\Dex^{\INM'}})(X'')
$
\end{itemize}
\end{proofcases}
\end{proof}

\begin{proof}[Proof of well-definedness of Definition~\ref{def:FOL:}]
We will investigate well-definedness.
(Incidentally, as can be seen from the proof below, the uniqueness of the extension, which is the last condition in the definition of $\Dex$-institution and $\Dex$-institution morphism, is not required for introducing the quantifier, but if uniqueness is assumed, we can also introduce $\exists^{\kappa}$ (there exist $\kappa$ instances).)
\begin{proofcases}
\item[well-defined]
Since no other parts have been changed, we only need to check the $\Sen$ and $\models$ parts.
\begin{proofcases}
\item[$\FOL^{{\relax}}_{\alpha}(\INS)\in|\DINS|$]\
\begin{proofcases}
\item[$\Sen^{\FOL^{{\relax}}_{\alpha}(\INS)}:\Sig^{\FOL^{{\relax}}_{\alpha}(\INS)}\to\Set\in\CAT$] Let $\INS\in|\DINS|$.
\begin{proofcases}
\item[well-defined]
Unlike induction, recursive definitions require the set-like property of the binary relation used for recursion.
For example it is generally not possible to define $\Sen_{n}$ recursively using the “size of a sentence” (number of symbols used).
This is because the definition of $\Sen^{\FOL^{{\relax}}_{\alpha}(\INS)}(\Sigma)$ moves the elements of $\Sig^{\INS}$, so even if we collect only atomic formulas, $\Sen_{0}$ will still be a proper class.

However, in practice, only the elements of $\cup\{\Sen^{\FOL^{{\relax}}_{\alpha}(\INS)}(\Sigma[X])\mid X\in|\Sigma_{\Dex^{\INS}}|\}$ are referenced to determine whether a quantifier formula is a $\Sigma$-sentence.
So, if we first fix $\Sigma$ and then define the entire class of signatures obtained by finitely extending it with variables as $\Dex^{\INS\,*}_{\Sigma}$,
then since this part is a set, it allowing a recursive definition of the characteristic function $1_{\{v\in\Sen(s)\mid v,s\}}:\pos{V,{\in^{+}}}\times\Dex^{\INS\,*}_ {\Sigma}\to2$.
($V$ is the class of all sets. Note that $a,b\in^{2}\{\{a\},\{a,b\}\}=:\pos{a,b}$.)
By gluing these functions together, we obtain a similar result.
Note that a collection of sentences is a set. Let the class of sentences that appear in $\Dex^{\INS\,*}_ {\Sigma}$ be $\Sen^{\FOL^{{\relax}}_{\alpha}(\INS)}(\Dex^{\INS\,*}_ {\Sigma})$.
If we divide this into sets by sentence height, we can write $\Sen^{\FOL^{{\relax}}_{\alpha}(\INS)}(\Dex^{\INS\,*}_ {\Sigma})=\cup_{\beta\in ON}\Sen^{\FOL^{{\relax}}_{\alpha}(\INS)}_{\beta}(\Dex^{\INS\,*}_ {\Sigma})$.
This seems like it would be a proper class, but in reality, since $\alpha^{+}$ is a regular cardinal, the expansion stops halfway and becomes $\cup_{\beta\in\alpha^{+}}\Sen^{\FOL^{{\relax}}_{\alpha}(\INS)}_{\beta}(\Dex^{\INS\,*}_ {\Sigma})$.
The same holds true for $\Sen^{\FOL^{{\relax}}_{\alpha}(\INS)}(\chi)$.
\item[preserve operations]
Ensure that operations are preserved.
\begin{proofcases}
\item[$\Sen^{\FOL^{{\relax}}_{\alpha}(\INS)}(\id_{\Sigma})=\id_{\Sen^{\FOL^{{\relax}}_{\alpha}(\INS)}(\Sigma)}$]
We show that the following $P(\gamma)$ holds true for each sentence $\gamma$ by induction based on sentence complexity.
($P(\gamma)\Leftrightarrow\Sen^{\FOL^{{\relax}}_{\alpha}(\INS)}(\id_{\Sigma})(\gamma)=\id_{\Sen^{\FOL^{{\relax}}_{\alpha}(\INS)}(\Sigma)}(\gamma)$ holds true for each $\Sigma\in|\Sig^{\INS}|$ such that $\gamma\in\Sen^{\FOL^{{\relax}}_{\alpha}(\INS)}(\Sigma)$)
\footnote{
Sentence complexity refers to “partial tree orders on tree structures consisting solely of symbols.”
Note that by deliberately swapping $\gamma$ and $\Sigma$ here, we demonstrate that it is permissible to alter the signature under the induction hypothesis (IH) being referenced.
}
The same applies to the other cases, so we will only check three.
\begin{proofcases}
\item[$\gamma=\pos{\phi,atomic}$]
\begin{align*}
&\Sen^{\FOL^{{\relax}}_{\alpha}(\INS)}(\id_{\Sigma})(\pos{\phi,atomic})
=\pos{\Sen^{\INS}(\id_{\Sigma})(\phi),atomic}
=\pos{\id_{\Sen^{\INS}(\Sigma)}(\phi),atomic} \\
&=\pos{\phi,atomic}
=\id_{\Sen^{\FOL^{{\relax}}_{\alpha}(\INS)}(\Sigma)}(\pos{\phi,atomic})
\end{align*}
\item[$\gamma=\neg\phi$]
\begin{align*}
&\Sen^{\FOL^{{\relax}}_{\alpha}(\INS)}(\id_{\Sigma})(\neg\phi)
=\neg\Sen^{\FOL^{{\relax}}_{\alpha}(\INS)}(\id_{\Sigma})(\phi)
\stackrel{IH}{=}\neg\id_{\Sen^{\FOL^{{\relax}}_{\alpha}(\INS)}(\Sigma)}(\phi) \\
&\neg\phi
=\id_{\Sen^{\FOL^{{\relax}}_{\alpha}(\INS)}(\Sigma)}(\neg\phi)
\end{align*}
\item[$\gamma=\Exists{X}\phi$]
\begin{align*}
&\Sen^{\FOL^{{\relax}}_{\alpha}(\INS)}(\id_{\Sigma})(\Exists{X}\phi)
=\Exists{\id_{\Sigma\,\Dex^{\INS}}(X)}\Sen^{\FOL^{{\relax}}_{\alpha}(\INS)}(\id_{\Sigma}^{\Dex^{\INS}}[X])(\phi) \\
&\stackrel{\ref{eq:IDex-3}}{=}\Exists{\id_{\Sigma\,\Dex^{\INS}}(X)}\Sen^{\FOL^{{\relax}}_{\alpha}(\INS)}(\id_{\Sigma^{\Dex^{\INS}}[X]})(\phi)
\stackrel{IH}{=}\Exists{\id_{\Sigma\,\Dex^{\INS}}(X)}\id_{\Sen^{\FOL^{{\relax}}_{\alpha}(\INS)}(\Sigma^{\Dex^{\INS}}[X])}(\phi) \\
&=\Exists{\id_{\Sigma^{\Dex^{\INS}}(X)}}\id_{\Sen^{\FOL^{{\relax}}_{\alpha}(\INS)}(\Sigma^{\Dex^{\INS}}[X])}(\phi)
=\Exists{X}\phi
=\id_{\Sen^{\FOL^{{\relax}}_{\alpha}(\INS)}(\Sigma)}(\Exists{X}\phi)
\end{align*}
\end{proofcases}
\item[$\Sen^{\FOL^{{\relax}}_{\alpha}(\INS)}(\chi'\circ\chi)=\Sen^{\FOL^{{\relax}}_{\alpha}(\INS)}(\chi')\circ\Sen^{\FOL^{{\relax}}_{\alpha}(\INS)}(\chi)$]
We show that the following $P(\gamma)$ holds true for each sentence $\gamma$ by induction based on sentence complexity.
($P(\gamma)\Leftrightarrow\Sen^{\FOL^{{\relax}}_{\alpha}(\INS)}(\chi'\circ\chi)(\gamma)=\Sen^{\FOL^{{\relax}}_{\alpha}(\INS)}(\chi')\circ\Sen^{\FOL^{{\relax}}_{\alpha}(\INS)}(\chi)(\gamma)$ holds true for each $\chi:\Sigma\to\Sigma'\in\Sig^{\INS}$ such that $\gamma\in\Sen^{\FOL^{{\relax}}_{\alpha}(\INS)}(\Sigma)$)
The same applies to the other cases, so we will only check three.
\begin{proofcases}
\item[$\gamma=\pos{\phi,atomic}$]
\begin{align*}
&\Sen^{\FOL^{{\relax}}_{\alpha}(\INS)}(\chi'\circ\chi)(\pos{\phi,atomic})
=\pos{\Sen^{\INS}(\chi'\circ\chi)(\phi),atomic}
=\pos{\Sen^{\INS}(\chi')\circ\Sen^{\INS}(\chi)(\phi),atomic} \\
&=\pos{\Sen^{\INS}(\chi')(\Sen^{\INS}(\chi)(\phi)),atomic}
=\Sen^{\FOL^{{\relax}}_{\alpha}(\INS)}(\chi')\circ\Sen^{\FOL^{{\relax}}_{\alpha}(\INS)}(\chi)(\pos{\phi,atomic})
\end{align*}
\item[$\gamma=\neg\phi$]
\begin{align*}
&\Sen^{\FOL^{{\relax}}_{\alpha}(\INS)}(\chi'\circ\chi)(\neg\phi)
=\neg\Sen^{\FOL^{{\relax}}_{\alpha}(\INS)}(\chi'\circ\chi)(\phi)
\stackrel{IH}{=}\neg\Sen^{\FOL^{{\relax}}_{\alpha}(\INS)}(\chi')\circ\Sen^{\FOL^{{\relax}}_{\alpha}(\INS)}(\chi)(\phi) \\
&=\neg\Sen^{\FOL^{{\relax}}_{\alpha}(\INS)}(\chi')(\Sen^{\FOL^{{\relax}}_{\alpha}(\INS)}(\chi)(\phi))
=\Sen^{\FOL^{{\relax}}_{\alpha}(\INS)}(\chi')\circ\Sen^{\FOL^{{\relax}}_{\alpha}(\INS)}(\chi)(\neg\phi)
\end{align*}
\item[$\gamma=\Exists{X}\phi$]
\begin{align*}
&\Sen^{\FOL^{{\relax}}_{\alpha}(\INS)}(\chi'\circ\chi)(\Exists{X}\phi)
=\Exists{(\chi'\circ\chi)_{\Dex^{\INS}}(X)}\Sen^{\FOL^{{\relax}}_{\alpha}(\INS)}((\chi'\circ\chi)^{\Dex^{\INS}}[X])(\phi) \\
&\stackrel{\ref{eq:IDex-3}}{=}\Exists{(\chi'\circ\chi)_{\Dex^{\INS}}(X)}\Sen^{\FOL^{{\relax}}_{\alpha}(\INS)}(\chi'^{\Dex^{\INS}}[\chi_{\Dex^{\INS}}(X)]\circ\chi^{\Dex^{\INS}}[X])(\phi) \\
&\stackrel{IH}{=}\Exists{(\chi'\circ\chi)_{\Dex^{\INS}}(X)}\Sen^{\FOL^{{\relax}}_{\alpha}(\INS)}(\chi'^{\Dex^{\INS}}[\chi_{\Dex^{\INS}}(X)])\circ\Sen^{\FOL^{{\relax}}_{\alpha}(\INS)}(\chi^{\Dex^{\INS}}[X])(\phi) \\
&=\Exists{\chi'_{\Dex^{\INS}}\circ\chi_{\Dex^{\INS}}(X)}\Sen^{\FOL^{{\relax}}_{\alpha}(\INS)}(\chi'^{\Dex^{\INS}}[\chi_{\Dex^{\INS}}(X)])\circ\Sen^{\FOL^{{\relax}}_{\alpha}(\INS)}(\chi^{\Dex^{\INS}}[X])(\phi) \\
&=\Exists{\chi'_{\Dex^{\INS}}(\chi_{\Dex^{\INS}}(X))}\Sen^{\FOL^{{\relax}}_{\alpha}(\INS)}(\chi'^{\Dex^{\INS}}[\chi_{\Dex^{\INS}}(X)])(\Sen^{\FOL^{{\relax}}_{\alpha}(\INS)}(\chi^{\Dex^{\INS}}[X])(\phi)) \\
&=\Sen^{\FOL^{{\relax}}_{\alpha}(\INS)}(\chi')(\Exists{\chi_{\Dex^{\INS}}(X)}\Sen^{\FOL^{{\relax}}_{\alpha}(\INS)}(\chi^{\Dex^{\INS}}[X])(\phi)) \\
&=\Sen^{\FOL^{{\relax}}_{\alpha}(\INS)}(\chi')(\Sen^{\FOL^{{\relax}}_{\alpha}(\INS)}(\chi)(\Exists{X}\phi))
=\Sen^{\FOL^{{\relax}}_{\alpha}(\INS)}(\chi')\circ\Sen^{\FOL^{{\relax}}_{\alpha}(\INS)}(\chi)(\Exists{X}\phi)
\end{align*}
\end{proofcases}
\end{proofcases}
\end{proofcases}
\item[$\Mod^{\FOL^{{\relax}}_{\alpha}(\INS)}(\chi)(\A')\models^{\FOL^{{\relax}}_{\alpha}(\INS)}\gamma$ iff $\A'\models^{\FOL^{{\relax}}_{\alpha}(\INS)}\Sen^{\FOL^{{\relax}}_{\alpha}(\INS)}(\chi)(\gamma)$]
We show that the following $P(\gamma)$ holds true for each sentence $\gamma$ by induction based on sentence complexity.
($P(\gamma)\Leftrightarrow$
$\Mod^{\FOL^{{\relax}}_{\alpha}(\INS)}(\chi)(\A')\models^{\FOL^{{\relax}}_{\alpha}(\INS)}\gamma$ iff $\A'\models^{\FOL^{{\relax}}_{\alpha}(\INS)}\Sen^{\FOL^{{\relax}}_{\alpha}(\INS)}(\chi)(\gamma)$
holds true for each $\chi:\Sigma\to\Sigma'\in\Sig^{\INS}$ and $\A'\in|\Mod^{\FOL^{{\relax}}_{\alpha}(\INS)}(\Sigma')|$
such that $\gamma\in\Sen^{\FOL^{{\relax}}_{\alpha}(\INS)}(\Sigma)$)
The same applies to the other cases, so we will only check three.
\begin{proofcases}
\item[$\gamma=\pos{\phi,atomic}$]
\begin{align*}
&\Mod^{\FOL^{{\relax}}_{\alpha}(\INS)}(\chi)(\A')\models^{\FOL^{{\relax}}_{\alpha}(\INS)}\pos{\phi,atomic} \\
&\Leftrightarrow\Mod^{\INS}(\chi)(\A')\models^{\INS}\phi
\Leftrightarrow\A'\models^{\INS}\Sen^{\INS}(\chi)(\phi) \\
&\Leftrightarrow\A'\models^{\FOL^{{\relax}}_{\alpha}(\INS)}\Sen^{\FOL^{{\relax}}_{\alpha}(\INS)}(\chi)(\pos{\phi,atomic})
\end{align*}
\item[$\gamma=\neg\phi$]
\begin{align*}
&\Mod^{\FOL^{{\relax}}_{\alpha}(\INS)}(\chi)(\A')\models^{\FOL^{{\relax}}_{\alpha}(\INS)}\neg\phi
\Leftrightarrow\Mod^{\FOL^{{\relax}}_{\alpha}(\INS)}(\chi)(\A')\not\models^{\FOL^{{\relax}}_{\alpha}(\INS)}\phi \\
&\stackrel{IH}{\Leftrightarrow}\A'\not\models^{\FOL^{{\relax}}_{\alpha}(\INS)}\Sen^{\FOL^{{\relax}}_{\alpha}(\INS)}(\chi)(\phi)
\Leftrightarrow\A'\models^{\FOL^{{\relax}}_{\alpha}(\INS)}\Sen^{\FOL^{{\relax}}_{\alpha}(\INS)}(\chi)(\neg\phi)
\end{align*}
\item[$\gamma=\Exists{X}\phi$]
\begin{align*}
&\Mod^{\FOL^{{\relax}}_{\alpha}(\INS)}(\chi)(\A')\models^{\FOL^{{\relax}}_{\alpha}(\INS)}\Exists{X}\phi \\
&\Leftrightarrow\hat{\A}\models^{\FOL^{{\relax}}_{\alpha}(\INS)}\phi \text{ for some } \Sigma^{\Dex^{\INS}}(X)\text{-expansion } \hat{\A} \text{ of } \A:=\Mod^{\FOL^{{\relax}}_{\alpha}(\INS)}(\chi)(\A')=\Mod^{\INS}(\chi)(\A') \\
&\A'\models^{\FOL^{{\relax}}_{\alpha}(\INS)}\Sen^{\FOL^{{\relax}}_{\alpha}(\INS)}(\chi)(\Exists{X}\phi) \\
&\Leftrightarrow\A'\models^{\FOL^{{\relax}}_{\alpha}(\INS)}\Exists{\chi_{\Dex^{\INS}}(X)}\Sen^{\FOL^{{\relax}}_{\alpha}(\INS)}(\chi^{\Dex^{\INS}}[X])(\phi) \\
&\Leftrightarrow\hat{\A}'\models^{\FOL^{{\relax}}_{\alpha}(\INS)}\Sen^{\FOL^{{\relax}}_{\alpha}(\INS)}(\chi^{\Dex^{\INS}}[X])(\phi) \text{ for some } \Sigma'^{\Dex^{\INS}}(\chi_{\Dex^{\INS}}(X))\text{-expansion } \hat{\A}' \text{ of } \A' \\
&\stackrel{IH}{\Leftrightarrow}\Mod^{\FOL^{{\relax}}_{\alpha}(\INS)}(\chi^{\Dex^{\INS}}[X])(\hat{\A}')\models^{\FOL^{{\relax}}_{\alpha}(\INS)}\phi \text{ for some } \Sigma'^{\Dex^{\INS}}(\chi_{\Dex^{\INS}}(X))\text{-expansion } \hat{\A}' \text{ of } \A' \\
&\Leftrightarrow\hat{\A}\models^{\FOL^{{\relax}}_{\alpha}(\INS)}\phi  \text{ for some } \hat{\A} \\ 
&\,\,\,\,\,\text{ such that }\hat{\A}=\Mod^{\INS}(\chi^{\Dex^{\INS}}[X])(\hat{\A}') \text{ for some } \Sigma'^{\Dex^{\INS}}(\chi_{\Dex^{\INS}}(X))\text{-expansion } \hat{\A}' \text{ of } \A' \\
&\stackrel{*}{\Leftrightarrow}\hat{\A}\models^{\FOL^{{\relax}}_{\alpha}(\INS)}\phi \text{ for some } \Sigma^{\Dex^{\INS}}(X)\text{-expansion } \hat{\A} \text{ of } \A:=\Mod^{\INS}(\chi)(\A') \\
\end{align*}
* It is clear from the last condition of $\Dex$-institution that the part after "such that" is rewritten in this way.
\begin{center}
$\xymatrix@R=40pt@C=40pt{
\hat{\A} \ar@{.}[r] \ar@{.>}[d]|{\red} & \Mod^{\INS}(\Sigma^{\Dex^{\INS}}[X])
\ar@{<-}[rrr]|{\Mod^{\INS}(\chi^{\Dex^{\INS}}[X])} \ar[d]|{\Mod^{\INS}(\Sigma^{\Dex^{\INS}}(X))} &&& \Mod^{\INS}(\Sigma'^{\Dex^{\INS}}[\chi_{\Dex^{\INS}}(X)]) \ar[d]|{\Mod^{\INS}(\Sigma'^{\Dex^{\INS}}(\chi_{\Dex^{\INS}}(X)))} & \ar@{.}[l] \hat{\A}' \ar@{.>}[d]|{\red} \\
\A \ar@{.}[r] & \Mod^{\INS}(\Sigma) \ar@{<-}[rrr]|{\Mod^{\INS}(\chi)} &&& \Mod^{\INS}(\Sigma') & \ar@{.}[l] \A'
}$
\end{center}
\item[$\gamma=\exists^{\kappa}{X}\cdot\phi$ (extra)]
We will show that if $\INS$ has the uniqueness property, the satisfaction condition holds true even if we add $\exists^{\kappa}{X}\cdot\phi$ (there are at least $\kappa$ instances).
In a similar way to above, we can show the following:
\begin{align*}
&\Mod^{\FOL^{{\relax}}_{\alpha}(\INS)}(\chi)(\A')\models^{\FOL^{{\relax}}_{\alpha}(\INS)}\exists^{\kappa}{X}\cdot\phi \\
&\Leftrightarrow\hat{\A}\models^{\FOL^{{\relax}}_{\alpha}(\INS)}\phi \text{ for } \kappa\ \Sigma^{\Dex^{\INS}}(X)\text{-expansions } \hat{\A} \text{ of } \A:=\Mod^{\FOL^{{\relax}}_{\alpha}(\INS)}(\chi)(\A')=\Mod^{\INS}(\chi)(\A') \\
&\Leftrightarrow\card\{\hat{\A}\mid\hat{\A}\models^{\FOL^{{\relax}}_{\alpha}(\INS)}\phi\text{ and }\hat{\A} \text{ is a } \Sigma^{\Dex^{\INS}}(X)\text{-expansion} \text{ of } \A:=\Mod^{\INS}(\chi)(\A')\}\geq\kappa\\
&\A'\models^{\FOL^{{\relax}}_{\alpha}(\INS)}\Sen^{\FOL^{{\relax}}_{\alpha}(\INS)}(\chi)(\exists^{\kappa}{X}\cdot\phi) \\
&\stackrel{IH}{\Leftrightarrow}\Mod^{\FOL^{{\relax}}_{\alpha}(\INS)}(\chi^{\Dex^{\INS}}[X])(\hat{\A}')\models^{\FOL^{{\relax}}_{\alpha}(\INS)}\phi \text{ for } \kappa\ \Sigma'^{\Dex^{\INS}}(\chi_{\Dex^{\INS}}(X))\text{-expansions } \hat{\A}' \text{ of } \A' \\
&\Leftrightarrow\card\{\hat{\A}'\mid\Mod^{\INS}(\chi^{\Dex^{\INS}}[X])(\hat{\A}')\models^{\FOL^{{\relax}}_{\alpha}(\INS)}\phi\text{ and } \hat{\A}' \text{ is a } \Sigma'^{\Dex^{\INS}}(\chi_{\Dex^{\INS}}(X))\text{-expansion} \text{ of } \A'\}\geq\kappa
\end{align*}
From the unique existence condition, $\Mod^{\INS}(\chi^{\Dex^{\INS}}[X])$ is a bijection between these two sets, so the two conditions coincide.
\end{proofcases}
\end{proofcases}
\item[$\FOL^{{\relax}}_{\alpha}(\INM:\INS\to\INS'):\FOL^{{\relax}}_{\alpha}(\INS)\to\FOL^{{\relax}}_{\alpha}(\INS')\in\DINS$] Let $\INM:\INS\to\INS'\in\DINS$.
\begin{proofcases}
\item[$\Sen^{\FOL^{{\relax}}_{\alpha}(\INM)}:\Sen^{\FOL^{{\relax}}_{\alpha}(\INS)}(\cdot)\Leftarrow\Sen^{\FOL^{{\relax}}_{\alpha}(\INS')}(\Sig^{\FOL^{{\relax}}_{\alpha}(\INM)}(\cdot))\in\Set^{\Sig^{\FOL^{{\relax}}_{\alpha}(\INS)}}$]\
\begin{proofcases}
\item[well-defined]
First, notice that $\Sigma^{\Dex^{\INM}}[X]\in\Sig^{\INS'}$.
$
\Sen^{\FOL^{{\relax}}_{\alpha}(\INS')}(\Sigma^{\Dex^{\INM}}[X])
$
does not change the complexity of the sentences.
Next, pay attention to $\Sigma_{\Dex^{\INM}}(X)\in\Sigma_{\Dex^{\INS}}$. In other words, what is referenced in the recursive definition is what is obtained by repeating the extension of $\Sigma$ by variables a finite number of times.
So, in reality, the entire $\Sen^{\FOL^{{\relax}}_{\alpha}(\INM)}$ shown in the diagram has not yet been defined,
but if we restrict it in the same way as with $\Sen^{\INS}$,
we can consider the part used in the recursion to be already defined.
\item[commutativity]
We show that the following $P(\gamma)$ holds true for each sentence $\gamma$ by induction based on sentence complexity.
($P(\gamma)\Leftrightarrow$ when $\gamma$ is substituted, the following commutes:
\begin{center}
$\xymatrix@R=40pt@C=20pt{
& \Sen^{\FOL^{{\relax}}_{\alpha}(\INS')}(\Sig^{\INM}(\Sigma)) \ar[rrrr]|{\Sen^{\FOL^{{\relax}}_{\alpha}(\INS')}(\Sig^{\INM}(\chi))} \ar[d]|{\Sen^{\FOL^{{\relax}}_{\alpha}(\INM)}(\Sigma)} &&&& \Sen^{\FOL^{{\relax}}_{\alpha}(\INS')}(\Sig^{\INM}(\Sigma')) \ar[d]|{\Sen^{\FOL^{{\relax}}_{\alpha}(\INM)}(\Sigma')} & \\
& \Sen^{\FOL^{{\relax}}_{\alpha}(\INS)}(\Sigma) \ar[rrrr]|{\Sen^{\FOL^{{\relax}}_{\alpha}(\INS)}(\chi)} &&&& \Sen^{\FOL^{{\relax}}_{\alpha}(\INS)}(\Sigma') &
}$
\end{center}
for each $\chi:\Sigma\to\Sigma'\in\Sig^{\INS}$ such that $\gamma\in\Sen^{\FOL^{{\relax}}_{\alpha}(\INS')}(\Sig^{\INM}(\Sigma))$.)
The same applies to the other cases, so we will only check three.
\begin{proofcases}
\item[$\gamma=\pos{\phi,atomic}$]
\begin{align*}
&(\Sen^{\FOL^{{\relax}}_{\alpha}(\INS)}(\chi)\circ\Sen^{\FOL^{{\relax}}_{\alpha}(\INM)}(\Sigma))(\pos{\phi,atomic})
=\Sen^{\FOL^{{\relax}}_{\alpha}(\INS)}(\chi)(\pos{\Sen^{\INM}(\Sigma)(\phi),atomic}) \\
&=\pos{\Sen^{\INS}(\chi)(\Sen^{\INM}(\Sigma)(\phi)),atomic}
=\pos{(\Sen^{\INS}(\chi)\circ\Sen^{\INM}(\Sigma))(\phi),atomic} \\
&=\pos{(\Sen^{\INM}(\Sigma')\circ\Sen^{\INS'}(\Sig^{\INM}(\chi)))(\phi),atomic}
=\pos{\Sen^{\INM}(\Sigma')(\Sen^{\INS'}(\Sig^{\INM}(\chi))(\phi)),atomic} \\
&=\Sen^{\FOL^{{\relax}}_{\alpha}(\INM)}(\Sigma')(\pos{\Sen^{\INS'}(\Sig^{\INM}(\chi))(\phi),atomic})
=(\Sen^{\FOL^{{\relax}}_{\alpha}(\INM)}(\Sigma')\circ\Sen^{\FOL^{{\relax}}_{\alpha}(\INS')}(\Sig^{\INM}(\chi)))(\pos{\phi,atomic})
\end{align*}
\item[$\gamma=\neg\phi$]
\begin{align*}
&(\Sen^{\FOL^{{\relax}}_{\alpha}(\INS)}(\chi)\circ\Sen^{\FOL^{{\relax}}_{\alpha}(\INM)}(\Sigma))(\neg\phi)
=\Sen^{\FOL^{{\relax}}_{\alpha}(\INS)}(\chi)(\neg\Sen^{\FOL^{{\relax}}_{\alpha}(\INM)}(\Sigma)(\phi)) \\
&=\neg\Sen^{\FOL^{{\relax}}_{\alpha}(\INS)}(\chi)(\Sen^{\FOL^{{\relax}}_{\alpha}(\INM)}(\Sigma)(\phi))
=\neg(\Sen^{\FOL^{{\relax}}_{\alpha}(\INS)}(\chi)\circ\Sen^{\FOL^{{\relax}}_{\alpha}(\INM)}(\Sigma))(\phi) \\
&\stackrel{IH}{=}\neg(\Sen^{\FOL^{{\relax}}_{\alpha}(\INM)}(\Sigma')\circ\Sen^{\FOL^{{\relax}}_{\alpha}(\INS')}(\Sig^{\INM}(\chi)))(\phi)
=\neg\Sen^{\FOL^{{\relax}}_{\alpha}(\INM)}(\Sigma')(\Sen^{\FOL^{{\relax}}_{\alpha}(\INS')}(\Sig^{\INM}(\chi))(\phi)) \\
&=\Sen^{\FOL^{{\relax}}_{\alpha}(\INM)}(\Sigma')(\neg\Sen^{\FOL^{{\relax}}_{\alpha}(\INS')}(\Sig^{\INM}(\chi))(\phi))
=(\Sen^{\FOL^{{\relax}}_{\alpha}(\INM)}(\Sigma')\circ\Sen^{\FOL^{{\relax}}_{\alpha}(\INS')}(\Sig^{\INM}(\chi)))(\neg\phi)
\end{align*}
\item[$\gamma=\Exists{X'}\phi$]
\begin{align*}
&(\Sen^{\FOL^{{\relax}}_{\alpha}(\INS)}(\chi)
\circ\Sen^{\FOL^{{\relax}}_{\alpha}(\INM)}(\Sigma))
(\Exists{X'}\phi)\\
&=\Sen^{\FOL^{{\relax}}_{\alpha}(\INS)}(\chi)
(\Exists{\Sigma_{\Dex^{\INM}}(X')}(\Sen^{\FOL^{{\relax}}_{\alpha}(\INM)}*\Sigma^{\Dex^{\INM}})(X')(\phi))\\
&=\Exists{\chi_{\Dex^{\INS}}(\Sigma_{\Dex^{\INM}}(X'))}
\Sen^{\FOL^{{\relax}}_{\alpha}(\INS)}(\chi^{\Dex^{\INS}}[\Sigma_{\Dex^{\INM}}(X')])
((\Sen^{\FOL^{{\relax}}_{\alpha}(\INM)}*\Sigma^{\Dex^{\INM}})(X')(\phi))\\
&=\Exists{(\chi_{\Dex^{\INS}}\circ\Sigma_{\Dex^{\INM}})(X')}
(\Sen^{\FOL^{{\relax}}_{\alpha}(\INS)}(\chi^{\Dex^{\INS}}[\Sigma_{\Dex^{\INM}}(X')])
\circ(\Sen^{\FOL^{{\relax}}_{\alpha}(\INM)}*\Sigma^{\Dex^{\INM}})(X'))(\phi)\\
&=\Exists{f(X')}F(\phi)\\
&\text{ where }f=\chi_{\Dex^{\INS}}\circ\Sigma_{\Dex^{\INM}}
\text{ and }F=\Sen^{\FOL^{{\relax}}_{\alpha}(\INS)}(\chi^{\Dex^{\INS}}[\Sigma_{\Dex^{\INM}}(X')])
\circ(\Sen^{\FOL^{{\relax}}_{\alpha}(\INM)}*\Sigma^{\Dex^{\INM}})(X')
&\\
&(\Sen^{\FOL^{{\relax}}_{\alpha}(\INM)}(\Sigma')
\circ\Sen^{\FOL^{{\relax}}_{\alpha}(\INS')}(\Sig^{\INM}(\chi)))
(\Exists{X'}\phi)\\
&=\Sen^{\FOL^{{\relax}}_{\alpha}(\INM)}(\Sigma')
(
\Exists{\Sig^{\INM}(\chi)_{\Dex^{\INS'}}(X')}
\Sen^{\FOL^{{\relax}}_{\alpha}(\INS')}(\Sig^{\INM}(\chi)^{\Dex^{\INS'}}[X'])
(\phi)
)\\
&=
\Exists{\Sigma'_{\Dex^{\INM}}(\Sig^{\INM}(\chi)_{\Dex^{\INS'}}(X'))}
(\Sen^{\FOL^{{\relax}}_{\alpha}(\INM)}*\Sigma'^{\Dex^{\INM}})(\Sig^{\INM}(\chi)_{\Dex^{\INS'}}(X'))
(
\Sen^{\FOL^{{\relax}}_{\alpha}(\INS')}(\Sig^{\INM}(\chi)^{\Dex^{\INS'}}[X'])
(\phi)
)\\
&=
\Exists{(\Sigma'_{\Dex^{\INM}}\circ\Sig^{\INM}(\chi)_{\Dex^{\INS'}})(X')}
((\Sen^{\FOL^{{\relax}}_{\alpha}(\INM)}*\Sigma'^{\Dex^{\INM}})(\Sig^{\INM}(\chi)_{\Dex^{\INS'}}(X'))
\circ\Sen^{\FOL^{{\relax}}_{\alpha}(\INS')}(\Sig^{\INM}(\chi)^{\Dex^{\INS'}}[X']))
(\phi)\\
&=\Exists{g(X')}G(\phi)\\
&\text{ where }g=\Sigma'_{\Dex^{\INM}}\circ\Sig^{\INM}(\chi)_{\Dex^{\INS'}}
\text{ and }G=(\Sen^{\FOL^{{\relax}}_{\alpha}(\INM)}*\Sigma'^{\Dex^{\INM}})(\Sig^{\INM}(\chi)_{\Dex^{\INS'}}(X'))
\circ\Sen^{\FOL^{{\relax}}_{\alpha}(\INS')}(\Sig^{\INM}(\chi)^{\Dex^{\INS'}}[X'])
\end{align*}
By the definition of $(\cdot)_{\Dex^{\INM}}$, $f=g$.
So we verify that $F(\phi)=G(\phi)$.
Let
$X:=\Sigma_{\Dex^{\INM}}(X')$,
$(\overline{\chi}:\overline{\Sigma}\to\overline{\Sigma'}):=(\chi^{\Dex^{\INS}}[X]:\Sigma^{\Dex^{\INS}}[X]\to\Sigma'^{\Dex^{\INS}}[\chi_{\Dex^{\INS}}(X)])$,
and $\overline{\phi}:=\Sen^{\FOL^{{\relax}}_{\alpha}(\INS')}(\Sigma^{\Dex^{\INM}}[X'])(\phi)$.
\begin{align*}
F(\phi)
&=(\Sen^{\FOL^{{\relax}}_{\alpha}(\INS)}(\chi^{\Dex^{\INS}}[\Sigma_{\Dex^{\INM}}(X')])
\circ(\Sen^{\FOL^{{\relax}}_{\alpha}(\INM)}*\Sigma^{\Dex^{\INM}})(X'))
(\phi)\\
&=(\Sen^{\FOL^{{\relax}}_{\alpha}(\INS)}(\chi^{\Dex^{\INS}}[\Sigma_{\Dex^{\INM}}(X')])
\circ\Sen^{\FOL^{{\relax}}_{\alpha}(\INM)}(\Sigma^{\Dex^{\INS}}[\Sigma_{\Dex^{\INM}}(X')])
\circ\Sen^{\FOL^{{\relax}}_{\alpha}(\INS')}(\Sigma^{\Dex^{\INM}}[X']))
(\phi)\\
&=(\Sen^{\FOL^{{\relax}}_{\alpha}(\INS)}(\overline{\chi})
\circ\Sen^{\FOL^{{\relax}}_{\alpha}(\INM)}(\overline{\Sigma}))
(\overline{\phi})
\stackrel{IH}{=}(\Sen^{\FOL^{{\relax}}_{\alpha}(\INM)}(\overline{\Sigma'})
\circ\Sen^{\FOL^{{\relax}}_{\alpha}(\INS')}(\Sig^{\INM}(\overline{\chi})))
(\overline{\phi})\\
&\\
G(\phi)
&=((\Sen^{\FOL^{{\relax}}_{\alpha}(\INM)}*\Sigma'^{\Dex^{\INM}})(\Sig^{\INM}(\chi)_{\Dex^{\INS'}}(X'))
\circ\Sen^{\FOL^{{\relax}}_{\alpha}(\INS')}(\Sig^{\INM}(\chi)^{\Dex^{\INS'}}[X']))
(\phi)\\
&=(\Sen^{\FOL^{{\relax}}_{\alpha}(\INM)}(\Sigma'^{\Dex^{\INS}}[\Sigma'_{\Dex^{\INM}}(\Sig^{\INM}(\chi)_{\Dex^{\INS'}}(X'))])
\circ\Sen^{\FOL^{{\relax}}_{\alpha}(\INS')}(\Sigma'^{\Dex^{\INM}}[\Sig^{\INM}(\chi)_{\Dex^{\INS'}}(X')])\\
&\circ\Sen^{\FOL^{{\relax}}_{\alpha}(\INS')}(\Sig^{\INM}(\chi)^{\Dex^{\INS'}}[X']))
(\phi)\\
&=(\Sen^{\FOL^{{\relax}}_{\alpha}(\INM)}(\Sigma'^{\Dex^{\INS}}[\Sigma'_{\Dex^{\INM}}(\Sig^{\INM}(\chi)_{\Dex^{\INS'}}(X'))])\\
&\circ\Sen^{\FOL^{{\relax}}_{\alpha}(\INS')}
(\Sigma'^{\Dex^{\INM}}[\Sig^{\INM}(\chi)_{\Dex^{\INS'}}(X')]
\circ\Sig^{\INM}(\chi)^{\Dex^{\INS'}}[X']))
(\phi)\\
&\stackrel{\ref{eq:MDex-3}}{=}(\Sen^{\FOL^{{\relax}}_{\alpha}(\INM)}(\Sigma'^{\Dex^{\INS}}[\Sigma'_{\Dex^{\INM}}(\Sig^{\INM}(\chi)_{\Dex^{\INS'}}(X'))])
\circ\Sen^{\FOL^{{\relax}}_{\alpha}(\INS')}(\Sig^{\INM}(\chi^{\Dex^{\INS}}[\Sigma_{\Dex^{\INM}}(X')])\circ\Sigma^{\Dex^{\INM}}[X']))
(\phi)\\
&\text{since }(\cdot)_{\Dex^{\INM}}\text{ natural,}\\
&=(
\Sen^{\FOL^{{\relax}}_{\alpha}(\INM)}(
\Sigma'^{\Dex^{\INS}}[
\chi_{\Dex^{\INS}}(\Sigma_{\Dex^{\INM}}(X'))
])
\circ\Sen^{\FOL^{{\relax}}_{\alpha}(\INS')}(\Sig^{\INM}(\chi^{\Dex^{\INS}}[\Sigma_{\Dex^{\INM}}(X')])\circ\Sigma^{\Dex^{\INM}}[X']))
(\phi)\\
&=(\Sen^{\FOL^{{\relax}}_{\alpha}(\INM)}(\overline{\Sigma'})
\circ\Sen^{\FOL^{{\relax}}_{\alpha}(\INS')}(\Sig^{\INM}(\overline{\chi})))
(\overline{\phi})
\end{align*}
Therefore, $F(\phi)=G(\phi)$.
\end{proofcases}
\end{proofcases}
\item[$\A\models^{\FOL^{{\relax}}_{\alpha}(\INS)}\Sen^{\FOL^{{\relax}}_{\alpha}(\INM)}(\Sigma)(\gamma')$ iff $\Mod^{\FOL^{{\relax}}_{\alpha}(\INM)}(\Sigma)(\A)\models^{\FOL^{{\relax}}_{\alpha}(\INS')}\gamma'$]
We show that the following $P(\gamma')$ holds true for each sentence $\gamma'$ by induction based on sentence complexity.
($P(\gamma')\Leftrightarrow$
$\A\models^{\FOL^{{\relax}}_{\alpha}(\INS)}\Sen^{\FOL^{{\relax}}_{\alpha}(\INM)}(\Sigma)(\gamma')$ iff $\Mod^{\FOL^{{\relax}}_{\alpha}(\INM)}(\Sigma)(\A)\models^{\FOL^{{\relax}}_{\alpha}(\INS')}\gamma'$
holds true for each $\Sigma\in|\Sig^{\INS}|$ and $\A\in|\Mod^{\FOL^{{\relax}}_{\alpha}(\INS)}(\Sigma)|$
such that $\gamma'\in\Sen^{\FOL^{{\relax}}_{\alpha}(\INS')}(\Sig^{\INM}(\Sigma))$)
The same applies to the other cases, so we will only check three.
\begin{proofcases}
\item[$\gamma'=\pos{\phi,atomic}$]
\begin{align*}
&\A\models^{\FOL^{{\relax}}_{\alpha}(\INS)}\Sen^{\FOL^{{\relax}}_{\alpha}(\INM)}(\Sigma)(\pos{\phi,atomic})
\Leftrightarrow\A\models^{\INS}\Sen^{\INM}(\Sigma)(\phi)\\
&\Leftrightarrow\Mod^{\INM}(\Sigma)(\A)\models^{\INS'}\phi
\Leftrightarrow\Mod^{\FOL^{{\relax}}_{\alpha}(\INM)}(\Sigma)(\A)\models^{\FOL^{{\relax}}_{\alpha}(\INS')}\pos{\phi,atomic}
\end{align*}
\item[$\gamma'=\neg\phi$]
\begin{align*}
&\A\models^{\FOL^{{\relax}}_{\alpha}(\INS)}\Sen^{\FOL^{{\relax}}_{\alpha}(\INM)}(\Sigma)(\neg\phi)
\Leftrightarrow\A\not\models^{\FOL^{{\relax}}_{\alpha}(\INS)}\Sen^{\FOL^{{\relax}}_{\alpha}(\INM)}(\Sigma)(\phi)\\
&\stackrel{IH}{\Leftrightarrow}\Mod^{\FOL^{{\relax}}_{\alpha}(\INM)}(\Sigma)(\A)\not\models^{\FOL^{{\relax}}_{\alpha}(\INS')}\phi
\Leftrightarrow\Mod^{\FOL^{{\relax}}_{\alpha}(\INM)}(\Sigma)(\A)\models^{\FOL^{{\relax}}_{\alpha}(\INS')}\neg\phi
\end{align*}
\item[$\gamma'=\Exists{X'}\phi$]
\begin{align*}
&\A\models^{\FOL^{{\relax}}_{\alpha}(\INS)}\Sen^{\FOL^{{\relax}}_{\alpha}(\INM)}(\Sigma)(\Exists{X'}\phi)\\
&\Leftrightarrow\A\models^{\FOL^{{\relax}}_{\alpha}(\INS)}
\Exists{\Sigma_{\Dex^{\INM}}(X')}(\Sen^{\FOL^{{\relax}}_{\alpha}(\INM)}*\Sigma^{\Dex^{\INM}})(X')(\phi)\\
&\Leftrightarrow\hat{\A}\models^{\FOL^{{\relax}}_{\alpha}(\INS)}
(\Sen^{\FOL^{{\relax}}_{\alpha}(\INM)}*\Sigma^{\Dex^{\INM}})(X')(\phi)
\text{ for some }\Sigma^{\Dex^{\INS}}(\Sigma_{\Dex^{\INM}}(X'))\text{-expansion }\hat{\A}\text{ of }\A\\
&\Space\Leftrightarrow\hat{\A}\models^{\FOL^{{\relax}}_{\alpha}(\INS)}
\Sen^{\FOL^{{\relax}}_{\alpha}(\INM)}(\Sigma^{\Dex^{\INS}}[\Sigma_{\Dex^{\INM}}(X')])
(\Sen^{\FOL^{{\relax}}_{\alpha}(\INS')}(\Sigma^{\Dex^{\INM}}[X'])
(\phi))
\text{ for some... }\\
&\Space\stackrel{IH}{\Leftrightarrow}
\Mod^{\FOL^{{\relax}}_{\alpha}(\INM)}(\Sigma^{\Dex^{\INS}}[\Sigma_{\Dex^{\INM}}(X')])
(\hat{\A})
\models^{\FOL^{{\relax}}_{\alpha}(\INS)}
\Sen^{\FOL^{{\relax}}_{\alpha}(\INS')}(\Sigma^{\Dex^{\INM}}[X'])
(\phi)
\text{ for some... }\\
&\Space\Leftrightarrow
\Mod^{\FOL^{{\relax}}_{\alpha}(\INS')}(\Sigma^{\Dex^{\INM}}[X'])
(\Mod^{\FOL^{{\relax}}_{\alpha}(\INM)}(\Sigma^{\Dex^{\INS}}[\Sigma_{\Dex^{\INM}}(X')])
(\hat{\A}))
\models^{\FOL^{{\relax}}_{\alpha}(\INS)}
\phi
\text{ for some... }\\
&\Leftrightarrow(\Sigma^{\Dex^{\INM}}*\Mod^{\FOL^{{\relax}}_{\alpha}(\INM)})(X')(\hat{\A})\models^{\FOL^{{\relax}}_{\alpha}(\INS')}\phi
\text{ for some }\Sigma^{\Dex^{\INS}}(\Sigma_{\Dex^{\INM}}(X'))\text{-expansion }\hat{\A}\text{ of }\A\\
&\\
&\Mod^{\FOL^{{\relax}}_{\alpha}(\INM)}(\Sigma)(\A)\models^{\FOL^{{\relax}}_{\alpha}(\INS')}\Exists{X'}\phi\\
&\Leftrightarrow\hat{\A}'\models^{\FOL^{{\relax}}_{\alpha}(\INS')}\phi
\text{ for some }\Sig^{\INM}(\Sigma)^{\Dex^{\INS'}}(X')\text{-expansion }\hat{\A}'\text{ of }\A':=\Mod^{\INM}(\Sigma)(\A)\\
&\stackrel{*}{\Leftrightarrow}(\Sigma^{\Dex^{\INM}}*\Mod^{\FOL^{{\relax}}_{\alpha}(\INM)})(X')(\hat{\A})\models^{\FOL^{{\relax}}_{\alpha}(\INS')}\phi
\text{ for some }\Sigma^{\Dex^{\INS}}(\Sigma_{\Dex^{\INM}}(X'))\text{-expansion }\hat{\A}\text{ of }\A
\end{align*}
*This equivalence is clear from the last condition of the definition of $\Dex$-institution morphism and the commutativity of its diagrams.
\item[$\gamma'=\exists^{\kappa}{X'}\cdot\phi$ (extra)]
We will show that if $\INM$ has the uniqueness property, the satisfaction condition holds true even if we add $\exists^{\kappa}{X'}\cdot\phi$ (there are at least $\kappa$ instances).
In a similar way to above, we can show the following:
\begin{align*}
&\A\models^{\FOL^{{\relax}}_{\alpha}(\INS)}\Sen^{\FOL^{{\relax}}_{\alpha}(\INM)}(\Sigma)(\exists^{\kappa}{X'}\cdot\phi)\\
&\Leftrightarrow(\Sigma^{\Dex^{\INM}}*\Mod^{\FOL^{{\relax}}_{\alpha}(\INM)})(X')(\hat{\A})\models^{\FOL^{{\relax}}_{\alpha}(\INS')}\phi
\text{ for }\kappa\ \Sigma^{\Dex^{\INS}}(\Sigma_{\Dex^{\INM}}(X'))\text{-expansions }\hat{\A}\text{ of }\A\\
&\Leftrightarrow\card\{\hat{\A}\mid(\Sigma^{\Dex^{\INM}}*\Mod^{\INM})(X')(\hat{\A})\models^{\FOL^{{\relax}}_{\alpha}(\INS')}\phi\text{ and }\hat{\A}\text{ is a }\Sigma^{\Dex^{\INS}}(\Sigma_{\Dex^{\INM}}(X'))\text{-expansion}\text{ of }\A\}\geq\kappa
\\%
&\Mod^{\FOL^{{\relax}}_{\alpha}(\INM)}(\Sigma)(\A)\models^{\FOL^{{\relax}}_{\alpha}(\INS')}\exists^{\kappa}{X'}\cdot\phi\\
&\Leftrightarrow\hat{\A}'\models^{\FOL^{{\relax}}_{\alpha}(\INS')}\phi
\text{ for }\kappa\ \Sig^{\INM}(\Sigma)^{\Dex^{\INS'}}(X')\text{-expansions }\hat{\A}'\text{ of }\A':=\Mod^{\INM}(\Sigma)(\A)\\
&\Leftrightarrow\card\{\hat{\A}'\mid\hat{\A}'\models^{\FOL^{{\relax}}_{\alpha}(\INS')}\phi\text{ and }\hat{\A}'\text{ is a }\Sig^{\INM}(\Sigma)^{\Dex^{\INS'}}(X')\text{-expansion}\text{ of }\A':=\Mod^{\INM}(\Sigma)(\A)\}\geq\kappa
\end{align*}
From the unique existence condition, $(\Sigma^{\Dex^{\INM}}*\Mod^{\INM})(X')$ is a bijection between these two sets, so the two conditions coincide.
\end{proofcases}
\end{proofcases}
\end{proofcases}
\item[preserve operations]
Since no other parts have been changed, we only need to check the $\Sen$ parts.
\begin{proofcases}
\item[$\Sen^{\FOL^{{\relax}}_{\alpha}(\id_{\INS})}=\Sen^{\id_{\FOL^{{\relax}}_{\alpha}(\INS)}}$]
We show that the following $P(\gamma)$ holds true for each sentence $\gamma$ by induction based on sentence complexity.
($P(\gamma)\Leftrightarrow
\Sen^{\FOL^{{\relax}}_{\alpha}(\id_{\INS})}(\Sigma)(\gamma)=\Sen^{\id_{\FOL^{{\relax}}_{\alpha}(\INS)}}(\Sigma)(\gamma)
$
holds true for each $\Sigma\in|\Sig^{\INS}|$ such that $\gamma\in\Sen^{\FOL^{{\relax}}_{\alpha}(\INS)}(\Sig^{\id_{\INS}}(\Sigma))$.)
The same applies to the other cases, so we will only check three.
\begin{proofcases}
\item[$\gamma=\pos{\phi,atomic}$]
\begin{align*}
&\Sen^{\FOL^{{\relax}}_{\alpha}(\id_{\INS})}(\Sigma)(\pos{\phi,atomic})\\
&=\pos{\Sen^{\id_{\INS}}(\Sigma)(\phi),atomic}
=\pos{\id_{\Sen^{\INS}}(\Sigma)(\phi),atomic}
=\pos{\id_{\Sen^{\INS}(\Sigma)}(\phi),atomic}\\
&=\pos{\phi,atomic}\\
&=\id_{\Sen^{\FOL^{{\relax}}_{\alpha}(\INS)}(\Sigma)}(\pos{\phi,atomic})
=\id_{\Sen^{\FOL^{{\relax}}_{\alpha}(\INS)}}(\Sigma)(\pos{\phi,atomic})
=\Sen^{\id_{\FOL^{{\relax}}_{\alpha}(\INS)}}(\Sigma)(\pos{\phi,atomic})
\end{align*}
\item[$\gamma=\neg\phi$]
\begin{align*}
&\Sen^{\FOL^{{\relax}}_{\alpha}(\id_{\INS})}(\Sigma)(\neg\phi)
=\neg\Sen^{\FOL^{{\relax}}_{\alpha}(\id_{\INS})}(\Sigma)(\phi)\\
&\stackrel{IH}{=}\neg\Sen^{\id_{\FOL^{{\relax}}_{\alpha}(\INS)}}(\Sigma)(\phi)
=\neg\id_{\Sen^{\FOL^{{\relax}}_{\alpha}(\INS)}}(\Sigma)(\phi)
=\neg\id_{\Sen^{\FOL^{{\relax}}_{\alpha}(\INS)}(\Sigma)}(\phi)\\
&=\neg\phi\\
&=\id_{\Sen^{\FOL^{{\relax}}_{\alpha}(\INS)}(\Sigma)}(\neg\phi)
=\id_{\Sen^{\FOL^{{\relax}}_{\alpha}(\INS)}}(\Sigma)(\neg\phi)
=\Sen^{\id_{\FOL^{{\relax}}_{\alpha}(\INS)}}(\Sigma)(\neg\phi)
\end{align*}
\item[$\gamma=\Exists{X}\phi$]
\begin{align*}
&\Sen^{\FOL^{{\relax}}_{\alpha}(\id_{\INS})}(\Sigma)(\Exists{X}\phi)
=\Exists{\Sigma_{\Dex^{\FOL^{{\relax}}_{\alpha}(\id_{\INS})}}(X)}(\Sen^{\FOL^{{\relax}}_{\alpha}(\id_{\INS})}*\Sigma^{\Dex^{\FOL^{{\relax}}_{\alpha}(\id_{\INS})}})(X)(\phi)\\
&\stackrel{IH}{=}\Exists{\Sigma_{\Dex^{\FOL^{{\relax}}_{\alpha}(\id_{\INS})}}(X)}
(\Sen^{\id_{\FOL^{{\relax}}_{\alpha}(\INS)}}*\Sigma^{\Dex^{\FOL^{{\relax}}_{\alpha}(\id_{\INS})}})(X)
(\phi)\\
&\stackrel{Lemma~\ref{lemma:S*M}}{=}\Exists{\Sigma_{\Dex^{\FOL^{{\relax}}_{\alpha}(\id_{\INS})}}(X)}
\id_{\Sen^{\FOL^{{\relax}}_{\alpha}(\INS)}(\Sigma^{\Dex^{\INS}}[X])}
(\phi)\\
&=\Exists{\Sigma_{\Dex^{\id_{\INS}}}(X)}
\id_{\Sen^{\FOL^{{\relax}}_{\alpha}(\INS)}(\Sigma^{\Dex^{\INS}}[X])}
(\phi)
=\Exists{\id_{\Sigma_{\Dex^{\INS}}}(X)}
\id_{\Sen^{\FOL^{{\relax}}_{\alpha}(\INS)}(\Sigma^{\Dex^{\INS}}[X])}
(\phi)\\
&=\Exists{X}\phi\\
&=\id_{\Sen^{\FOL^{{\relax}}_{\alpha}(\INS)}(\Sigma)}(\Exists{X}\phi)
=\id_{\Sen^{\FOL^{{\relax}}_{\alpha}(\INS)}}(\Sigma)(\Exists{X}\phi)
=\Sen^{\id_{\FOL^{{\relax}}_{\alpha}(\INS)}}(\Sigma)(\Exists{X}\phi)
\end{align*}
\end{proofcases}
\item[$\Sen^{\FOL^{{\relax}}_{\alpha}(\INM'\circ\INM)}=\Sen^{\FOL^{{\relax}}_{\alpha}(\INM')\circ\FOL^{{\relax}}_{\alpha}(\INM)}$]
We show that the following $P(\gamma'')$ holds true for each sentence $\gamma''$ by induction based on sentence complexity.
($P(\gamma'')\Leftrightarrow
\Sen^{\FOL^{{\relax}}_{\alpha}(\INM'\circ\INM)}(\Sigma)(\gamma'')=\Sen^{\FOL^{{\relax}}_{\alpha}(\INM')\circ\FOL^{{\relax}}_{\alpha}(\INM)}(\Sigma)(\gamma'')
$
holds true for each $\Sigma\in|\Sig^{\INS}|$ such that $\gamma''\in\Sen^{\FOL^{{\relax}}_{\alpha}(\INS)}(\Sig^{\id_{\INS}}(\Sigma))$.)
The same applies to the other cases, so we will only check three.
\begin{proofcases}
\item[$\gamma''=\pos{\phi,atomic}$]
\begin{align*}
&\Sen^{\FOL^{{\relax}}_{\alpha}(\INM'\circ\INM)}(\Sigma)(\pos{\phi,atomic})\\
&=\pos{\Sen^{\INM'\circ\INM}(\Sigma)(\phi),atomic}\\
&\Space=\pos{
(\Sen^{\INM}\circ_{V}(\Sen^{\INM'}\circ_{H}\Sig^{\INM}))
(\Sigma)(\phi),atomic}\\
&\Space=\pos{
\Sen^{\INM}(\Sigma)
\circ
(\Sen^{\INM'}(\Sig^{\INM}(\Sigma)))
(\phi),atomic}\\
&=
\pos{\Sen^{\INM}(\Sigma)
(\Sen^{\INM'}(\Sig^{\INM}(\Sigma))
(\phi)),atomic}\\
&=
\Sen^{\INM}(\Sigma)
(\Sen^{\INM'}(\Sig^{\INM}(\Sigma))
(\pos{\phi,atomic}))\\
&\Space=
(\Sen^{\FOL^{{\relax}}_{\alpha}(\INM)}(\Sigma)
\circ\Sen^{\FOL^{{\relax}}_{\alpha}(\INM')}(\Sig^{\INM}(\Sigma)))
(\pos{\phi,atomic})\\
&\Space=
(\Sen^{\FOL^{{\relax}}_{\alpha}(\INM)}\circ_{V}(\Sen^{\FOL^{{\relax}}_{\alpha}(\INM')}\circ_{H}\Sig^{\INM}))(\Sigma)(\pos{\phi,atomic})\\
&=\Sen^{\FOL^{{\relax}}_{\alpha}(\INM')\circ\FOL^{{\relax}}_{\alpha}(\INM)}(\Sigma)(\pos{\phi,atomic})
\end{align*}
\item[$\gamma''=\neg\phi$]
\begin{align*}
&\Sen^{\FOL^{{\relax}}_{\alpha}(\INM'\circ\INM)}(\Sigma)(\neg\phi)
=\neg\Sen^{\FOL^{{\relax}}_{\alpha}(\INM'\circ\INM)}(\Sigma)(\phi)\\
&\stackrel{IH}{=}\neg\Sen^{\FOL^{{\relax}}_{\alpha}(\INM')\circ\FOL^{{\relax}}_{\alpha}(\INM)}(\Sigma)(\phi)\\
&\Space=\neg
(\Sen^{\FOL^{{\relax}}_{\alpha}(\INM)}\circ_{V}(\Sen^{\FOL^{{\relax}}_{\alpha}(\INM')}\circ_{H}\Sig^{\FOL^{{\relax}}_{\alpha}(\INM)}))
(\Sigma)(\phi)\\
&\Space=\neg
\Sen^{\FOL^{{\relax}}_{\alpha}(\INM)}(\Sigma)
\circ
(\Sen^{\FOL^{{\relax}}_{\alpha}(\INM')}(\Sig^{\FOL^{{\relax}}_{\alpha}(\INM)}(\Sigma)))
(\phi)\\
&=
\neg\Sen^{\FOL^{{\relax}}_{\alpha}(\INM)}(\Sigma)
(\Sen^{\FOL^{{\relax}}_{\alpha}(\INM')}(\Sig^{\FOL^{{\relax}}_{\alpha}(\INM)}(\Sigma))
(\phi))\\
&=
\Sen^{\FOL^{{\relax}}_{\alpha}(\INM)}(\Sigma)
(\Sen^{\FOL^{{\relax}}_{\alpha}(\INM')}(\Sig^{\FOL^{{\relax}}_{\alpha}(\INM)}(\Sigma))
(\neg\phi))\\
&\Space=
(\Sen^{\FOL^{{\relax}}_{\alpha}(\INM)}(\Sigma)
\circ\Sen^{\FOL^{{\relax}}_{\alpha}(\INM')}(\Sig^{\INM}(\Sigma)))
(\neg\phi)\\
&\Space=
(\Sen^{\FOL^{{\relax}}_{\alpha}(\INM)}\circ_{V}(\Sen^{\FOL^{{\relax}}_{\alpha}(\INM')}\circ_{H}\Sig^{\INM}))(\Sigma)(\neg\phi)\\
&=\Sen^{\FOL^{{\relax}}_{\alpha}(\INM')\circ\FOL^{{\relax}}_{\alpha}(\INM)}(\Sigma)(\neg\phi)
\end{align*}
\item[$\gamma''=\Exists{X''}\phi$]
\begin{align*}
&\Sen^{\FOL^{{\relax}}_{\alpha}(\INM'\circ\INM)}(\Sigma)(\Exists{X''}\phi)
=\Exists{\Sigma_{\Dex^{\FOL^{{\relax}}_{\alpha}(\INM'\circ\INM)}}(X'')}
(\Sen^{\FOL^{{\relax}}_{\alpha}(\INM'\circ\INM)}*\Sigma^{\Dex^{\FOL^{{\relax}}_{\alpha}(\INM'\circ\INM)}})(X'')
(\phi)\\
&=\Exists{\Sigma_{\Dex^{\FOL^{{\relax}}_{\alpha}(\INM'\circ\INM)}}(X'')}
(\Sen^{\FOL^{{\relax}}_{\alpha}(\INM'\circ\INM)}*\Sigma^{\Dex^{\FOL^{{\relax}}_{\alpha}(\INM')\circ\FOL^{{\relax}}_{\alpha}(\INM)}})(X'')
(\phi)\\
&\stackrel{IH}{=}\Exists{\Sigma_{\Dex^{\FOL^{{\relax}}_{\alpha}(\INM'\circ\INM)}}(X'')}
(\Sen^{\FOL^{{\relax}}_{\alpha}(\INM')\circ\FOL^{{\relax}}_{\alpha}(\INM)}*\Sigma^{\Dex^{\FOL^{{\relax}}_{\alpha}(\INM'\circ\INM)}})(X'')
(\phi)\\
&\stackrel{Lemma~\ref{lemma:S*M}}{=}\Exists{\Sigma_{\Dex^{\FOL^{{\relax}}_{\alpha}(\INM'\circ\INM)}}(X'')}
((\Sen^{\FOL^{{\relax}}_{\alpha}(\INM)}*\Sigma^{\Dex^{\INM}})(X')\circ(\Sen^{\FOL^{{\relax}}_{\alpha}(\INM')}*\Sigma'^{\Dex^{\INM'}})(X''))
(\phi)\\
&=\Exists{\Sigma_{\Dex^{\INM}}(X')}
((\Sen^{\FOL^{{\relax}}_{\alpha}(\INM)}*\Sigma^{\Dex^{\INM}})(X')\circ(\Sen^{\FOL^{{\relax}}_{\alpha}(\INM')}*\Sigma'^{\Dex^{\INM'}})(X''))
(\phi)\\
&=\Sen^{\FOL^{{\relax}}_{\alpha}(\INM)}(\Sigma)
(\Exists{\Sigma'_{\Dex^{\INM'}}(X'')}(\Sen^{\FOL^{{\relax}}_{\alpha}(\INM')}*\Sigma'^{\Dex^{\INM'}})(X'')
(\phi))\\
&=\Sen^{\FOL^{{\relax}}_{\alpha}(\INM)}(\Sigma)
(\Sen^{\FOL^{{\relax}}_{\alpha}(\INM')}(\Sigma')
(\Exists{X''}\phi))\\
&=\Sen^{\FOL^{{\relax}}_{\alpha}(\INM')\circ\FOL^{{\relax}}_{\alpha}(\INM)}(\Sigma)
(\Exists{X''}\phi)
\end{align*}
\end{proofcases}
\end{proofcases}
\end{proofcases}
\end{proof}

\begin{proof}[Proof of Lemma~\ref{lemma:translation_of_substitution}]\
\begin{proofcases}
\item[$\INS$]
Since $\theta$ is a substitution
$(\chi\circ\theta)\circ\Sigma^{\Dex^{\INS}}(X)=\chi\circ(\theta\circ\Sigma^{\Dex^{\INS}}(X))=\chi=(\chi/\Sig^{\INS})(\id_{\Sigma'})$.
Therefore $(\chi\circ\theta):\Sigma^{\Dex^{\INS}}(X)\to(\chi/\Sig^{\INS})(\id_{\Sigma'})\in|\Sigma/\Sig^{\INS}|$.
By the properties of universal morphisms, there exists a unique $\theta':\Sigma'^{\Dex^{\INS}}(\chi_{\Dex^{\INS}}(X))\to\id_{\Sigma'}$ such that $(\chi/\Sig^{\INS})(\theta')\circ\chi^{\Dex^{\INS}}(X)=\chi\circ\theta$.
In other words, there exists a unique $\Sigma'^{\Dex^{\INS}}(\chi_{\Dex^{\INS}}(X))$-substitution $\theta'$ such that $\theta'\circ\chi^{\Dex^{\INS}}[X]=\chi\circ\theta$.
\item[$\INM$]
Since $\theta$ is a substitution
$\theta\circ\Sig^{\INM}(\Sigma)^{\Dex^{\INS'}}(X)=\id_{\Sig^{\INM}(\Sigma)}=(\Sigma/\Sig^{\INM})(\id_{\Sigma})$.
Therefore $\theta:\Sig^{\INM}(\Sigma)^{\Dex^{\INS'}}(X)\to(\Sigma/\Sig^{\INM})(\id_{\Sigma})\in|\Sig^{\INM}(\Sigma)/\Sig^{\INS'}|$.
By the properties of universal morphisms, there exists a unique $\theta':\Sigma^{\Dex^{\INS}}(\Sigma_{\Dex^{\INM}}(X))\to\id_{\Sigma}$ such that $(\Sigma/\Sig^{\INM})(\theta')\circ\Sigma^{\Dex^{\INM}}(X)=\theta$.
In other words, there exists a unique $\Sigma^{\Dex^{\INS}}(\Sigma_{\Dex^{\INM}}(X))$-substitution $\theta'$ such that $\Sig^{\INM}(\theta')\circ\Sigma^{\Dex^{\INM}}[X]=\theta$.
\end{proofcases}
\end{proof}

\section{Proof of the results presented in Section~\ref{sec:Pc}}

\begin{proof}[Proof of well-definedness of Definition~\ref{def:Semantic-sequent}]
We confirm that this satisfies the structural rules.
\begin{proofcases}
\item[$\Modify$] Let $\A'\in|\Mod^{\INS}(\Sigma')|$.
Assuming the premise holds true, $\Mod^{\INS}(\chi)(\A')$ cannot satisfies $\Gamma\cup\{\neg\delta\mid\delta\in\Delta\}$.
Also by satisfaction condition, $\A'$ cannot satisfies $\Sen^{\INS}(\chi)(\Gamma)\cup\{\neg\delta\mid\delta\in\Sen^{\INS}(\chi)(\Delta)\}$.
So $\A'$ cannot satisfies $\Gamma'\cup\{\neg\delta\mid\delta\in\Delta'\}$.
Since the choice of $\A'$ was arbitrary, this means that the conclusion holds true.
\item[$\Init$] Because $\A\models\phi$ and $\A\not\models\phi$ cannot be true at the same time.
\item[$\Cut$] We show that if the conclusion is not true, then one of the premises must not be true.
From the assumption, there exists a model $\A$ that satisfies $\Gamma\cup\Gamma'$ and completely negates $\Delta\cup\Delta'$.
This satisfies either $\A\models\psi$ or $\A\not\models\psi$, so one of the premises cannot be held.
\end{proofcases}
\end{proof}

\begin{proof}[Proof of well-definedness of Definition~\ref{def:Syntactic-sequent}]
Strictly speaking, proof trees are defined recursively in the form
\[\pos{\text{Rule name}^{\{\text{Information about choice}\}},\proofrule{\text{Sequence of premise proof trees}}{\text{Root}},\text{Applied part}}.\]
\begin{itemize}
\item In the rule display, only the root portion of "Sequence of premises proof trees" is shown.
\item
The "Applied part" is $\proofrule{\pos{\Gamma'_{i}\vdash_{\Sigma_{i}}\Delta'_{i}}_{i\in I}}{\Gamma'\vdash_{\Sigma}\Delta'}$
for rules written in the form $\proofrule{\pos{\UL{\Gamma'_{i}}\vdash_{\Gamma_{i}\mid\Sigma_{i}\mid\Delta_{i}}\UL{\Delta'_{i}}}_{i\in I}}{\OL{\Gamma'}\vdash_{\Gamma\mid\Sigma\mid\Delta}\OL{\Delta'}}$.
\\
Note that in this paper, formally, sequent is defined using a set rather than a sequence of sentences, so it may be difficult to determine what the rule was applied to from the surrounding structure.
This is redundant in notation, so it will generally be omitted.
\item
"Information about choice"
If it exists, it is written as a subscript to the upper right of the rule.
This is information to avoid non-deterministic choices in recursive operations on proof trees.
(For example, note that if $\chi(\phi)=\chi(\phi')=\psi$, then the translation of $\phi\vee\phi'$ is $\psi\vee\psi$.)
\end{itemize}
{\par}
We prove that the proof is well-defined (by showing that it satisfies three axioms).
We verify that the proof can be constructed recursively.
To avoid complexity, $\Sen(\chi)$ is sometimes simply written as $\chi$.
\begin{proofcases}
\item[$\Modify$]
For proof tree $T$,
signature morphism $\chi:\Sigma\to\Sigma'$ whose domain is the root signature of $T$, and
sentence sets $\hat{\Gamma}',\hat{\Delta}'\subseteq\Sen(\Sigma')$,
such that $\hat{\Gamma}'\supseteq\chi(\hat{\Gamma})$, $\hat{\Delta}'\supseteq\chi(\hat{\Delta})$ hold for root $\hat{\Gamma}\vdash_{\Sigma}\hat{\Delta}$ of $T$,
we define new proof tree $\Modify(\chi,T,\hat{\Gamma}',\hat{\Delta}')$ for $\hat{\Gamma}'\vdash_{\Sigma'}\hat{\Delta}'$ recursively with respect to the structure of the proofs.%
\footnote{
If we represent all signatures obtained by repeatedly adding variables to $\Sigma$ as $\Dex_{\Sigma}$, this becomes a set.
Therefore, all proof trees whose root signature is an element of $\Dex_{\Sigma}$ also become a set.
Strictly speaking, recursive definitions are only possible if the order satisfies set-likeness, so using a function that assigns a structure to perform recursion is not allowed.
However, in practice, we only use proof information that has $\Dex_{\Sigma}$ as its root, so we can consider set-likeness to be guaranteed.
}
{\par}
Generally, proof rules take the following form:
\[name^{choice}~\proofrule
{\pos{\UL{\Phi_{i}}\vdash_{\iota_{i}(\Gamma_{i})\mid\Sigma_{i}\mid\iota_{i}(\Delta_{i})}\UL{\Psi_{i}}}_{i\in I}}
{\OL{\Phi}\vdash_{\cup_{i\in I}\Gamma_{i}\mid\Sigma\mid\cup_{i\in I}\Delta_{i}}\OL{\Psi}}\]
where $\hat{\Gamma}=\cup_{i\in I}\Gamma_{i}\cup\Phi$, $\hat{\Delta}=\cup_{i\in I}\Delta_{i}\cup\Psi$.
\begin{align*}\hspace*{-1cm}
\Modify(\chi,
  &\pos{name^{choice},
  \proofrule{\pos{T_{i}}_{i\in I}}{\hat{\Gamma}\vdash_{\Sigma}\hat{\Delta}},
  \proofrule{\pos{\Phi_{i}\vdash_{\Sigma_{i}}\Psi_{i}}_{i\in I}}{\Phi\vdash_{\Sigma}\Psi}},
\hat{\Gamma}',\hat{\Delta}')&\\
:=&\pos{name^{choice'},
\proofrule{\pos{\Modify(\chi_{i},T_{i},\hat{\Gamma}'_{i},\hat{\Delta}'_{i})}_{i\in I}}{\hat{\Gamma}'\vdash_{\Sigma'}\hat{\Delta}'},
\proofrule{\pos{\chi_{i}(\Phi_{i})\vdash_{\Sigma'_{i}}\chi_{i}(\Psi_{i})}_{i\in I}}{\chi(\Phi)\vdash_{\Sigma'}\chi(\Psi)}}&\\
&\chi_{i}:\Sigma_{i}\to\Sigma'_{i}:=
\begin{cases}
\chi^{\Dex}[X]:\Sigma^{\Dex}[X]\to\Sigma'^{\Dex}[X']\,(X':=\chi_{\Dex}(X))&(\text{when the rule has variable adding})\\
\chi:\Sigma\to\Sigma'&(\text{else})
\end{cases}
&\\
&\pos{\hat{\Gamma}'_{i},\hat{\Delta}'_{i}}:=
\begin{cases}
\pos{\chi_{i}(\Phi_{i})\,\iota'_{i}(\hat{\Gamma}'),\iota'_{i}(\hat{\Delta}')\,\chi_{i}(\Psi_{i})}\,(\iota'_{i}:=\Sigma'^{\Dex}(\chi_{\Dex}(X)))&(\text{when the rule has variable adding})\\
\pos{\chi_{i}(\Phi_{i})\,\hat{\Gamma}',\hat{\Delta}'\,\chi_{i}(\Psi_{i})}&(\text{else})
\end{cases}
&\\
&choice':=
\begin{cases}
n&(choice=n)\\
\theta'&(choice=\theta:\Sigma^{\Dex}(X)\text{-substitution})
\end{cases}
&\\
&(\theta':\text{ unique }\Sigma'^{\Dex}(\chi_{\Dex}(X))\text{-substitution which satisfies }\chi\circ\theta=\theta'\circ\chi_{\Dex}[X])& 
\end{align*}
The process is translating the proof diagram and adding extra sentences to each node. 
Since the rule structure is preserved, the result obtained by applying $\Modify$ is also a proof tree.
\item[$\Init$]
We will see that the function $\Init$ can be recursively defined 
which constructs a new proof tree $\Init(\Gamma,\Delta,\Sigma,\psi)$ for a sentence $\psi$, a signature $\Sigma$ such that $\psi\in\Sen(\Sigma)$ holds true, and a sentence set $\Gamma,\Delta\subseteq\Sen(\Sigma)$.
We will consider different cases based on the sentence's form. Since others are similar, we will only discuss some of them.
\begin{proofcases}
\item[$\psi$ is atomic]
Since $\psi\vDash^{\INS}_{\Sigma}\psi$, it can be defined by the following diagram.
\begin{align*}
\Atom~\proofrule{}{\,\psi\vdash_{\Gamma\mid\Sigma\mid\Delta}\psi\,}
\end{align*}
\item[$\psi=\neg\psi'$]
We draw it as a diagram.
\[
{\neg}_{R}~\proofrule
{
{\neg}_{L}~\proofrule{\proofrulet{:}{\,\psi'\vdash_{\Gamma\mid\Sigma\mid\Delta}\psi'\,}}
{\,\neg\psi'\,\psi'\vdash_{\Gamma\mid\Sigma\mid\Delta}\,}
}
{\,\neg\psi'\vdash_{\Gamma\mid\Sigma\mid\Delta}\neg\psi'\,}
\]
The part above $\psi'\vdash_{\Gamma\mid\Sigma\mid\Delta}\psi'$ is already defined by $\Init(\Gamma,\Delta,\Sigma,\psi')$.
\item[$\psi=\vee\psi'$]
We draw it as a diagram.
\[
{\vee}_{L}~\proofrule
{\pos{{\vee}_{R}^{n}~\proofrule{\proofrulet{:}{\,\psi'_{n}\vdash_{\Gamma\mid\Sigma\mid\Delta}\psi'_{n}\,}}
{\,\psi'_{n}\vdash_{\Gamma\mid\Sigma\mid\Delta}\vee\psi'\,}}_{n\in N}}
{\,\vee\psi'\vdash_{\Gamma\mid\Sigma\mid\Delta}\vee\psi'\,}
\]
The part above $\psi'\,\Gamma\vdash_{\Sigma}\Delta\,\psi'$ is already defined by $\Init(\Gamma,\Delta,\Sigma,\psi')$.
This argument is independent of the size of $\psi'$. 
\item[$\psi=\Exists{X}\psi'$]
Let $X'$ be the result of translating the contents of $X$ using $\Sigma^{\Dex}(X):\Sigma\to\Sigma^{\Dex}[X]$.
We have $\Sigma^{\Dex}(X)^{\Dex}[X]:\Sigma^{\Dex}[X]\to\Sigma^{\Dex}[X]^{\Dex}[X']$. 
Let $\theta$ be the substitution that "maps $X'$ to $X$", which we dealt with in Example~\ref{exa:substitution}.
Note that $\theta(\Sigma^{\Dex}(X)^{\Dex}[X](\psi'))=\psi'$,
\[
\exists_{L}~\proofrule
{\proofrule
{\exists_{R}^{\theta}~\proofrule
 {\proofrulet{:}{\psi'\vdash_{\Sigma^{\Dex}(X)(\Gamma)\mid\Sigma^{\Dex}[X]\mid\Sigma^{\Dex}(X)(\Delta)}\psi'\,(=\theta(\Sigma^{\Dex}(X)^{\Dex}[X](\psi')))}}
 {\psi'\vdash_{\Sigma^{\Dex}(X)(\Gamma)\mid\Sigma^{\Dex}[X]\mid\Sigma^{\Dex}(X)(\Delta)}\Exists{X'}\Sigma^{\Dex}(X)^{\Dex}[X](\psi')}}
{\psi'\vdash_{\Sigma^{\Dex}(X)(\Gamma)\mid\Sigma^{\Dex}[X]\mid\Sigma^{\Dex}(X)(\Delta)}\Sigma^{\Dex}(X)(\Exists{X}\psi')}}
{\Exists{X}\psi'\vdash_{\Gamma\mid\Sigma\mid\Delta}\Exists{X}\psi'}
\]
\normalsize
The part above $\psi'\,\Sigma(X)(\Gamma)\vdash_{\Sigma[X]}\Sigma(X)(\Delta)\,\psi'$ is already defined by $\Init(\Gamma,\Delta,\Sigma,\psi')$.
\end{proofcases}
\item[$\Cut$]
We will see that for two proof trees $T$ and $T'$ whose roots are $\Gamma\vdash_{\Sigma}\Delta\,\psi$ and $\psi\,\Gamma'\vdash_{\Sigma}\Delta'$ respectively,
a new proof tree $\Cut(\Gamma,\Gamma',\Delta,\Delta',\psi,T,T')$ whose root is $\Gamma\,\Gamma'\vdash_{\Sigma}\Delta\,\Delta'$
can be recursively defined for the complexity of $\psi$.
\begin{align*}
&\Cut^{\psi}~\proofrule
{\,\proofrulet{T}{\vdash_{\Gamma\mid\Sigma\mid\Delta}\UL{\psi}}\,\Space\,\proofrulet{T'}{\UL{\psi}\vdash_{\Gamma'\mid\Sigma\mid\Delta'}}\,}
{\,\vdash_{\Gamma\,\Gamma'\mid\Sigma\mid\Delta\,\Delta'}\,}
\end{align*}
We will divide this into several cases.
In the following, the earlier cases will take precedence, and the lower cases will be those where the upper cases do not hold true.
\begin{proofcases}
\item[$\psi\in\Delta\cup\Delta'\cup\Gamma\cup\Gamma'$]
Since both are similar, we consider the case $\psi\in\Delta\cup\Delta'$.
The conclusion obtained by $(\Cut)$ is the conclusion of $T$ with an extra sentence added.
Therefore, we can obtain it by ignoring $T'$ and applying $\Modify$ to $T$.
\item[if $\psi$ is not the applied part of the rule]
Since we have already considered the above case, we can think of it as $\psi\not\in\Delta\cup\Delta'\cup\Gamma\cup\Gamma'$.
Since both are similar, we consider the case where the root rule of $T'$ is unrelated to $\psi$.
Generally, $T'$ takes the following form:
\[name^{choice}~\proofrule
{\pos{\proofrulet{T'_{i}}{\UL{\Phi'_{i}}\vdash_{\iota_{i}(\Gamma'_{i})\mid\Sigma_{i}\mid\iota_{i}(\Delta'_{i})}\UL{\Psi'_{i}}}}_{i\in I}}
{\OL{\Phi'}\vdash_{\cup_{i\in I}\Gamma'_{i}\mid\Sigma\mid\cup_{i\in I}\Delta'_{i}}\OL{\Psi'}}\]
where $\Gamma'\cup\{\psi\}=\cup_{i\in I}\Gamma'_{i}\cup\Phi'$, $\Delta'=\cup_{i\in I}\Delta'_{i}\cup\Psi'$.
\begin{proofcases}
\item[if $I=\emptyset$]
In other words, if $\psi$ is introduced as an extra sentence by $(\Atom)$ or "left introduction of $\bot$",
then instead of introducing $\psi$, we can add an extra sentence to $T'$ so that the result is the same as what we get with $(\Cut)$.
We ignore $T$.
\item[if $I\neq\emptyset$] There is $i\in I$ such that $\psi\in\Gamma'_{i}$.
We define a new proof diagram as follows:
\[name^{choice}~\proofrule
{\pos{\proofrulet{T''_{i}}{\UL{\Phi'_{i}}\vdash_{\Gamma''_{i}\mid\Sigma_{i}\mid\Delta''_{i}}\UL{\Psi'_{i}}}}_{i\in I}}
{\OL{\Phi'}\vdash_{\Gamma\cup(\cup_{i\in I}\Gamma'_{i}\setminus\{\psi\})\mid\Sigma\mid\Delta\cup(\cup_{i\in I}\Delta'_{i})}\OL{\Psi'}}\]
where
\begin{proofcases}
\item[if $\psi\in\Gamma'_{i}$] $\pos{\Gamma''_{i},\Delta''_{i}}:=\pos{\iota_{i}(\Gamma)\cup\iota_{i}(\Gamma'_{i}\setminus\{\psi\}),\iota_{i}(\Delta)\cup\iota_{i}(\Delta'_{i})}$, and
\\ \ \\
$
T''_{i}
:=
\Cut^{\iota_{i}(\psi)}~\proofrule{\proofrulet{\Modify(\iota_{i},T,\iota_{i}(\Gamma),\iota_{i}(\Delta)\cup\iota_{i}(\psi))}{\,\vdash_{\iota_{i}(\Gamma)\mid\Sigma_{i}\mid\iota_{i}(\Delta)}\UL{\iota_{i}(\psi)}}\Space\proofrulet{T'_{i}}{\UL{\iota_{i}(\psi)}\,\Phi'_{i}\vdash_{\iota_{i}(\Gamma'_{i}\setminus\{\psi\})\mid\Sigma_{i}\mid\iota_{i}(\Delta'_{i})}\Psi'_{i}}}
{\UL{\Phi'_{i}}\vdash_{\iota_{i}(\Gamma)\cup\iota_{i}(\Gamma'_{i}\setminus\{\psi\})\mid\Sigma_{i}\mid\iota_{i}(\Delta)\cup\iota_{i}(\Delta'_{i})}\UL{\Psi'_{i}}}
$
\\
\item[if $\psi\not\in\Gamma'_{i}$] $\pos{\Gamma''_{i},\Delta''_{i}}:=\pos{\iota_{i}(\Gamma'_{i}),\iota_{i}(\Delta'_{i})}$, and $T''_{i}:=T'_{i}$.
\end{proofcases}
\end{proofcases}
\item[If $\psi$ is the applied part in both rules]
Since we have already considered the above case, we can think of it as $\psi\not\in\Delta\cup\Delta'\cup\Gamma\cup\Gamma'$.
\begin{proofcases}
\item[$\psi$ is atomic]
The $\Atom$ rule involves adding extra compound sentences to what hold true in $\vDash^{\INS}$.
Since $\vDash^{\INS}$ satisfies $(\Cut)$, we can apply $(\Cut)$ first and then add the extra sentence.
\item[$\psi=\neg\phi$]
The root parts of $T$ and $T'$ are in the form
$\neg_{R}~\proofrule
{\phi\,\Gamma\vdash_{\Sigma}\Delta}
{\Gamma\vdash_{\Sigma}\Delta\,\neg\phi}$ and
$\neg_{L}~\proofrule
{\Gamma'\vdash_{\Sigma}\Delta'\,\phi}
{\neg\phi\,\Gamma'\vdash_{\Sigma}\Delta'}$, respectively.
Let the proofs of the premises be $T_{0}$ and $T'_{0}$ respectively.
We can define it by $\Cut(\Gamma,\Gamma',\Delta,\Delta',\psi,T,T')
:=\Cut(\Gamma',\Gamma,\Delta',\Delta,\phi,T'_{0},T_{0})$.
\item[$\psi=\vee\phi$]
Since $\bot=\vee\pos{}$ cannot be introduced by ${\vee}_{R}$, we can assume that both are not empty sequences.
The root parts of $T$ and $T'$ are in the form
${\vee}_{R}^{n}~\proofrule
{\Gamma\vdash_{\Sigma}\Delta\,\phi_{n}}
{\Gamma\vdash_{\Sigma}\Delta\,\vee\phi}$ and
${\vee}_{L}~\proofrule
{\pos{\phi_{n}\,\Gamma'\vdash_{\Sigma}\Delta'}_{n\in N}}
{\vee\phi\,\Gamma'\vdash_{\Sigma}\Delta'}$ respectively.
Let the proofs of the premises be $T_{n}$, $T'_{i}\,(i\in N)$ respectively.
We can define it by $\Cut(\Gamma,\Gamma',\Delta,\Delta',\psi,T,T')
:=\Cut(\Gamma,\Gamma',\Delta,\Delta',\phi_{n},T_{n},T'_{n})$.
\item[$\psi=\Exists{X}\phi$]
The root parts of $T$ and $T'$ are in the form
$\exists_{R}^{\theta}~\proofrule
{\Gamma\vdash_{\Sigma}\Delta\,\theta(\phi)}
{\Gamma\vdash_{\Sigma}\Delta\,\Exists{X}\phi}$ and
$\exists_{L}~\proofrule
{\phi\,\Sigma^{\Dex}(X)(\Gamma')\vdash_{\Sigma^{\Dex}[X]}\Sigma^{\Dex}(X)(\Delta')}
{\Exists{X}\phi\,\Gamma'\vdash_{\Sigma}\Delta'}$ respectively.
Translating the premise $\exists_{L}$ in terms of $\theta$ gives
$\theta(\phi)\,\Gamma'\vdash_{\Sigma}\Delta'$.
Let the proofs of the premises be $T_{0}$, $T'_{0}$ respectively.
We can define it by
\[
\Cut(\Gamma,\Gamma',\Delta,\Delta',\psi,T,T')
:=\Cut(\Gamma,\Gamma',\Delta,\Delta',\theta(\phi),T_{0},\Modify(\theta,T'_{0},\theta(\phi)\,\Gamma',\Delta')).
\]
\end{proofcases}
\end{proofcases}
\end{proofcases}
\end{proof}

\begin{proof}[Proof of Lemma~\ref{lemma:syn-soundness}]
We just need to show that the proof rules also hold true for $\vDash^{\FOL^{\relax}_{\alpha}(\INS)}$.
\begin{proofcases}
\item[$\Atom^{{\relax}}$]
Assume $\Gamma_{b}\vDash^{\INS}_{\Sigma}\Delta_{b}$.
From the definition,
$\Gamma_{b}\vDash^{\FOL^{\relax}_{\alpha}(\INS)}_{\Sigma}\Delta_{b}$, and
$\Gamma_{b}\cup\Gamma\vDash^{\FOL^{\relax}_{\alpha}(\INS)}_{\Sigma}\Delta\cup\Delta_{b}$ hold true.
Therefore, the rule holds true.
\item[$\neg_L$]
Assume $\Gamma\vDash^{\FOL^{{\relax}}_{\alpha}(\INS)}_{\Sigma}\Delta\cup\{\phi\}$ and
$\A\models^{\FOL^{\relax}_{\alpha}(\INS)}\gamma$ for all $\gamma\in\{\neg\phi\}\cup\Gamma$.
We show that $\A\models^{\FOL^{\relax}_{\alpha}(\INS)}\gamma$ holds true for some $\gamma\in\Delta$.
By the assumption, $\A$ satisfies at least one of $\Delta\cup\{\phi\}$, but it is not $\phi$, so
at least one of $\Delta$ must be satisfied.
\item[$\neg_R$]
Assume $\{\phi\}\cup\Gamma\vDash^{\FOL^{{\relax}}_{\alpha}(\INS)}_{\Sigma}\Delta$ and
$\A\models^{\FOL^{\relax}_{\alpha}(\INS)}\gamma$ for all $\gamma\in\Gamma$.
We show that $\A\models^{\FOL^{\relax}_{\alpha}(\INS)}\gamma$ holds true for some $\gamma\in\Delta\cup\{\neg\phi\}$.
By the assumption, if $\A$ satisfies $\phi$, then it satisfies at least one of the conditions in $\Delta$.
Including the case where $\A$ does not satisfy $\phi$, $\A$ satisfies at least one of $\Delta\cup\{\neg\phi\}$.
\item[$\vee_L$]
Assume $\{\phi_{i}\}\cup\Gamma\vDash^{\FOL^{\relax}_{\alpha}(\INS)}_{\Sigma}\Delta\text{ for each }i\in\Dom(\phi)$ and
$\A\models^{\FOL^{\relax}_{\alpha}(\INS)}\gamma$ for all $\gamma\in\{\vee\phi\}\cup\Gamma$.
We show that $\A\models^{\FOL^{\relax}_{\alpha}(\INS)}\gamma$ holds true for some $\gamma\in\Delta$.
By the assumption, we get $\A\models^{\FOL^{\relax}_{\alpha}(\INS)}\gamma$ for all $\gamma\in\Gamma$, and
$\A\models^{\FOL^{\relax}_{\alpha}(\INS)}\phi_{i}$ for some $i\in\Dom(\phi)$.
Therefore, from the first assumption, $\A\models^{\FOL^{\relax}_{\alpha}(\INS)}\gamma$ holds true for some $\gamma\in\Delta$.
\item[$\vee_R$]
Assume $\Gamma\vDash^{\FOL^{\relax}_{\alpha}(\INS)}_{\Sigma}\Delta\cup\{\phi_{i}\}\text{ for some }i\in\Dom(\phi)$ and
$\A\models^{\FOL^{\relax}_{\alpha}(\INS)}\gamma$ for all $\gamma\in\Gamma$.
We show that $\A\models^{\FOL^{\relax}_{\alpha}(\INS)}\gamma$ holds true for some $\gamma\in\Delta\cup\{\vee\phi\}$.
By the assumption$\A\models^{\FOL^{\relax}_{\alpha}(\INS)}\gamma$ holds true for some $\gamma\in\Delta\cup\{\phi_{i}\mid i\in\Dom(\phi)\}$.
In the case of $\gamma\in\{\phi_{i}\mid i\in\Dom(\phi)\}$, $\A\models^{\FOL^{\relax}_{\alpha}(\INS)}\vee\phi$ holds true.
Therefore, $\A\models^{\FOL^{\relax}_{\alpha}(\INS)}\gamma$ holds true for some $\gamma\in\Delta\cup\{\vee\phi\}$.
\item[${\exists}_{L}$]
We show that if the conclusion is not true, then the premise must also be false.
Let $\{\Exists{X}\phi\}\cup\Gamma\not\vDash^{\FOL^{\relax}_{\alpha}(\INS)}_{\Sigma}\Delta$.
There is a model $\A$ such that
$\A\models^{\FOL^{\relax}_{\alpha}(\INS)}\gamma$ for all $\gamma\in\{\Exists{X}\phi\}\cup\Gamma$, and
$\A\not\models^{\FOL^{\relax}_{\alpha}(\INS)}\gamma$ for all $\gamma\in\Delta$.
Since $\A\models^{\FOL^{\relax}_{\alpha}(\INS)}\Exists{X}\phi$,
there is an expansion $\hat{\A}$ that satisfies $\hat{\A}\models^{\FOL^{\relax}_{\alpha}(\INS)}\phi$.
By satisfaction condition,
$\hat{\A}\models^{\FOL^{\relax}_{\alpha}(\INS)}\gamma$ holds true for each $\gamma\in\Sen^{\FOL^{{\relax}}_{\alpha}(\INS)}(\Sigma^{\Dex^{\INS}}(X))(\Gamma)$ and
$\hat{\A}\not\models^{\FOL^{\relax}_{\alpha}(\INS)}\gamma$ holds true for each $\gamma\in\Sen^{\FOL^{{\relax}}_{\alpha}(\INS)}(\Sigma^{\Dex^{\INS}}(X))(\Delta)$.
Therefore, $\{\phi\}\cup\Sen^{\FOL^{{\relax}}_{\alpha}(\INS)}(\Sigma^{\Dex^{\INS}}(X))(\Gamma)\not\vDash^{\FOL^{\relax}_{\alpha}(\INS)}_{\Sigma^{\Dex^{\INS}}[X]}\Sen^{\FOL^{{\relax}}_{\alpha}(\INS)}(\Sigma^{\Dex^{\INS}}(X))(\Delta)$ holds true.
\item[${\exists}_{R}$]
Assume $\Gamma\vDash^{\FOL^{\relax}_{\alpha}(\INS)}_{\Sigma}\Delta\cup\{\Sen^{\FOL^{{\relax}}_{\alpha}(\INS)}(\theta)(\phi)\}$ and
$\A\models^{\FOL^{\relax}_{\alpha}(\INS)}\gamma$ for all $\gamma\in\Gamma$.
We show that $\A\models^{\FOL^{\relax}_{\alpha}(\INS)}\gamma$ holds true for some $\gamma\in\Delta\cup\{\Exists{X}\phi\}$.
By the assumption, $\A\models^{\FOL^{\relax}_{\alpha}(\INS)}\gamma$ holds true for some $\gamma\in\Delta\cup\{\Sen^{\FOL^{{\relax}}_{\alpha}(\INS)}(\theta)(\phi)\}$.
In the case of $\gamma=\Sen^{\FOL^{{\relax}}_{\alpha}(\INS)}(\theta)(\phi)$, from satisfaction condition, $\Mod^{\INS}(\theta)(\A)\models^{\FOL^{\relax}_{\alpha}(\INS)}\phi$.
Since $\Mod^{\INS}(\theta)(\A)$ is a $\Sigma^{\Dex^{\INS}}(X)$-expansion of $\A$, so $\A\models^{\FOL^{\relax}_{\alpha}(\INS)}\Exists{X}\phi$ holds true.
Therefore, $\A\models^{\FOL^{\relax}_{\alpha}(\INS)}\gamma$ holds true for some $\gamma\in\Delta\cup\{\Exists{X}\phi\}$.
\end{proofcases}
\end{proof}

\begin{proof}[Proof of Lemma~\ref{lemma:basic-compactness}]\
\begin{proofcases}
\item[\ref{enu:b-compa1}]
Since $\A\models^{\INS}\phi$, there is $f:T_{\Sigma,\Gamma}\to\A$.
Since $h:\A\to\B$, $h\circ f:T_{\Sigma,\Gamma}\to\B$.
Therefore, $\B\models^{\INS}\phi$.
\item[\ref{enu:b-compa2}]
(Forward implication)
Let $T_{\Sigma,\Gamma}\models^{\INS}\phi$.
Suppose $\A\models^{\INS}\gamma$ holds true for each $\gamma\in\Gamma$.
Then there is $f:T_{\Sigma,\Gamma}\to\A$.
By $T_{\Sigma,\Gamma}\models^{\INS}\phi$ and (\ref{enu:b-compa1}), $\A\models^{\INS}\phi$ holds true.
{\par}
(Backward implication)
Conversely, let $\Gamma\vDash^{\INS}_{\Sigma}\phi$.
Since $T_{\Sigma,\Gamma}\models^{\INS}\gamma$ for each $\gamma\in\Gamma$, $T_{\Sigma,\Gamma}\models^{\INS}\phi$.
\end{proofcases}
\end{proof}

\begin{proof}[Proof of Theorem~\ref{theorem:syn-compactness}]
We will use induction on the complexity of proof diagrams.
Let $\pos{\pos{\Gamma_{i},\Sigma_{i},\Delta_{i}}\in\Modify(\INS)}_{i\in\pos{J,\leq}}$
with a co-limit $\{\chi_{i}:\pos{\Gamma_{i},\Sigma_{i},\Delta_{i}}\to\pos{\Gamma,\Sigma,\Delta}\}_{i\in J}$,
and let $T$ is a proof of $\Gamma\vdash_{\Sigma}\Delta$.
We show the existence of $i\in J$ that satisfies $\Gamma_{i}\vdash_{\Sigma_{i}}\Delta_{i}$.
\begin{proofcases}
\item[$\Atom$] By compactness of $\vDash^{\INS}$.
\item[else]
Since the others are roughly the same, consider the case of $({\neg}_{L}$.
The last part will take the following form.
\[\neg_{L}~\proofrule{\,\vdash_{\Gamma\mid\Sigma\mid\Delta}\phi\,}{\,\neg\phi\vdash_{\Gamma\mid\Sigma\mid\Delta}\,}\]
As we saw in the proof of $\Modify$, we can consider the premises of a proof to include the sentences that appear in the conclusion.
Therefore, $\Gamma$ can remain as is.
From the conditions, $\Sen^{\FOL_{\omega}(\INS)}$ is co-continuous.
So there are $i\in J$ and $\phi_{i}\in J$ such that $\Sen^{\FOL_{\omega}(\INS)}(\chi_{i})(\phi_{i})=\phi$.
Let $\pos{\Gamma'_{n},\Sigma'_{n},\Delta'_{n}}_{n\in\pos{J,{\leq}}}$ be obtained by adding $\Sigma_{i\leq j}(\phi_{i})$ to each $\Delta_{j}$.
By $(IH)$, there is $k\geq i$ such that $\Gamma'_{k}\vdash_{\Sigma'_{k}}\Delta'_{k}$.
Applying $({\neg}_{L})$ to this gives $\{\neg\Sigma_{i\leq k}(\phi_{i})\}\cup\Gamma_{k}\vdash_{\Sigma_{k}}\Delta_{k}$.
Since $\neg\phi\in\Gamma$, there is $l\geq k$ such that $\neg\Sigma_{i\leq l}(\phi_{i})\in\Gamma_{l}$.
Therefore, $\Gamma_{l}\vdash_{\Sigma_{l}}\Delta_{l}$ holds true.
\end{proofcases}
\end{proof}

\begin{proof}[Proof of Theorem~\ref{theorem:syn-completeness}]
Since we already checked soundness, now we confirm ${\vdash^{\omega\,\INS}_{\Sigma}}\supseteq{\vDash^{\FOL_{\omega}(\INS)}_{\Sigma}}$.
Let $\Sigma\in|\Sig^{\INS}|$ and let $\Gamma,\Delta\subseteq\Sen^{\FOL_{\omega}(\INS)}(\Sigma)$ such that $\Gamma\not\vdash^{\omega\,\INS}_{\Sigma}\Delta$.
We will show $\Gamma\not\vDash^{\FOL_{\omega}(\INS)}_{\Sigma}\Delta$.
Let $\alpha:=\power_{\omega}(\Sigma)$.
Let $\Dex^{\alpha}_{\Sigma}$ be the smallest set that is closed under the operations of extending by a variable and taking the limit of sequences less than $\alpha$, starting from $\Sigma$.
Since an expansion by a variable does not change the power, $\power_{\omega}(\Sigma')=\alpha$ for all $\Sigma'\in\Dex^{\alpha}_{\Sigma}$.
Prepare a choice function of bijections $\pos{\phi_{\Sigma'}:\alpha\to\Sen^{\FOL_{\omega}(\INS)}(\Sigma')}_{\Sigma'\in\Dex^{\alpha}_{\Sigma}}$.
Prepare a bijective $pair:\alpha\times\alpha\to\alpha$ that satisfies $pair(m,n)\geq m$.
We define $\pos{\BS\Gamma_{n},\BS\Sigma_{n},\BS\Delta_{n}}_{n\in\pos{\alpha,{\leq}}}$ that satisfies $\BS\Gamma_{n}\not\vdash^{\alpha\,\INS}_{\BS\Sigma_{n}}\BS\Delta_{n}\,(\text{for each }n\in\alpha)$ as follows:
\begin{proofcases}
\item[$n=0$] $\pos{\BS\Gamma_{n},\BS\Sigma_{n},\BS\Delta_{n}}:=\pos{\Gamma,\Sigma,\Delta}$.
\item[$n=n'+1$] Let $pair(l,m)=n'$.
\begin{proofcases}
\item[$\phi_{\BS\Sigma_{l}}(m)\,\BS\Gamma_{n'}\not\vdash^{\alpha\,\INS}_{\BS\Sigma_{n'}}\BS\Delta_{n'}$]\
\begin{proofcases}
\item[$\phi_{\Sigma_{l}}(m)=\Exists{X}\psi$]\
\begin{itemize}
\item $\pos{\BS\Gamma_{n},\BS\Sigma_{n},\BS\Delta_{n}}:=\pos{\{\psi\}\cup\BS\Sigma_{n'}^{\Dex^{\INS}}(X)(\{\phi_{\BS\Sigma_{l}}(m)\}\cup\BS\Gamma_{n'}),\BS\Sigma_{n'}^{\Dex^{\INS}}[X],\BS\Sigma^{\Dex^{\INS}}(X)(\BS\Delta_{n'})}$.
$\BS\Gamma_{n}\not\vdash^{\alpha\,\INS}_{\BS\Sigma_{n}}\BS\Delta_{n}$ holds true from $(\exists_{L})$.
\item $\pos{\BS\Gamma_{n'\leq n},\BS\Sigma_{n'\leq n},\BS\Delta_{n'\leq n}}:=\pos{{\subseteq},\BS\Sigma_{n'}^{\Dex^{\INS}}(X),{\subseteq}}$.
\end{itemize}
\item[else]\
\begin{itemize}
\item $\pos{\BS\Gamma_{n},\BS\Sigma_{n},\BS\Delta_{n}}:=\pos{\{\phi_{\BS\Sigma_{l}}(m)\}\cup\BS\Gamma_{n'},\BS\Sigma_{n'},\BS\Delta_{n'}}$.
\item $\pos{\BS\Gamma_{n'\leq n},\BS\Sigma_{n'\leq n},\BS\Delta_{n'\leq n}}:=\pos{{\subseteq},\id_{\Sigma},{\subseteq}}$.
\end{itemize}
\end{proofcases}
\item[$\phi_{\BS\Sigma_{l}}(m)\,\BS\Gamma_{n'}\vdash^{\alpha\,\INS}_{\BS\Sigma_{n'}}\BS\Delta_{n'}$]
From $(\Cut)$ and
$\BS\Gamma_{n'}\not\vdash^{\alpha\,\INS}_{\BS\Sigma_{n'}}\BS\Delta_{n'}$,
$\BS\Gamma_{n'}\not\vdash^{\alpha\,\INS}_{\BS\Sigma_{n'}}\BS\Delta_{n'}\,\phi_{\BS\Sigma_{l}}(m)$.
\begin{itemize}
\item $\pos{\BS\Gamma_{n},\BS\Sigma_{n},\BS\Delta_{n}}:=\pos{\BS\Gamma_{n'},\BS\Sigma_{n'},\BS\Delta_{n'}\cup\{\phi_{\Sigma_{l}}(m)\}}$
\item $\pos{\BS\Gamma_{n'\leq n},\BS\Sigma_{n'\leq n},\BS\Delta_{n'\leq n}}:=\pos{{\subseteq},\id_{\Sigma},{\subseteq}}$.
\end{itemize}
\end{proofcases}
\item[$n$ is limit]\
\begin{itemize}
\item $\pos{\BS\Gamma_{n},\BS\Sigma_{n},\BS\Delta_{n}}:=colimit(\pos{\BS\Gamma_{i},\BS\Sigma_{i},\BS\Delta_{i}}_{i\in\pos{n,{\leq}}})$.
By the compactness of $\vdash^{\alpha\,\INS}$, $\BS\Gamma_{n}\not\vdash^{\alpha\,\INS}_{\BS\Sigma_{n}}\BS\Delta_{n}$.
\item $\pos{\BS\Gamma_{n'\leq n},\BS\Sigma_{n'\leq n},\BS\Delta_{n'\leq n}}$ is decided from the co-limit.
\end{itemize}
\end{proofcases}
Let $\{\pos{\BS\Gamma_{n\leq\alpha},\BS\Sigma_{n\leq\alpha},\BS\Delta_{n\leq\alpha}}:\pos{\BS\Gamma_{n},\BS\Sigma_{n},\BS\Delta_{n}}\to\pos{\BS\Gamma_{\alpha},\BS\Sigma_{\alpha},\BS\Delta_{\alpha}}\}_{n\in\alpha}$ be the co-limit of $\pos{\BS\Gamma_{n},\BS\Sigma_{n},\BS\Delta_{n}}_{n\in\pos{\alpha,{\leq}}}$.
By co-continuity of $\Sen^{\INS}$ and $(\cdot)_{\Dex^{\INS}}$, $\Sen^{\FOL_{\omega}(\INS)}$ is also a co-continuous.
So from the construction, $\pos{\BS\Gamma_{\alpha},\BS\Sigma_{\alpha},\BS\Delta_{\alpha}}$ is maximally consistent.
\\
Let $\Gamma_{b}:=\BS\Gamma_{\alpha}\cap\Sen^{\INS}(\BS\Sigma_{\alpha})$ and let $\A':=T_{\BS\Sigma_{\alpha},\Gamma_{b}}$ be a reachable epi-basic model.
We show that $\A'\models^{\FOL_{\omega}(\INS)}\gamma$ iff $\gamma\in\BS\Gamma_{\alpha}$ for each $\gamma\in\Sen^{\FOL_{\omega}(\INS)}(\BS\Sigma_{\alpha})$ by induction on sentence complexity.
\begin{proofcases}
\item[$\gamma$ is atomic] By the definition of $\A'$, and from the Lemma~\ref{lemma:basic-compactness}.
\item[$\gamma=\neg\gamma'$] By ($IH$), $\A'\models^{\FOL_{\omega}(\INS)}\gamma'$ iff $\gamma'\in\BS\Gamma_{\alpha}$.
From the maximally consistency of $\pos{\BS\Gamma_{\alpha},\BS\Sigma_{\alpha},\BS\Delta_{\alpha}}$ and the proof rules of $\vdash^{\alpha\,\INS}$, $\neg\gamma'\in\BS\Gamma_{\alpha}$ iff $\gamma'\not\in\BS\Gamma_{\alpha}$.
So from the definition of $\models^{\FOL_{\omega}(\INS)}$,
$\A'\models^{\FOL_{\omega}(\INS)}\neg\gamma'$ iff $\neg\gamma'\in\BS\Gamma_{\alpha}$.
\item[$\gamma=\vee\gamma'$] By ($IH$), $\A'\models^{\FOL_{\omega}(\INS)}\gamma'_{i}$ iff $\gamma'_{i}\in\BS\Gamma_{\alpha}$ for each $i\in\Dom(\gamma')$.
From the maximally consistency of $\pos{\BS\Gamma_{\alpha},\BS\Sigma_{\alpha},\BS\Delta_{\alpha}}$ and the proof rules of $\vdash^{\alpha\,\INS}$, $\vee\gamma'\in\BS\Gamma_{\alpha}$ iff $\gamma'_{i}\not\in\BS\Gamma_{\alpha}$ for some $i\in\Dom(\gamma')$.
So from the definition of $\models^{\FOL_{\omega}(\INS)}$,
$\A'\models^{\FOL_{\omega}(\INS)}\vee\gamma'$ iff $\vee\gamma'\in\BS\Gamma_{\alpha}$.
\item[$\gamma=\Exists{X}\gamma'$]\
\begin{proofcases}
\item[${\Rightarrow}$] Let $\A'\models^{\FOL_{\omega}(\INS)}\Exists{X}\gamma'$.
By the semantics, $\widehat{\A'}\models^{\FOL_{\omega}(\INS)}\gamma'$ for some $\BS\Sigma_{\alpha}^{\Dex^{\INS}}(X)$-expansion $\widehat{\A'}$ of $\A'$.
Since $\A'$ is reachable, there is a $\BS\Sigma_{\alpha}^{\Dex^{\INS}}(X)$-substitution $\theta$ that satisfies $\widehat{\A'}=\Mod^{\INS}(\theta)(\A')$.
By satisfaction condition, $\A'\models^{\FOL_{\omega}(\INS)}\Sen^{\FOL_{\omega}(\INS)}(\theta)(\gamma')$.
By ($IH$), $\Sen^{\FOL_{\omega}(\INS)}(\theta)(\gamma')\in\BS\Gamma_{\alpha}$.
From the maximally consistency of $\pos{\BS\Gamma_{\alpha},\BS\Sigma_{\alpha},\BS\Delta_{\alpha}}$ and the proof rules of $\vdash^{\alpha\,\INS}$, $\Exists{X}\gamma'\in\BS\Sigma_{\alpha}$.
\item[${\Leftarrow}$] Let $\Exists{X}\gamma'\in\BS\Gamma_{\alpha}$.
Let
$i\in\alpha$,
$X_{i}\in|(\BS\Sigma_{i})_{\Dex^{\INS}}|$,
$\gamma'_{i}\in\Sen^{\FOL_{\omega}(\INS)}(\BS\Sigma_{i}^{\Dex^{\INS}}[X_{i}])$
that satisfy
$(\BS\Sigma_{i\leq\alpha})_{\Dex^{\INS}}(X_{i})=X$,
$\Sen^{\FOL_{\omega}(\INS)}(\BS\Sigma_{i\leq\alpha}^{\Dex^{\INS}}[X_{i}])(\gamma'_{i})=\gamma'$.
Let $j\in\alpha$ such that $\phi_{\BS\Sigma_{i}}(j)=\Exists{X_{i}}\gamma'_{i}$.
Let $n=n'+1\,(n'=pair(i,j))$.
From the definition,
$\pos{\BS\Gamma_{n},\BS\Sigma_{n},\BS\Delta_{n}}:=\pos{\{\gamma'_{i}\}\cup\{\BS\Sigma_{n'}^{\Dex^{\INS}}(X_{i})(\phi_{\BS\Sigma_{i}}(j))\}\cup\BS\Sigma_{n'}^{\Dex^{\INS}}(X_{i})(\BS\Gamma_{n'}),\BS\Sigma_{n'}^{\Dex^{\INS}}[X_{i}],\BS\Sigma^{\Dex^{\INS}}(X_{i})(\BS\Delta_{n'})}$.
$\BS\Sigma_{n'}^{\Dex^{\INS}}(X_{i})(\phi_{\BS\Sigma_{i}}(j))=\Exists{X'_{i}}(\chi'(\gamma'_{i}))
\,(X'_{i}:=(\BS\Sigma_{n'}^{\Dex^{\INS}}(X_{i}))_{\Dex^{\INS}}(X_{i}),\,\chi':=(\BS\Sigma_{n'}^{\Dex^{\INS}}(X_{i}))^{\Dex^{\INS}}[X_{i}])$.
Let $\theta_{i}$ be a substitution which was covered in Example~\ref{exa:substitution}.
That satisfies $\theta_{i}(\chi'(\gamma'_{i}))=\gamma'_{i}$.
Let $\theta$ be the translation of $\theta_{i}$ to $\BS\Sigma_{\alpha}^{\Dex^{\INS}}(X)$.
From the definition, $\theta(\gamma')\in\Gamma_{\alpha}$.
By ($IH$), $\A'\models^{\FOL_{\omega}(\INS)}\theta(\gamma')$.
By semantics, $\A'\models^{\FOL_{\omega}(\INS)}\Exists{X}\gamma'$.
\end{proofcases}
\end{proofcases}
Let $\A:=\Mod^{\INS}(\BS\Sigma_{0\leq\alpha})(\A')\in|\Mod^{\INS}(\Sigma)|$.
By satisfaction condition, $\A\models^{\FOL_{\omega}(\INS)}\gamma$ for each $\gamma\in\Gamma$, and $\A\not\models^{\FOL_{\omega}(\INS)}\delta$ for each $\delta\in\Delta$.
Therefore, $\Gamma\not\vDash^{\FOL_{\omega}(\INS)}_{\Sigma}\Delta$.
\end{proof}

\end{document}